\newcommand{\maps}{\colon}    
\newcommand{\R}{{\mathbb R}}  
\newcommand{\C}{{\mathbb C}}  
\renewcommand{\H}{{\mathbb H}}  
\renewcommand{\O}{{\mathbb O}}  
\newcommand{\K}{{\mathbb K}}  
\newcommand{\Z}{{\mathbb Z}}  
\newcommand{\Mor}{\mathrm{Mor}} 
\newcommand{\h}{\mathfrak{h}} 
\newcommand{\tr}{{\mathrm{tr}}} 
\newcommand{\U}{{\rm U}}    
\newcommand{\SO}{{\rm SO}}    
\newcommand{\SU}{{\rm SU}}    
\newcommand{\Sp}{{\rm Sp}}    
\newcommand{\Spin}{{\rm Spin}}    
\newcommand{\ISO}{\mathrm{ISO}} 
\newcommand{\SISO}{\mathrm{SISO}} 
\newcommand{\String}{{\rm String}}    
\newcommand{\Superstring}{{\rm Superstring}}    
\newcommand{\Brane}{{\rm Brane}}    
\newcommand{\n}{{\mathfrak{n}}} 
\newcommand{\so}{{\mathfrak{so}}}  
\newcommand{\gl}{{\mathfrak{gl}}}  
\newcommand{\g}{\mathfrak{g}}  
\newcommand{\T}{\mathcal{T}} 
\newcommand{\siso}{\mathfrak{siso}} 
\newcommand{\brane}{\mathfrak{brane}} 
\newcommand{\superstring}{\mathfrak{superstring}} 
\newcommand{\strng}{\mathfrak{string}} 
\newcommand{\Hom}{{\rm Hom}} 
\newcommand{\Sym}{{\rm Sym}} 
\newcommand{\Set}{\mathrm{Set}} 
\newcommand{\Fun}{\mathrm{Fun}} 
\newcommand{\SuperVect}{\mathrm{SuperVect}} 
\newcommand{\SuperAlg}{\mathrm{SuperAlg}} 
\newcommand{\GrAlg}{\mathrm{GrAlg}} 
\newcommand{\Man}{\mathrm{Man}} 
\newcommand{\SuperMan}{\mathrm{SuperMan}} 
\newcommand{\SuperPoints}{\mathrm{SuperPoints}} 
\newcommand{\op}{{\rm op}} 
\newcommand{\Ad}{{\rm Ad}} 
\newcommand{\inv}{\mathrm{inv}} 
\newcommand{\id}{\mathrm{id}} 
\newcommand{\iso}{\cong} 
\newcommand{\To}{\Rightarrow}
\newcommand{\tensor}{\otimes} 
\newcommand{\half}{\frac{1}{2}} 
\newcommand{\psibar}{\overline{\psi}} 
\newcommand{\chibar}{\overline{\chibar}} 
\newcommand{\define}[1]{{\bf \boldmath{#1}}}
\newcommand{\arxiv}[1]{\href{http://arxiv.org/abs/#1}{\texttt{arXiv:{#1}}}}
\newtheorem{thm}{Theorem}    
\newtheorem{cor}[thm]{Corollary}
\newtheorem{lem}[thm]{Lemma}
\newtheorem{prop}[thm]{Proposition}
\newtheorem*{YL}{Yoneda Lemma}
\theoremstyle{definition}
\newtheorem{defn}[thm]{Definition}
        \newcommand{\be}{\begin{equation}}
        \newcommand{\ee}{\end{equation}}
        \newcommand{\ba}{\begin{eqnarray}}
        \newcommand{\ea}{\end{eqnarray}}
        \newcommand{\ban}{\begin{eqnarray*}}
        \newcommand{\ean}{\end{eqnarray*}}
        \newcommand{\barr}{\begin{array}}
        \newcommand{\earr}{\end{array}}
\title{Division Algebras and Supersymmetry III}
\author{John Huerta \\
\\
Department of Theoretical Physics \\
Research School of Physics and Engineering \\
and \\
Department of Mathematics \\
Mathematical Sciences Institute \\
The Australian National University \\
Canberra ACT 0200 \\
\\
john.huerta@anu.edu.au
}
\begin{document}
\maketitle

\begin{abstract}
Recent work applying higher gauge theory to the superstring has indicated the
presence of `higher symmetry'. Infinitesimally, this is realized by a `Lie
2-superalgebra' extending the Poincar\'e superalgebra in precisely the
dimensions where the classical supersymmetric string makes sense: 3, 4, 6 and
10. In the previous paper in this series, we constructed this Lie
2-superalgebra using the normed division algebras. In this paper, we use an
elegant geometric technique to integrate this Lie 2-superalgebra to a `Lie
2-supergroup' extending the Poincar\'e supergroup in the same dimensions. 

Briefly, a `Lie 2-superalgebra' is a two-term chain complex with a bracket like
a Lie super\-algebra, but satisfying the Jacobi identity only up to chain
homotopy. Simple examples of Lie 2-superalgebras arise from 3-cocycles on Lie
superalgebras, and it is in this way that we constructed the Lie 2-superalgebra
above. Because this 3-cocycle is supported on a nilpotent subalgebra, our
geometric technique applies, and we obtain a Lie 2-supergroup integrating the
Lie 2-superalgebra in the guise of a smooth 3-cocycle on the Poincar\'e
supergroup.
\end{abstract}

\section{Introduction}

There is a deep connection between supersymmetry and the normed division
algebras: the real numbers, $\R$, the complex numbers, $\C$, the quaternions,
$\H$, and the octonions, $\O$. This is visible in super-Yang--Mills theory and
in the classical supersymmetric string and 2-brane.  Most simply, it is seen in
the dimensions for which these theories make sense.  The normed division
algebras have dimension $n = 1$, 2, 4 and 8, while the classical supersymmetric
string and super-Yang--Mills make sense in spacetimes of dimension two higher
than these: $n+2 = 3$, 4, 6 and 10.  Similarly, the classical supersymmetric
2-brane makes sense in dimensions three higher: $n+3 = 4$, 5, 7 and 11.
Intriguingly, when we take quantum mechanics into account, it is only in the
octonionic dimensions that the supersymmetric string and 2-brane appear to have
a consistent quantization---10 for the superstring, and 11 for M-theory, which
incorporates the 2-brane. In this paper, however, our concern is entirely with
the classical supersymmetric string: we study algebraic and geometric
ingredients used in the classical Green--Schwarz action for the superstring. A
parallel story for the supersymmetric 2-brane will be told in a forthcoming
paper.

This is the third in a series of papers exploring the relationship between
supersymmetry and division algebras \cite{BaezHuerta:susy1,BaezHuerta:susy2},
the first two of which were coauthored with John Baez. In the first paper
\cite{BaezHuerta:susy1}, we reviewed the known story of how the division
algebras give rise to the supersymmetry of super-Yang--Mills theory. In the
second \cite{BaezHuerta:susy2}, we showed how the division algebras can be used
to construct `Lie 2-superalgebras' $\superstring(n+1,1)$, which extend the
Poincar\'e superalgebra in the superstring dimensions $n+2 = 3$, 4, 6 and 10.
In this paper, we will describe a geometric method to integrate these Lie
2-superalgebras to `2-supergroups'. 

Roughly, a `2-group' is a mathematical gadget like a group, but where the group
axioms, such as the associative law:
\[ (gh)k = g(hk) \]
\emph{no longer hold}. Instead, they are replaced by isomorphisms:
\[ (gh)k \iso g(hk) \]
which must satisfy axioms of their own.  A `Lie 2-group' is a smooth version of
a 2-group, where every set is actually a smooth manifold and every operation is
smooth, and a `2-supergroup' is analogous, but replaces manifolds with
`supermanifolds'. All these concepts (Lie 2-group, supermanifold, and
2-supergroup) will be defined precisely later on. We assume no familiarity with
either 2-groups or supergeometry.
 
Our motives for constructing these 2-supergroups come from both mathematics and
physics. Physically, as we show in our previous paper \cite{BaezHuerta:susy2},
the existence of the superstring in spacetimes of dimension $n+2 = 3$, 4, 6 and
10 secretly gives rise to a way to extend the Poincar\'e superalgebra,
$\siso(n+1,1)$, to a Lie 2-superalgebra we like to call $\superstring(n+1,1)$.
Here, the Poincar\'e superalgebra is a Lie superalgebra whose even part
consists of the infinitesimal rotations $\so(n+1,1)$ and translations $V$ on Minkowski
spacetime, and whose odd part consists of `supertranslations' $S$, or spinors:
\[ \siso(n+1,1) = \so(n+1,1) \ltimes (V \oplus S) . \]
Naturally, we can identify the translations $V$ with the vectors in Minkowski
spacetime, so $V$ carries a Minkowski inner product $g$ with signature
$(n+1,1)$. We extend $\siso(n+1,1)$ to a Lie 2-superalgebra
$\superstring(n+1,1)$ defined on the 2-term chain complex:
\[ \siso(n+1,1) \stackrel{d}{\longleftarrow} \R , \]
equipped with some extra structure.

What is the superstring Lie 2-algebra good for?  The answer lies in a feature
of string theory called the `Kalb--Ramond field', or `$B$ field'.  The $B$
field couples to strings just as the $A$ field in electromagnetism couples to
charged particles.  The $A$ field is described locally by a 1-form, so we can
integrate it over a particle's worldline to get the interaction term in the
Lagrangian for a charged particle.  Similarly, the $B$ field is described
locally by a 2-form, which we can integrate over the worldsheet of a string.

Gauge theory has taught us that the $A$ field has a beautiful geometric
meaning: it is a connection on a $\U(1)$ bundle over spacetime.  What is the
corresponding meaning of the $B$ field?  It can be seen as a connection on a
`$\U(1)$ gerbe': a gadget like a $\U(1)$ bundle, but suitable for describing
strings instead of point particles.  Locally, connections on $\U(1)$ gerbes can
be identified with 2-forms.  But globally, they cannot.  The idea that the $B$
field is a $\U(1)$ gerbe connection is implicit in work going back at least to
the 1986 paper by Gawedzki \cite{Gawedzki}.  More recently, Freed and Witten
\cite{FreedWitten} showed that the subtle difference between 2-forms and
connections on $\U(1)$ gerbes is actually crucial for understanding anomaly
cancellation.  In fact, these authors used the language of `Deligne cohomology'
rather than gerbes.  Later work made the role of gerbes explicit: see for
example Carey, Johnson and Murray \cite{CareyJohnsonMurray}, and also Gawedzki
and Reis \cite{GawedzkiReis}.

More recently still, work on higher gauge theory has revealed that the $B$
field can be viewed as part of a larger package.  Just as gauge theory uses Lie
groups, Lie algebras, and connections on bundles to describe the parallel
transport of point particles, higher gauge theory generalizes all these
concepts to describe parallel transport of extended objects such as strings and
membranes \cite{BaezHuerta:invitation, BaezSchreiber}.  In particular,
Schreiber, Sati and Stasheff \cite{SSS} have developed a theory of
`$n$-connections' suitable for describing parallel transport of objects with
$n$-dimensonal worldvolumes.  In their theory, the Lie algebra of the gauge
group is replaced by an Lie $n$-algebra---or in the supersymmetric context, a
Lie $n$-superalgebra.  Applying their ideas to $\superstring(n+1,1)$, we get a
2-connection which can be described locally using the following fields:
\vskip 1em
\begin{center}
\begin{tabular}{cc}
	\hline
	$\superstring(n+1,1)$ & Connection component \\
	\hline
	$\R$                  & $\R$-valued 2-form \\
	$\downarrow$          & \\
	$\siso(n+1,1)$        & $\siso(n+1,1)$-valued 1-form \\
	\hline
\end{tabular}
\end{center}
\vskip 1em
The $\siso(n+1,1)$-valued 1-form consists of three fields which help define the
background geometry on which a superstring propagates: the Levi-Civita
connection $A$, the vielbein $e$, and the gravitino $\psi$.  But the
$\R$-valued 2-form is equally important in the description of this background
geometry: it is the $B$ field!

How does $\superstring(n+1,1)$ come to be, and why in only these curious
dimensions?  The answer concerns the cohomology of the Poincar\'e superalgebra,
$\siso(n+1,1)$. The cohomology theory of Lie algebras, known as
Chevalley--Eilenberg cohomology \cite{AzcarragaIzquierdo,ChevalleyEilenberg},
can be generalized in a straightforward way to Lie superalgebras \cite{Leites}.
As we describe in our previous paper \cite{BaezHuerta:susy2}, there is a
3-cocycle on $\siso(n+1,1) = \so(n+1,1) \ltimes (V \oplus S)$ that vanishes
unless is eats a vector $v \in V$ and two spinors $\psi, \phi \in S$:
\[ \alpha(v, \psi, \phi) = g(v, \psi \cdot \phi) . \]
Here, $g$ denotes the Minkowski inner product on vectors, and $\psi \cdot \phi$
denotes an operation that takes two spinors and gives a vector. It is the
spinor-spinor part of the bracket on $\siso(n+1,1)$. 

One can define $\alpha$ in any dimension. In fact, fixing bases for $V$ and
$S$, the components of $\alpha$ are well-known to any physicist: they are the
entries of the gamma matrices!  But $\alpha$ is a 3-cocycle in Lie superalgebra
cohomology \emph{precisely} when the dimension of spacetime is 3, 4, 6 or 10.
Moreover, this fact is exactly what is required to define the classical
supersymmetric string in those dimensions---technically, it is needed for the
Green--Schwarz Lagrangian to have `Siegel symmetry', which forces the number of
bosonic and fermionic degrees of freedom to match \cite{GreenSchwarzWitten}.
This algebraic story has a beautiful interpretation in terms of division
algebras \cite{BaezHuerta:susy1,BaezHuerta:susy2,Evans,FootJoshi,KugoTownsend}.

Just as 2-cocycles on a Lie superalgebra describe central extensions to larger
Lie superalgebras, $(n+1)$-cocycles give extensions to \emph{Lie
$n$-superalgebras}.  To understand this, we need to know a bit about
$L_\infty$-algebras \cite{MSS,SS}.  An $L_\infty$-algebra is a chain complex
equipped with a structure like that of a Lie algebra, but where the laws hold
only `up to $d$ of something'.  A Lie $n$-algebra is an $L_\infty$-algebra in
which only the first $n$ terms are nonzero.  All these ideas also have `super'
versions.  

In general, an $\h$-valued $(n+1)$-cocycle $\omega$ on $\g$ is a linear map:
\[ \Lambda^{n+1} \g \to \h \] 
satisfying a certain equation called a `cocycle condition'. We can use an
$\h$-valued $(n+1)$-cocycle $\omega$ on a Lie superalgebra $\g$ to extend $\g$
to a Lie $n$-superalgebra of the following form:
\[ \g \stackrel{d}{\longleftarrow} 0 \stackrel{d}{\longleftarrow} \cdots \stackrel{d}{\longleftarrow} 0 \stackrel{d}{\longleftarrow} \h . \]
Here, $\g$ sits in degree 0 while $\h$ sits in degree $n-1$. We call Lie
$n$-superalgebras of this form `slim Lie $n$-superalgebras', and denote them by
$\brane_\omega(\g,\h)$. 
In particular, we can use the 3-cocycle $\alpha$ to extend $\siso(n+1,1)$ to a
slim Lie 2-superalgebra of the following form:
\[ \xymatrix{ \siso(n+1, 1) & \R \ar[l]_<<<<<d } . \]
This is how we obtain the Lie 2-superalgebra $\superstring(n+1,1)$.

We have already mentioned how the 3-cocycle $\alpha$ is required for the
classical Green--Schwarz superstring to have Siegel symmetry in dimensions 3,
4, 6 and 10. Of course, it is only in dimension 10 that one expects the
superstring to have a consistent quantization. This is because in this
dimension, and no others, quantum anomalies cancel. The work of Urs Schreiber
and collaborators shows that $\alpha$ plays a role in this story too, at least
for the heterotic string \cite{SSS, Bonora}.

So far, we have focused on Lie 2-algebras and generalized connections
valued in them. This connection data is infinitesimal: it tells us how to
parallel transport strings a little bit. Ultimately, we would like
to understand this parallel transport globally, as we do with particles in
ordinary gauge theory.

To achieve this global description, we will need `Lie $n$-groups' rather than
Lie $n$-algebras. Naively, one expects a Lie 2-supergroup $\Superstring(n+1,1)$
for which the Lie 2-super\-algebra \break $\superstring(n+1,1)$ is the infinitesimal
approximation.  In fact, this is precisely what we will construct.

In order to `integrate' Lie $n$-algebras to obtain Lie $n$-groups, we will have
to overcome two obstacles: how does one define a Lie $n$-group? And, how does
one integrate a Lie $n$-algebra to a Lie $n$-group? To answer the former
question, at least for $n=2$, we use Baez and Lauda's definition of Lie
2-group: it is a categorified Lie group, a `weak 2-category' with one object
with a manifold of weakly associative and weakly invertible morphisms, a
manifold of strictly associative and strictly invertible 2-morphisms, and all
structure maps smooth.  While this definition is known to fall short in
important ways, it has the virtue of being fairly simple. Ultimately, one
should use an alternative definition, like that of Henriques \cite{Henriques}
or Schommer-Pries \cite{SchommerPries}, which weakens the notion of product on
a group: rather than an algebraic operation in which there is a unique product
of any two group elements, `the' product is defined only up to equivalence.

So, roughly speaking, a Lie $n$-group should be a `weak $n$-category' with one
object, a manifold of weakly invertible morphisms, a manifold of weakly
invertible 2-morphisms, and so on, up to a manifold of strictly invertible
$n$-morphisms. To make this precise, however, we need to get very precise about
what a `weak $n$-category' is, which becomes more complicated as $n$ gets
larger. We therefore limit ourselves to the tractable case of $n=2$, and 
further limit ourselves to what we call a `slim Lie 2-group'.

A `slim Lie 2-group' is what Baez and Lauda call a `special Lie 2-group': it is
a skeletal bicategory with one object, a Lie group $G$ of morphisms, a Lie group $G
\ltimes H$ of 2-morphisms, and the group axioms hold strictly \emph{except for
associativity}---there is a nontrivial 2-morphism called the `associator':
\[ a(g_1,g_2,g_3) \maps (g_1g_2)g_3 \Rightarrow g_1(g_2g_3) . \]
The associator, in turn, satisfies the `pentagon identity', which says the
following pentagon commutes: \break
\[
\xy
 (0,20)*+{(g_1  g_2)  (g_3  g_4)}="1";
 (40,0)*+{g_1  (g_2  (g_3  g_4))}="2";
 (25,-20)*{ \quad g_1  ((g_2  g_3)  g_4)}="3";
 (-25,-20)*+{(g_1  (g_2  g_3))  g_4}="4";
 (-40,0)*+{((g_1  g_2)  g_3)  g_4}="5";
 {\ar@{=>}^{a(g_1,g_2,g_3  g_4)}     "1";"2"}
 {\ar@{=>}_{1_{g_1}  a(g_2,g_3,g_4)}  "3";"2"}
 {\ar@{=>}^{a(g_1,g_2  g_3,g_4)}    "4";"3"}
 {\ar@{=>}_{a(g_1,g_2,g_3)  1_{g_4}}  "5";"4"}
 {\ar@{=>}^{a(g_1  g_2,g_3,g_4)}    "5";"1"}
\endxy
\]
We shall see that this identity forces $a$ to be a 3-cocycle on the Lie group
$G$ of morphisms. We denote the Lie 2-group of this from by $\String_a(G,H)$.

Moreover, we can generalize all of this to obtain Lie 2-supergroups from
3-cocycles on Lie supergroups. In general, we expect that any supergroup
$(n+1)$-cocycle $f$ gives rise to a slim $n$-supergroup, $\Brane_f(G,H)$,
though this cannot be made precise without being more definite about
$n$-categories for higher $n$. 

Nonetheless, the precise examples of Lie 2-groups suggest a strong parallel to
the way Lie algebra $(n+1)$-cocycles give rise to Lie $n$-algebras.  And this
parallel suggests a naive scheme to integrate Lie $n$-algebras. Given a slim
Lie $n$-superalgebra $\brane_\omega(\g,\h)$, we seek a slim Lie $n$-supergroup
$\Brane_f(G,H)$ where:
\begin{itemize}
	\item $G$ is a Lie supergroup with Lie superalgebra $\g$; i.e.\, it is a Lie
		supergroup integrating $\g$,
	\item $H$ is an abelian Lie supergroup with Lie superalgebra $\h$; i.e.\, it is a Lie
		supergroup integrating $\h$,
	\item $f$ is a Lie supergroup $(n+1)$-cocycle on $G$ that, in some suitable sense,
		integrates the Lie superalgebra $(n+1)$-cocycle $\omega$ on $\g$.
\end{itemize}
Admittedly, we only define $\Brane_f(G,H)$ precisely when $n=2$, but that will
suffice to handle our example of interest, $\superstring(n+1,1)$.

Unfortunately, this naive scheme fails to work even for well-known examples of
slim Lie 2-algebras, such as the the string Lie 2-algebra $\strng(n)$. In this
case, we can:
\begin{itemize}
	\item integrate $\so(n)$ to $\Spin(n)$ or $\SO(n)$,
	\item integrate $\R$ to $\R$ or $\U(1)$, 
	\item but there is no hope to integrate $\omega$ to a nontrivial
		$(n+1)$-cocycle $f$ on $\SO(n)$ or $\Spin(n)$, because compact
		Lie groups admit \emph{no nontrivial smooth cocycles}
		\cite{BaezLauda,vanEst}.
\end{itemize}
Really, this failure is a symptom of the fact that our definition of Lie
$n$-group is oversimplified. There are more sophisticated approaches to
integrating the string Lie 2-algbera, like those due to Baez, Crans, Schreiber
and Stevenson \cite{BCSS} or Schommer-Pries \cite{SchommerPries}, and a general
technique to integrate any Lie $n$-algebra due to Henriques \cite{Henriques},
which Schreiber \cite{Schreiber} has in turn generalized to handle Lie
$n$-superalgebras and more.  All three techniques involve generalizing the
notion of Lie 2-group (or Lie $n$-group, for Henriques and Schreiber) away from
the world of finite-dimensional manifolds, and the latter three generalize the
notion of 2-group to one in which products are defined only up to equivalence.

Given this history, it is remarkable that the naive scheme we outlined for
integration actually works for the Lie $n$-superalgebra we really care
about---namely, the superstring Lie 2-algbera. Moreover, this is not some weird
quirk unique to this special case, but the result of a beautiful geometric
procedure for integrating Lie algebra cocycles defined on a \emph{nilpotent}
Lie algebra. Originally invented by Houard \cite{Houard}, we generalize this
technique to the case of nilpotent Lie superalgebras and supergroups. 

This paper is organized as follows. In Section \ref{sec:Lie-n-groups}, we
sketch how to construct a Lie $n$-group from an $(n+1)$-cocycle in smooth group
cohomology, and make this precise for $n=2$. In Section
\ref{sec:Lie-n-superalgebras}, we define Lie $n$-superalgebras as $n$-term
$L_\infty$-superalgebras, and show to construct a Lie $n$-superalgebra from an
$(n+1)$-cocycle in Lie superalgebra cohomology. We then give some examples of
Lie 2-superalgebras obtained via 3-cocycles: the string Lie 2-algebra in
Section \ref{sec:string2alg}, a new example we call the Heisenberg Lie
2-algebra in Section \ref{sec:heisenberg2alg}, and the supertranslation Lie
2-superalgebra in Section \ref{sec:trans2alg}. Finally, in Section
\ref{sec:superstring2alg}, we give our key example, the superstring
Lie 2-superalgebra, $\superstring(n+1,1)$.

We spend the rest of the paper building the machinery to integrate
$\superstring(n+1,1)$. In Section \ref{sec:integrating}, we give some
background on the problem of integrating Lie $n$-algebras, and introduce a key
construction in Section \ref{sec:integratingcochains}: a geometric technique,
due to Houard \cite{Houard}, to fill out $p$-simplices in Lie groups given a
$(p+1)$-tuple of vertices, provided the Lie group in question is `exponential':
the exponential map is a diffeomorphism. This immediately allows us to
integrate Lie algebra 3-cocycles to Lie group 3-cocycles for all nilpotent Lie
algebras and their simply-connected Lie groups. In Section
\ref{sec:heisenberg2group}, we apply this technique to obtain the Heisenberg
Lie 2-group from the Heisenberg Lie 2-algebra. Then we lay the groundwork to
generalize this construction to Lie 2-superalgebras and 2-supergroups. In
Section \ref{sec:supergeometry}, we give a brief introduction to supermanifold
theory using the `functor of points' approach we learned from Sachse
\cite{Sachse} and Balduzzi, Carmeli and Fioresi \cite{BCF}.  In Section
\ref{sec:Lie-n-supergroups}, we generalize the results of Section
\ref{sec:Lie-n-groups} to the super case by showing how to construct a
2-supergroup from a 3-cocycle in supergroup cohomology. In Section
\ref{sec:integrating2}, we generalize the results of Section
\ref{sec:integrating} to the super case, showing how an even $p$-cocycle on a
nilpotent Lie superalgebra can be integrated to a smooth $p$-cocycle on the
corresponding supergroup. Finally, in Section \ref{sec:finale}, we apply this
technique to construct the superstring Lie 2-supergroup, $\Superstring(n+1,1)$,
in the guise of a smooth 3-cocycle on the Poincar\'e supergroup.

\section{Lie 2-groups from group cohomology} \label{sec:Lie-n-groups}

Roughly speaking, an `$n$-group' is a weak $n$-groupoid with one object---an
$n$-category with one object in which all morphisms are weakly invertible, up
to higher-dimensional morphisms. This definition is a rough one because there
are many possible definitions to use for `weak $n$-category', but despite this
ambiguity, it can still serve to motivate us. 

The richness of weak $n$-categories, no matter what definition we apply, makes
$n$-groups a complicated subject. In the midst of this complexity, we seek to
define a class of $n$-groups that have a simple description, and which are
straightforward to internalize, so that we may easily construct Lie $n$-groups
and Lie $n$-supergroups, as we shall do later in this thesis. The motivating
example for this is what Baez and Lauda \cite{BaezLauda} call a `special
2-group', which has a concrete description using group cohomology. Since Baez
and Lauda prove that all 2-groups are equivalent to special ones, group
cohomology also serves to classify 2-groups.

So, we will define `slim Lie $n$-groups' precisely only for $n = 2$, but sketch
a definition for higher $n$.  This is an Lie $n$-group which is skeletal (every
weakly isomorphic pair of objects are equal), and almost trivial: all
$k$-morphisms are the identity for $1 < k < n$.  Slim Lie $n$-groups are useful
because they can be completely classified by Lie group cohomology. They are
also easy to `superize', and their super versions can be completely classified
using Lie supergroup cohomology, as we shall see in Section
\ref{sec:Lie-n-supergroups}. Finally, we note that we could equally well define
`slim $n$-groups', working in the category of sets rather than the category of
smooth manifolds. Indeed, when $n=2$, this is what Baez and Lauda call a
`special 2-group', though we prefer the `slim' terminology.

We should stress that the definition of Lie $n$-group we sketch here (and make
precise for $n \leq 3$), while it is good enough for our needs, is known to be
too naive in some important respects. For instance, it does not seem possible
to integrate every Lie $n$-algebra to a Lie $n$-group of this type, while
Henriques's definition of Lie $n$-group does make this possible
\cite{Henriques}. 

First we need to review the cohomology of Lie groups, as originally defined by
van Est \cite{vanEst}, who was working in parallel with the definition of group
cohomology given by Eilenberg and MacLane. Fix a Lie group $G$, an abelian Lie
group $H$, and a smooth action of $G$ on $H$ which respects addition in $H$.
That is, for any $g \in G$ and $h, h' \in H$, we have:
\[ g(h + h') = gh + gh'. \]
Then \define{the cohomology of $G$ with coefficients in $H$} is given by the
\define{Lie group cochain complex}, $C^{\bullet}(G,H)$. At level $p$, this
consists of the smooth functions from $G^p$ to $H$:
\[ C^p(G,H) = \left\{ f \maps G^p \to H \right\}. \]
We call elements of this set \define{$H$-valued $p$-cochains on $G$}. The
boundary operator is the same as the one defined by Eilenberg--MacLane. On a
$p$-cochain $f$, it is given by the formula:
\begin{eqnarray*}
	df(g_1, \dots, g_{p+1}) & = & g_1 f(g_2, \dots, g_{p+1}) \\
	                        &   & + \sum_{i=1}^p (-1)^i f(g_1, \dots, g_{i-1}, g_i g_{i+1}, g_{i+2}, \dots, g_{p+1}) \\
				&   & + (-1)^{p+1} f(g_1, \dots, g_p) .
\end{eqnarray*}
The proof that $d^2 = 0$ is routine. All the usual terminology applies: a
$p$-cochain $f$ for which $df = 0$ is called \define{closed}, or a
\define{cocycle}, a $p$-cochain $f = dg$ for some $(p-1)$-cochain $g$ is called
\define{exact}, or a \define{coboundary}. A $p$-cochain is said to be
\define{normalized} if it vanishes when any of its entries is 1. Every
cohomology class can be represented by a normalized cocycle. Finally, when $H =
\R$ with trivial $G$ action, we omit it when writing the complex
$C^\bullet(G)$, and we call real-valued cochains, cocycles, or coboundaries,
simply cochains, cocycles or coboundaries, respectively.

This last choice, that $\R$ will be our default coefficient group, may seem
innocuous, but there is another one-dimensional abelian Lie group we might have
chosen: $\U(1)$, the group of phases. This would have been an equally valid
choice, and perhaps better for some physical applications, but we have chosen
$\R$ because it simplifies our formulas slightly.

We now sketch how to build a slim Lie $n$-group from an $(n+1)$-cocycle. In
essence, given a normalized $H$-valued $(n+1)$-cocycle $a$ on a Lie group $G$,
we want to construct a Lie $n$-group $\Brane_a(G,H)$, which is the smooth, weak
$n$-groupoid with:
\begin{itemize}
	\item One object. We can depict this with a dot, or `0-cell': $\bullet$

	\item For each element $g \in G$, a 1-automorphism of the one object,
		which we depict as an arrow, or `1-cell':
		\[ \xymatrix{ \bullet \ar[r]^g & \bullet }, \quad g \in G. \]
		Composition corresponds to multiplication in the group:
		\[ \xymatrix{ \bullet \ar[r]^g & \bullet \ar[r]^{g'} & \bullet } = \xymatrix{ \bullet \ar[r]^{gg'} & \bullet }. \]
	\item Trivial $k$-morphisms for $1 < k < n$. If we depict 2-morphisms
		with 2-cells, 3-morphisms with 3-cells, then we are saying
		there is just one of each of these (the identity) up to level
		$n-1$:
		\[
		\xy
		(-8,0)*+{\bullet}="4";
		(8,0)*+{\bullet}="6";
		{\ar@/^1.65pc/^g "4";"6"};
		{\ar@/_1.65pc/_g "4";"6"};
		{\ar@{=>}^{1_g} (0,3)*{};(0,-3)*{}} ;
		\endxy
		, \quad
		\xy 
		(-10,0)*+{\bullet}="1";
		(10,0)*+{\bullet}="2";
		{\ar@/^1.65pc/^g "1";"2"};
		{\ar@/_1.65pc/_g "1";"2"};
		(0,5)*+{}="A";
		(0,-5)*+{}="B";
		{\ar@{=>}@/_.75pc/ "A"+(-1.33,0) ; "B"+(-.66,-.55)};
		{\ar@{=}@/_.75pc/ "A"+(-1.33,0) ; "B"+(-1.33,0)};
		{\ar@{=>}@/^.75pc/ "A"+(1.33,0) ; "B"+(.66,-.55)};
		{\ar@{=}@/^.75pc/ "A"+(1.33,0) ; "B"+(1.33,0)};
		{\ar@3{->} (-2,0)*{}; (2,0)*{}};
		(0,2.5)*{\scriptstyle 1_{1_g}};
		(-7,0)*{\scriptstyle 1_g};
		(7,0)*{\scriptstyle 1_g};
		\endxy
		, \quad \dots
		\] 

	\item For each element $h \in H$, an $n$-automorphism on the identity
		of the identity of \dots the identity of the 1-morphism $g$,
		and no $n$-morphisms which are not $n$-automorphisms. For
		example, when $n = 3$, we have:
		\[
		\xy 
		(-10,0)*+{\bullet}="1";
		(10,0)*+{\bullet}="2";
		{\ar@/^1.65pc/^g "1";"2"};
		{\ar@/_1.65pc/_g "1";"2"};
		(0,5)*+{}="A";
		(0,-5)*+{}="B";
		{\ar@{=>}@/_.75pc/ "A"+(-1.33,0) ; "B"+(-.66,-.55)};
		{\ar@{=}@/_.75pc/ "A"+(-1.33,0) ; "B"+(-1.33,0)};
		{\ar@{=>}@/^.75pc/ "A"+(1.33,0) ; "B"+(.66,-.55)};
		{\ar@{=}@/^.75pc/ "A"+(1.33,0) ; "B"+(1.33,0)};
		{\ar@3{->} (-2,0)*{}; (2,0)*{}};
		(0,2.5)*{\scriptstyle h};
		(-7,0)*{\scriptstyle 1_g};
		(7,0)*{\scriptstyle 1_g};
		\endxy
		, \quad h \in H.
		\]

	\item There are $n$ ways of composing $n$-morphisms, given by different
		ways of sticking $n$-cells together. For example, when $n = 3$,
		we can glue two 3-cells along a 2-cell, which should just
		correspond to addition in $H$:
		\[
		\xy 0;/r.22pc/:
		(0,15)*{};
		(0,-15)*{};
		(0,8)*{}="A";
		(0,-8)*{}="B";
		{\ar@{=>} "A" ; "B"};
		{\ar@{=>}@/_1pc/ "A"+(-4,1) ; "B"+(-3,0)};
		{\ar@{=}@/_1pc/ "A"+(-4,1) ; "B"+(-4,1)};
		{\ar@{=>}@/^1pc/ "A"+(4,1) ; "B"+(3,0)};
		{\ar@{=}@/^1pc/ "A"+(4,1) ; "B"+(4,1)};
		{\ar@3{->} (-6,0)*{} ; (-2,0)*+{}};
		(-4,3)*{\scriptstyle h};
		{\ar@3{->} (2,0)*{} ; (6,0)*+{}};
		(4,3)*{\scriptstyle k};
		(-15,0)*+{\bullet}="1";
		(15,0)*+{\bullet}="2";
		{\ar@/^2.75pc/^g "1";"2"};
		{\ar@/_2.75pc/_g "1";"2"};
		\endxy 
		\quad = \quad
		\xy 0;/r.22pc/:
		(0,15)*{};
		(0,-15)*{};
		(0,8)*{}="A";
		(0,-8)*{}="B";
		{\ar@{=>}@/_1pc/ "A"+(-4,1) ; "B"+(-3,0)};
		{\ar@{=}@/_1pc/ "A"+(-4,1) ; "B"+(-4,1)};
		{\ar@{=>}@/^1pc/ "A"+(4,1) ; "B"+(3,0)};
		{\ar@{=}@/^1pc/ "A"+(4,1) ; "B"+(4,1)};
		{\ar@3{->} (-6,0)*{} ; (6,0)*+{}};
		(0,3)*{\scriptstyle h + k};
		(-15,0)*+{\bullet}="1";
		(15,0)*+{\bullet}="2";
		{\ar@/^2.75pc/^g "1";"2"};
		{\ar@/_2.75pc/_g "1";"2"};
		\endxy .
		\]
		We also can glue two 3-cells along a 1-cell, which should again
		just be addition in $H$:
		\[	
		\xy 0;/r.22pc/:
		(0,15)*{};
		(0,-15)*{};
		(0,9)*{}="A";
		(0,1)*{}="B";
		{\ar@{=>}@/_.5pc/ "A"+(-2,1) ; "B"+(-1,0)};
		{\ar@{=}@/_.5pc/ "A"+(-2,1) ; "B"+(-2,1)};
		{\ar@{=>}@/^.5pc/ "A"+(2,1) ; "B"+(1,0)};
		{\ar@{=}@/^.5pc/ "A"+(2,1) ; "B"+(2,1)};
		{\ar@3{->} (-2,6)*{} ; (2,6)*+{}};
		(0,9)*{\scriptstyle h};
		(0,-1)*{}="A";
		(0,-9)*{}="B";
		{\ar@{=>}@/_.5pc/ "A"+(-2,-1) ; "B"+(-1,-1.5)};
		{\ar@{=}@/_.5pc/ "A"+(-2,0) ; "B"+(-2,-.7)};
		{\ar@{=>}@/^.5pc/ "A"+(2,-1) ; "B"+(1,-1.5)};
		{\ar@{=}@/^.5pc/ "A"+(2,0) ; "B"+(2,-.7)};
		{\ar@3{->} (-2,-5)*{} ; (2,-5)*+{}};
		(0,-2)*{\scriptstyle k};
		(-15,0)*+{\bullet}="1";
		(15,0)*+{\bullet}="2";
		{\ar@/^2.75pc/^g "1";"2"};
		{\ar@/_2.75pc/_g "1";"2"};
		{\ar "1";"2"};
		(8,2)*{\scriptstyle g};
		\endxy 
		\quad = \quad
		\xy 0;/r.22pc/:
		(0,15)*{};
		(0,-15)*{};
		(0,8)*{}="A";
		(0,-8)*{}="B";
		{\ar@{=>}@/_1pc/ "A"+(-4,1) ; "B"+(-3,0)};
		{\ar@{=}@/_1pc/ "A"+(-4,1) ; "B"+(-4,1)};
		{\ar@{=>}@/^1pc/ "A"+(4,1) ; "B"+(3,0)};
		{\ar@{=}@/^1pc/ "A"+(4,1) ; "B"+(4,1)};
		{\ar@3{->} (-6,0)*{} ; (6,0)*+{}};
		(0,3)*{\scriptstyle h + k};
		(-15,0)*+{\bullet}="1";
		(15,0)*+{\bullet}="2";
		{\ar@/^2.75pc/^g "1";"2"};
		{\ar@/_2.75pc/_g "1";"2"};
		\endxy  .
		\]
		And finally, we can glue two 3-cells at the 0-cell, the object
		$\bullet$.  This is the only composition of $n$-morphisms where
		the attached 1-morphisms can be distinct, which distinguishes
		it from the first two cases. It should be addition
		\emph{twisted by the action of $G$}:
		\[
		\xy 0;/r.22pc/:
		(0,15)*{};
		(0,-15)*{};
		(-20,0)*+{\bullet}="1";
		(0,0)*+{\bullet}="2";
		{\ar@/^2pc/^g "1";"2"};
		{\ar@/_2pc/_g "1";"2"};
		(20,0)*+{\bullet}="3";
		{\ar@/^2pc/^{g'} "2";"3"};
		{\ar@/_2pc/_{g'} "2";"3"};
		(-10,6)*+{}="A";
		(-10,-6)*+{}="B";
		{\ar@{=>}@/_.7pc/ "A"+(-2,0) ; "B"+(-1,-.8)};
		{\ar@{=}@/_.7pc/ "A"+(-2,0) ; "B"+(-2,0)};
		{\ar@{=>}@/^.7pc/ "A"+(2,0) ; "B"+(1,-.8)};
		{\ar@{=}@/^.7pc/ "A"+(2,0) ; "B"+(2,0)};
		(10,6)*+{}="A";
		(10,-6)*+{}="B";
		{\ar@{=>}@/_.7pc/ "A"+(-2,0) ; "B"+(-1,-.8)};
		{\ar@{=}@/_.7pc/ "A"+(-2,0) ; "B"+(-2,0)};
		{\ar@{=>}@/^.7pc/ "A"+(2,0) ; "B"+(1,-.8)};
		{\ar@{=}@/^.7pc/ "A"+(2,0) ; "B"+(2,0)};
		{\ar@3{->} (-12,0)*{}; (-8,0)*{}};
		(-10,3)*{\scriptstyle h};
		{\ar@3{->} (8,0)*{}; (12,0)*{}};
		(10,3)*{\scriptstyle k};
		\endxy 
		\quad = \quad 
		\xy 0;/r.22pc/:
		(0,15)*{};
		(0,-15)*{};
		(0,8)*{}="A";
		(0,-8)*{}="B";
		{\ar@{=>}@/_1pc/ "A"+(-4,1) ; "B"+(-3,0)};
		{\ar@{=}@/_1pc/ "A"+(-4,1) ; "B"+(-4,1)};
		{\ar@{=>}@/^1pc/ "A"+(4,1) ; "B"+(3,0)};
		{\ar@{=}@/^1pc/ "A"+(4,1) ; "B"+(4,1)};
		{\ar@3{->} (-6,0)*{} ; (6,0)*+{}};
		(0,3)*{\scriptstyle h + g k};
		(-15,0)*+{\bullet}="1";
		(15,0)*+{\bullet}="2";
		{\ar@/^2.75pc/^{gg'} "1";"2"};
		{\ar@/_2.75pc/_{gg'} "1";"2"};
		\endxy .
		\]
		For arbitary $n$, we define all $n$ compositions to be addition
		in $H$, except for gluing at the object, where it is addition
		twisted by the action.

	\item For any $(n+1)$-tuple of 1-morphisms, an $n$-automorphism $a(g_1,
		g_2, \dots, g_{n+1})$ on the identity of the identity of \dots
		the identity of the 1-morphism $g_1 g_2 \dots g_{n+1}$. We call
		$a$ the \define{$n$-associator}.

	\item $a$ satisfies an equation corresponding to the $n$-dimensional
		associahedron, which is equivalent to the cocycle condition.
\end{itemize}
In principle, it should be possible to take a globular definition of
$n$-category, such as that of Batanin or Trimble, and fill out this sketch to
make it a real definition of an $n$-group. Doing this here, however, would lead
us too far afield from our goal, for which we only need 2-groups.  So let us
flesh out this case. The reader interested in learning more about the various
definitions of $n$-categories should consult Leinster's survey
\cite{Leinster:ncat} or Cheng and Lauda's guidebook \cite{ChengLauda}. 

\subsection{Lie 2-groups}

Speaking precisely, a \define{2-group} is a bicategory with one object in which
all 1-morphisms and 2-morphisms are weakly invertible. Rather than plain
2-groups, we are interested in \emph{Lie} 2-groups, where all the structure in
sight is smooth. So, we really need a bicategory `internal to the category of
smooth manifolds', or a `smooth bicategory'. To this end, we will give an
especially long and unfamiliar definition of bicategory, isolating each
operation and piece of data so that we can indicate its smoothness.  Readers
not familiar with bicategories are encouraged to read the introduction by
Leinster \cite{Leinster:bicat}. 

Before we give this definition, let us review the idea of a `bicategory', so
that its basic simplicity is not obscured in technicalities. A bicategory has
objects:
\[ x \, \bullet, \]
morphisms going between objects,
\[ \xymatrix{ x \, \bullet \ar[r]^f & \bullet \, y}, \]
and 2-morphisms going between morphisms:
\[
\xy
(-10,0)*+{x};
(-8,0)*+{\bullet}="4";
(8,0)*+{\bullet}="6";
(10,0)*+{y};
{\ar@/^1.65pc/^f "4";"6"};
{\ar@/_1.65pc/_g "4";"6"};
{\ar@{=>}^{\scriptstyle \alpha} (0,3)*{};(0,-3)*{}} ;
\endxy .
\] 
Morphisms in a bicategory can be composed just as morphisms in a category:
\[ \xymatrix{ x \ar[r]^f & y \ar[r]^g & z } \quad = \quad \xymatrix{ x \ar[r]^{f \cdot g} & z } . \]
But there are two ways to compose 2-morphisms---vertically:
\[
\xy
(-8,0)*+{x}="4";
(8,0)*+{y}="6";
{\ar^g "4";"6"};
{\ar@/^1.75pc/^{f} "4";"6"};
{\ar@/_1.75pc/_{h} "4";"6"};
{\ar@{=>}^<<{\scriptstyle \alpha} (0,6)*{};(0,1)*{}} ;
{\ar@{=>}^<<{\scriptstyle \beta} (0,-1)*{};(0,-6)*{}} ;
\endxy
\quad = \quad \xy
(-8,0)*+{x}="4";
(8,0)*+{y}="6";
{\ar@/^1.65pc/^f "4";"6"};
{\ar@/_1.65pc/_h "4";"6"};
{\ar@{=>}^{\scriptstyle \alpha \circ \beta } (0,3)*{};(0,-3)*{}} ;
\endxy
\] 
and horizontally:
\[
\xy
(-16,0)*+{x}="4";
(0,0)*+{y}="6";
{\ar@/^1.65pc/^{f} "4";"6"};
{\ar@/_1.65pc/_{g} "4";"6"};
{\ar@{=>}^<<<{\scriptstyle \alpha} (-8,3)*{};(-8,-3)*{}} ;
(0,0)*+{y}="4";
(16,0)*+{z}="6";
{\ar@/^1.65pc/^{f'} "4";"6"};
{\ar@/_1.65pc/_{g'} "4";"6"};
{\ar@{=>}^<<<{\scriptstyle \beta} (8,3)*{};(8,-3)*{}} ;
\endxy
\quad = \quad \xy
(-10,0)*+{x}="4";
(10,0)*+{z}="6";
{\ar@/^1.65pc/^{f \cdot f'} "4";"6"};
{\ar@/_1.65pc/_{g \cdot g'} "4";"6"};
{\ar@{=>}^{\alpha \cdot \beta} (0,3)*{};(0,-3)*{}} ;
\endxy .
\] 
Unlike a category, composition of morphisms need not be associative or have
left and right units. The presence of 2-morphisms allow us to \emph{weaken the
axioms}.  Rather than demanding $(f \cdot g) \cdot h = f \cdot (g \cdot h)$,
for composable morphisms $f, g$ and $h$, the presence of 2-morphisms allow for
the weaker condition that these two expressions are merely isomorphic:
 \[ a(f,g,h) \maps (f \cdot g) \cdot h \Rightarrow f \cdot (g \cdot h), \]
where $a(f,g,h)$ is an 2-isomorphism called the \define{associator}. In the
same vein, rather than demanding that:
\[ 1_x \cdot f = f = f \cdot 1_y, \]
for $f \maps x \to y$, and identities $1_x \maps x \to x$ and $1_y \maps y \to
y$, the presence of 2-morphisms allow us to weaken these equations to
isomorphisms:
\[ l(f) \maps 1_x \cdot f \Rightarrow f, \quad r(f) \maps f \cdot 1_y \Rightarrow f. \]
Here, $l(f)$ and $r(f)$ are 2-isomorphisms called the \define{left and right
unitors}.

Of course, these 2-isomorphisms obey rules of their own. The associator
satisfies its own axiom, called the \define{pentagon identity}, which says that
this pentagon commutes:
\[
\xy
 (0,20)*+{(f g) (h k)}="1";
 (40,0)*+{f (g (h k))}="2";
 (25,-20)*{ \quad f ((g h) k)}="3";
 (-25,-20)*+{(f (g h)) k}="4";
 (-40,0)*+{((f g) h) k}="5";
 {\ar@{=>}^{a(f,g,h k)}     "1";"2"}
 {\ar@{=>}_{1_f \cdot a_(g,h,k)}  "3";"2"}
 {\ar@{=>}^{a(f,g h,k)}    "4";"3"}
 {\ar@{=>}_{a(f,g,h) \cdot 1_k}  "5";"4"}
 {\ar@{=>}^{a(fg,h,k)}    "5";"1"}
\endxy
\]
Finally, the associator and left and right unitors satisfy the \define{triangle
identity}, which says the following triangle commutes:
\[ 
\xy
(-20,10)*+{(f 1) g}="1";
(20,10)*+{f (1 g)}="2";
(0,-10)*+{f g}="3";
{\ar@{=>}^{a(f,1,g)}	"1";"2"}
{\ar@{=>}_{r(f) \cdot 1_g}	"1";"3"}
{\ar@{=>}^{1_f \cdot l(g)} "2";"3"}
\endxy
\]

A word of caution is needed here before we proceed: \emph{in this section 
only}, we are bucking standard mathematical practice by writing the result of
doing first $\alpha$ and then $\beta$ as $\alpha \circ \beta$ rather than
$\beta \circ \alpha$, as one would do in most contexts where $\circ$ denotes
composition of \emph{functions}. This has the effect of changing how we read
commutative diagrams. For instance, the commutative triangle:
\[ \xymatrix{ f \ar[r]^\alpha \ar[rd]_\gamma & g \ar[d]^\beta \\
		         & h \\
}
\]
reads $\gamma = \alpha \circ \beta$ rather than $\gamma = \beta \circ \alpha$.

We shall now give the full definition, not of a bicategory, but of a `smooth
bicategory'. To do this, we use the idea of internalization. Dating back to
Ehresmann \cite{Ehresmann} in the 1960s, internalization has become a standard tool
of the working category theorist. The idea is based on a familiar one: any mathematical
structure that can be defined using sets, functions, and equations between
functions can often be defined in categories other than Set. For instance, a group in
the category of smooth manifolds is a Lie group. To perform internalization, we
apply this idea to the definition of category itself. We recall the essentials
here to define `smooth categories'. More generally, one can define a `category
in $K$' for many categories $K$, though here we will work exclusively with the
example where $K$ is the category of smooth manifolds. For a readable treatment
of internalization, see Borceux's handbook \cite{Borceux}.

\begin{defn} \label{def:smoothcat}
	A \define{smooth category} $C$ consists of
\begin{itemize}
	\item a \define{smooth manifold of objects} $C_{0}$;
	\item a \define{smooth manifold of morphisms} $C_1$;
\end{itemize}
together with
\begin{itemize}
	\item smooth {\bf source} and {\bf target} maps $s,t \maps C_{1} \rightarrow C_{0}$, 
	\item a smooth {\bf identity-assigning} map $i \maps C_{0} \rightarrow C_{1}$, 
	\item a smooth {\bf composition} map $\circ \maps C_{1} \times _{C_{0}}
		C_{1} \rightarrow C_{1}$, where $C_1 \times_{C_0} C_1$ is the
		pullback of the source and target maps:
		\[ C_1 \times_{C_0} C_1 = \left\{ (f,g) \in C_1 \times C_1 : t(f) = s(g) \right\}, \]
		and is assumed to be a smooth manifold.
\end{itemize}
such that the following diagrams commute, expressing the usual category laws:
\begin{itemize}
	\item laws specifying the source and target of identity morphisms:
	\[
	\xymatrix{
 C_{0}
   \ar[r]^{i}
   \ar[dr]_{1}
   & C_{1}
   \ar[d]^{s} \\
  & C_{0} }
\hspace{.2in} \xymatrix{
   C_{0}
   \ar[r]^{i}
   \ar[dr]_{1}
   & C_{1}
   \ar[d]^{t} \\
  & C_{0}}
\]
	\item laws specifying the source and target of composite
	morphisms:
	\[
\xymatrix{ C_{1} \times _{C_{0}} C_{1}
  \ar[rr]^{\circ}
  \ar[dd]_{p_{1}}
  && C_{1}
  \ar[dd]^{s} \\ \\
C_{1}
  \ar[rr]^{s}
  && C_{0} }
  \hspace{.2in}
\xymatrix{ C_{1} \times_{C_{0}} C_{1}
  \ar[rr]^{\circ}
  \ar[dd]_{p_{2}}
   && C_{1}
  \ar[dd]^{t} \\ \\
   C_{1}
  \ar[rr]^{t}
   && C_{0} }
\]
	\item the associative law for composition of morphisms:
	\[
	\xymatrix{ C_{1} \times _{C_{0}} C_{1} \times _{C_{0}} C_{1}
  \ar[rr]^{\circ \times_{C_{0}} 1}
  \ar[dd]_{1 \times_{C_{0}} \circ}
   && C_{1} \times_{C_{0}} C_{1}
  \ar[dd]^{\circ} \\ \\
   C_{1} \times _{C_{0}} C_{1}
  \ar[rr]^{\circ}
   && C_{1} }
\]
	\item the left and right unit laws for composition of morphisms:
	\[
	\xymatrix{ C_{0} \times _{C_{0}} C_{1}
  \ar[r]^{i \times 1}
  \ar[ddr]_{p_2}
   & C_{1} \times _{C_{0}} C_{1}
  \ar[dd]^{\circ}
   & C_{1} \times_{C_{0}} C_{0}
  \ar[l]_{1 \times i}
  \ar[ddl]^{p_1} \\ \\
   & C_{1} }
\]
\end{itemize}
\end{defn}

The existence of pullbacks in the category of smooth manifolds is a delicate
issue. When working with categories internal to some category $K$, it is
customary to assume $K$ contains all pullbacks, but this is merely a
convenience. All the definitions still work as long as the existence of each
required pullback is implicit. 

To define smooth bicategories, we must first define smooth functors and natural
transformations:

\begin{defn} 
Given smooth categories $C$ and $C'$, a {\bf smooth functor} $F \maps C \to C'$
consists of:
\begin{itemize}
	\item a smooth map on objects, $F_{0} \maps C_{0} \to C_{0}'$; 
	\item a smooth map on morphisms, $F_{1} \maps C_{1} \rightarrow C_{1}'$;
\end{itemize}
such that the following diagrams commute, corresponding to the usual laws satisfied by a functor:
\begin{itemize}
\item preservation of source and target:
\[
\xymatrix{ C_{1} \ar[rr]^{s} \ar[dd]_{F_{1}}
 && C_{0}
\ar[dd]^{F_{0}} \\ \\
 C_{1}'
\ar[rr]^{s'}
 && C_{0}' }
\qquad \qquad \xymatrix{ C_{1} \ar[rr]^{t} \ar[dd]_{F_{1}}
 && C_{0}
\ar[dd]^{F_{0}} \\ \\
 C_{1}'
\ar[rr]^{t'}
 && C_{0}' }
\]
\item preservation of identity morphisms:
\[
\xymatrix{
 C_{0}
\ar[rr]^{i} \ar[dd]_{F_{0}}
 && C_{1}
\ar[dd]^{F_{1}} \\ \\
 C_{0}'
\ar[rr]^{i'}
 && C_{1}' }
\]
\item preservation of composite morphisms:
\[
\xymatrix{ C_{1} \times _{C_{0}} C_{1}
 \ar[rr]^{F_{1} \times_{C_0} F_{1}}
 \ar[dd]_{\circ}
  && C_{1}' \times_{C_{0}'} C_{1}'
 \ar[dd]^{\circ'} \\ \\
  C_{1}
 \ar[rr]^{F_{1}}
  && C_{1}' }
\]
\end{itemize}
\end{defn}

\begin{defn}  Given smooth categories $C$ and $C'$, and smooth functors $F,G
	\maps C \to C'$, a {\bf smooth natural transformation} $\theta \maps F
	\To G$ is a smooth map $\theta \maps C_0 \to C'_1$ for which the
	following diagrams commute, expressing the usual laws satisfied by a
	natural transformation: 
	\begin{itemize}
		\item laws specifying the source and target of the natural
		transformation:
		\[
		 \xymatrix{C_0 \ar[dr]^F \ar[d]_{\theta} \\ C'_1 \ar[r]_s & C'_0 }
		 \qquad \qquad
		 \xymatrix{C_0 \ar[dr]^G \ar[d]_{\theta} \\ C'_1 \ar[r]_t & C'_0 }
		\]
		\item the commutative square law:
		\[  \xymatrix{
		C_1
		 \ar[rr]^{\Delta (s\theta \times G)}
		 \ar[dd]_{\Delta (F \times t\theta)}
		  && C'_1 \times_{C_0} C'_1
		 \ar[dd]^{\circ'} \\ \\
		  C'_1 \times_{C_0} C'_1
		 \ar[rr]^{\circ'}
		  && C'_1
		}
		\]
	\end{itemize}
	Given a third smooth functor $H \maps C \to C'$ and a smooth natural
	transformation $\eta \maps G \To H$, we define the \define{composition}
	$\theta \eta \maps F \To H$ to be the smooth map:
	\[ \xymatrix{ C_0 \ar[rr]^>>>>>>>>>>{\Delta (\theta \times \eta)} & & C'_1 \times_{C'_0} C'_1 \ar[rr]^>>>>>>>>>>>{\circ} & & C'_1 } . \]
	The \define{identity natural transformation} $1_F \maps F \To F$ on a
	smooth functor $F \maps C \to C'$ is defined to be the smooth map:
	\[ \xymatrix{ C_0 \ar[r]^i & C_1 \ar[r]^{F_1} & C'_1 } , \]
	where $i$ is the identity-assigning map for $C$ and $F_1$ is the
	component of $F$ on morphisms. Noting that $1_F$ acts as a left and
	right identity under composition of natural transformations, we say
	that a smooth natural transformation is a \define{smooth natural isomorphism} if it
	has a left and right inverse.
\end{defn}

Now we know enough about smooth category theory to bootstrap the definition of
smooth bicategories.  We do this in a somewhat nonstandard way: we make use of
the fact that the morphisms and 2-morphisms of a bicategory form an ordinary
category under vertical composition. Generalizing this, the morphisms and
2-morphisms in a smooth bicategory should form, by themselves, a smooth
category.  We can then define horizontal composition as a smooth functor, and
introduce the associator and left and right unitors as smooth natural
transformations between certain functors.  In detail: 
\begin{defn} \label{def:smoothbicat}
	A \define{smooth bicategory} $B$ consists of 
\begin{itemize}
	\item a \define{manifold of objects} $B_0$;
	\item a \define{manifold of morphisms} $B_1$;
	\item a \define{manifold of 2-morphisms} $B_2$;
\end{itemize}
equipped with:
\begin{itemize}
	\item a smooth category structure on $\underline{\Mor} B$, with
		\begin{itemize}
			\item $B_1$ as the smooth manifold of objects;
			\item $B_2$ as the smooth manifold of morphisms;
		\end{itemize}
		The composition in $\underline{\Mor} B$ is called
		\define{vertical composition} and denoted $\circ$.
	\item smooth \define{source} and \define{target maps}:
		\[ s, t \maps B_1 \to B_0. \]
	\item a smooth \define{identity-assigning map}:
		\[ i \maps B_0 \to B_1. \]
	\item a smooth \define{horizontal composition} functor:
		\[ \cdot \maps \underline{\Mor} B \times_{B_0} \underline{\Mor} B \to \underline{\Mor} B . \]
		That is, a pair of smooth maps:
		\[ \cdot \maps B_1 \times_{B_0} B_1 \to B_1  \]
		\[ \cdot \maps B_2 \times_{B_0} B_2 \to B_2, \]
		satisfying the axioms for a functor.
	\item a smooth natural isomorphism, the \define{associator}:
		\[ a(f,g,h) \maps (f \cdot g) \cdot h \To f \cdot (g \cdot h). \]
	\item smooth natural isomorphisms, the \define{left} and
		\define{right unitors}, which are both trivial in the
		bicategories we consider:
		\[ l(f) \maps 1 \cdot f \To f, \quad r(f) \maps f \cdot 1 \To f. \]
\end{itemize}
such that the following diagrams commute, expressing the same laws regarding
sources, targets and identities as with a smooth category, and two new laws
expressing the compatibility of the various source and target maps:
\begin{itemize}
\item laws specifying the source and target of identity morphisms:
\[
\xymatrix{
 B_{0}
   \ar[r]^{i}
   \ar[dr]_{1}
   & B_{1}
   \ar[d]^{s} \\
  & B_{0} }
\hspace{.2in} \xymatrix{
   B_{0}
   \ar[r]^{i}
   \ar[dr]_{1}
   & B_{1}
   \ar[d]^{t} \\
  & B_{0}}
\]
\item laws specifying the source and target of the horizontal composite
of 1-morphisms:
\[
\xymatrix{ B_1 \times _{B_0} B_1
  \ar[rr]^{\cdot}
  \ar[dd]_{p_{1}}
  && B_1
  \ar[dd]^{t} \\ \\
B_1
  \ar[rr]^{t}
  && B_0 }
  \hspace{.2in}
\xymatrix{ B_1 \times_{B_0} B_1
  \ar[rr]^{\cdot}
  \ar[dd]_{p_{2}}
   && B_1
  \ar[dd]^{s} \\ \\
   B_1
  \ar[rr]^{s}
   && B_0 }
\]

\item laws expressing the compatibility of source and target maps:
\[ \xymatrix{
		B_2 \ar[rr]^s \ar[dd]_t & &  B_1 \ar[dd]^s \\
		                        & & \\
		B_1 \ar[rr]_s           & & B_0 \\
	}
	\hspace{.2in}
\xymatrix{
		B_2 \ar[rr]^t \ar[dd]_s & &  B_1 \ar[dd]^t \\
		                        & & \\
		B_1 \ar[rr]_t           & & B_0 \\
	}
\]
\end{itemize}
Finally, associator and left and right unitors satisfy some laws of their
own---the following diagrams commute:
\begin{itemize}
 \item the {\bf pentagon identity} for the associator:
\[
\xy
 (0,20)*+{(f g) (h k)}="1";
 (40,0)*+{f (g (h k))}="2";
 (25,-20)*{ \quad f ((g h) k)}="3";
 (-25,-20)*+{(f (g h)) k}="4";
 (-40,0)*+{((f g) h) k}="5";
 {\ar@{=>}^{a(f,g,h k)}     "1";"2"}
 {\ar@{=>}_{1_f \cdot a_(g,h,k)}  "3";"2"}
 {\ar@{=>}^{a(f,g h,k)}    "4";"3"}
 {\ar@{=>}_{a(f,g,h) \cdot 1_k}  "5";"4"}
 {\ar@{=>}^{a(fg,h,k)}    "5";"1"}
\endxy
\]
for any four composable morphisms $f$, $g$, $h$ and $k$.
\item the {\bf triangle identity} for the left and right unit
laws:
\[ 
\xy
(-20,10)*+{(f 1) g}="1";
(20,10)*+{f (1 g)}="2";
(0,-10)*+{f g}="3";
{\ar@{=>}^{a(f,1,g)}	"1";"2"}
{\ar@{=>}_{r(f) \cdot 1_g}	"1";"3"}
{\ar@{=>}^{1_f \cdot l(g)} "2";"3"}
\endxy
\]
for any two composable morphisms $f$ and $g$.
\end{itemize}
\end{defn}

Finally, to talk about Lie 2-groups, we will need to talk about inverses. We
say that the 2-morphisms in a smooth bicategory $B$ have \define{smooth strict
inverses} if there exists a smooth map from 2-morphisms to 2-morphisms:
\[ \inv_2 \maps B_2 \to B_2 \]
that assigns to each 2-morphism $\alpha$ its strict inverse $\alpha^{-1} = \inv_2(\alpha)$,
obeying the left and right inverse laws on the nose:
\[ \alpha^{-1} \circ \alpha = 1, \quad \alpha \circ \alpha^{-1} = 1 . \]
Of course, if the strict inverse $\alpha^{-1}$ exists, it is unique, but same
is not true for `weak inverses'. We say that the morphisms in $B$ have
\define{smooth weak inverses} if there exist smooth maps:
\[ \inv_1 \maps B_1 \to B_1, \quad e \maps B_1 \to B_2, \quad u \maps B_1 \to B_2, \]
such that for each morphism $f$, $\inv_1$ provides a smooth choice of weak
inverse, $f^{-1} = \inv_1(f)$, and $u$ and $e$ provide smooth choices of
2-isomorphisms that `weaken' the left and right inverse laws:
\[ e(f) \maps f^{-1} \cdot f \To 1, \quad u(f) \maps f \cdot f^{-1} \To 1 . \]
Here we have been careful to use indefinite articles, and with good reason:
unlike their strict counterparts, weak inverses need not be unique!

The definition of smooth bicategory we give above may seem so long that checking it is
utterly intimidating, but we shall see an example in a moment where this is
easy. This will be an example of a \define{Lie 2-group}, a smooth bicategory
with one object whose morphisms have smooth weak inverses and whose 2-morphisms
have smooth strict inverses. 

Secretly, the pentagon identity is a cocycle condition, as we shall now see.
Given a normalized $H$-valued 3-cocycle $a$ on a Lie group $G$, we can
construct a Lie 2-group $\String_a(G,H)$ with:
\begin{itemize}
	\item One object, $\bullet$, regarded as a manifold in the trivial way.
	\item For each element $g \in G$, an automorphism of the one object:
		\[ \bullet \stackrel{g}{\longrightarrow} \bullet . \]
		Horizontal composition given by multiplication in the group:
		\[ \cdot \maps G \times G \to G. \]
		Note that source and target maps are necessarily trivial. The
		identity-assigning map takes the one object to $1 \in G$.
	\item For each $h \in H$, a 2-automorphism of the morphism $g$,
		and no 2-morphisms between distinct morphisms:
		\[
		\xy
		(-8,0)*+{\bullet}="4";
		(8,0)*+{\bullet}="6";
		{\ar@/^1.65pc/^g "4";"6"};
		{\ar@/_1.65pc/_g "4";"6"};
		{\ar@{=>}^h (0,3)*{};(0,-3)*{}} ;
		\endxy, \quad h \in H .
		\] 
		Thus the space of all 2-morphisms is $G \times H$, and
		the source and target maps are projection onto the first
		factor. The identity-assigning map takes each element of $G$ to
		$0 \in H$.
	\item Two kinds of composition of 2-morphisms: given a pair of
		2-morphisms on the same morphism, vertical compostion is given
		by addition in $H$:
		\[
		\xy
		(-8,0)*+{\bullet}="4";
		(8,0)*+{\bullet}="6";
		{\ar "4";"6"};
		{\ar@/^1.75pc/^{g} "4";"6"};
		{\ar@/_1.75pc/_{g} "4";"6"};
		{\ar@{=>}^<<{h} (0,6)*{};(0,1)*{}} ;
		{\ar@{=>}^<<{h'} (0,-1)*{};(0,-6)*{}} ;
		\endxy
		\quad = \quad \xy
		(-8,0)*+{\bullet}="4";
		(8,0)*+{\bullet}="6";
		{\ar@/^1.65pc/^g "4";"6"};
		{\ar@/_1.65pc/_g "4";"6"};
		{\ar@{=>}^{h + h'} (0,3)*{};(0,-3)*{}} ;
		\endxy .
		\] 
		That is, vertical composition is just the map:
		\[ \circ = 1 \times + \maps G \times H \times H \to G \times H. \]
		where we have used the fact that the pullback of 2-morphisms
		over the one object is trivially:
		\[ (G \times H) \times_\bullet (G \times H) \iso G \times H \times H. \]
		Given a pair of 2-morphisms on different morphisms, horizontal
		composition is addition \emph{twisted by the action of $G$}:
		\[
		\xy
		(-16,0)*+{\bullet}="4";
		(0,0)*+{\bullet}="6";
		{\ar@/^1.65pc/^{g} "4";"6"};
		{\ar@/_1.65pc/_{g} "4";"6"};
		{\ar@{=>}^<<<{h} (-8,3)*{};(-8,-3)*{}} ;
		(0,0)*+{\bullet}="4";
		(16,0)*+{\bullet}="6";
		{\ar@/^1.65pc/^{g'} "4";"6"};
		{\ar@/_1.65pc/_{g'} "4";"6"};
		{\ar@{=>}^<<<{h'} (8,3)*{};(8,-3)*{}} ;
		\endxy
		\quad = \quad \xy
		(-12,0)*+{\bullet}="4";
		(12,0)*+{\bullet}="6";
		{\ar@/^1.65pc/^{g g'} "4";"6"};
		{\ar@/_1.65pc/_{g g'} "4";"6"};
		{\ar@{=>}^{h + gh'} (0,3)*{};(0,-3)*{}} ;
		\endxy .
		\] 
		Or, in terms of a map, this is the multiplication on the
		semidirect product, $G \ltimes H$:
		\[ \cdot \maps (G \ltimes H) \times (G \ltimes H) \to G \ltimes H. \]
	\item For any triple of morphisms, a 2-isomorphism, the
		associator:
		\[ a(g_1,g_2,g_3) \maps g_1 g_2 g_3 \to g_1 g_2 g_3, \]
		given by the 3-cocycle $a \maps G^3 \to H$, where by a slight
		abuse of definitions we think of this 2-isomorphism as living
		in $H$ rather than $G \times H$, because the source (and
		target) are understood to be $g_1 g_2 g_3$.
	\item The left and right unitors are trivial.
\end{itemize}
A \define{slim Lie 2-group} is one of this form. When $H = \R$, we write simply
$\String_a(G)$ for the above Lie 2-group. It remains to check that this is, in
fact, a Lie 2-group:

\begin{prop} \label{prop:Lie2group}
	$\String_a(G,H)$ is a Lie 2-group: a smooth bicategory with one
	object in which all morphisms have smooth weak inverses and all
	2-morphisms have smooth strict inverses.
\end{prop}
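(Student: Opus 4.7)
The plan is to verify each clause of Definition \ref{def:smoothbicat} in turn, together with the inverse requirements, and observe that every nontrivial commutative diagram reduces to either the cocycle condition on $a$ or to elementary group-theoretic facts about $G$ and $G \ltimes H$.

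First I would check the underlying smooth bicategorical data. The object manifold is a point, the morphism manifold is $G$, and the 2-morphism manifold is $G \times H$; source, target, and identity-assigning maps are respectively the trivial map, the trivial map, and $\ast \mapsto 1_G$ on one level, and the projection $G \times H \to G$, the projection, and $g \mapsto (g, 0)$ on the other---all manifestly smooth. The category $\underline{\Mor} \String_a(G,H)$ has object set $G$ and morphism set $G \times H$, with vertical composition $(g,h) \circ (g, h') = (g, h+h')$; its category axioms follow immediately from the fact that $H$ is an abelian group with $0$ as identity. Horizontal composition on morphisms is group multiplication in $G$, on 2-morphisms is the semidirect product multiplication in $G \ltimes H$, i.e.\ $(g,h) \cdot (g', h') = (gg', h + gh')$; the functoriality of $\cdot \maps \underline{\Mor} B \times_{B_0} \underline{\Mor} B \to \underline{\Mor} B$ reduces to the identity $(h + gh') + (k + gk') = (h + k) + g(h' + k')$, which holds because the $G$-action on $H$ is by additive automorphisms.

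Next I would handle the associator, unitors, and coherence. The unitors are trivial and the triangle identity then holds automatically once we check that $a$ is normalized, since $r(f) \cdot 1_g$, $1_f \cdot l(g)$, and $a(f,1,g)$ all become identity 2-morphisms. The heart of the proof is the pentagon identity. Reading off the edges of the pentagon, their composite in $H$ evaluates to
\[
a(g_2, g_3, g_4) - a(g_1 g_2, g_3, g_4) + a(g_1, g_2 g_3, g_4) - a(g_1, g_2, g_3 g_4) + g_1 a(g_2, g_3, g_4),
\]
and setting this to zero is exactly the equation $(da)(g_1, g_2, g_3, g_4) = 0$ from the formula for the Lie group coboundary recalled earlier. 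So the pentagon identity is literally the cocycle condition $da = 0$, which is given. The naturality squares for $a$ as a smooth natural isomorphism are trivial because in $\String_a(G,H)$ every 2-morphism is an automorphism and vertical composition in $H$ is abelian.

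Finally I would exhibit the inverses. For 2-morphisms, the smooth map $\inv_2 \maps G \times H \to G \times H$, $(g, h) \mapsto (g, -h)$, provides strict two-sided inverses under vertical composition. For morphisms, the smooth map $\inv_1 \maps G \to G$, $g \mapsto g^{-1}$, together with smooth choices $e(g) = u(g) = (1, 0)$ (viewed in $G \times H$) yield weak inverses: since $g \cdot g^{-1} = g^{-1} \cdot g = 1$ on the nose, the weakening 2-isomorphisms can be taken to be the identity. I do not foresee a genuine obstacle here, as every axiom reduces either to a triviality or to the assumed cocycle condition; the only step requiring care is bookkeeping the signs and the $G$-action so that the pentagon assembles correctly into $(da)(g_1, g_2, g_3, g_4)$ and the normalization of $a$ is used for the triangle.
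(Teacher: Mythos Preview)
Your approach is essentially the same as the paper's: verify the easy bicategory axioms directly, observe that the triangle identity reduces to $a(g_1,1,g_2)=0$ (normalization), and that the pentagon identity reduces to the 3-cocycle condition $da=0$; then note that inversion in $G$ and negation in $H$ supply smooth inverses. The paper is terser about functoriality of horizontal composition and about the weak-inverse data, but the logical content matches.

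One concrete slip to fix: your displayed pentagon expression
\[
a(g_2, g_3, g_4) - a(g_1 g_2, g_3, g_4) + a(g_1, g_2 g_3, g_4) - a(g_1, g_2, g_3 g_4) + g_1 a(g_2, g_3, g_4)
\]
has $a(g_2,g_3,g_4)$ appearing twice and is missing $a(g_1,g_2,g_3)$. Tracing the two sides of the pentagon in $H$ (using $1_{g_1}\cdot a(g_2,g_3,g_4) = g_1 a(g_2,g_3,g_4)$ and $a(g_1,g_2,g_3)\cdot 1_{g_4} = a(g_1,g_2,g_3)$ from the semidirect product) gives
\[
a(g_1, g_2, g_3 g_4) + a(g_1 g_2, g_3, g_4) = g_1 a(g_2, g_3, g_4) + a(g_1, g_2 g_3, g_4) + a(g_1,g_2,g_3),
\]
which is exactly $da(g_1,g_2,g_3,g_4)=0$. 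You correctly anticipated that this bookkeeping is the only delicate point; just carry it out cleanly.
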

In brief, we prove this by showing that the 3-cocycle condition implies the one
nontrivial axiom for this bicategory: the pentagon identity.
\begin{proof}
	First, let us dispense with the easier items from our definition. For
	$\String_a(G,H)$, it is easy to see that the morphisms and 2-morphism
	form a smooth category under vertical composition, that horizontal
	composition is a smooth functor, and that the associator defines a
	natural transformation. Since the left and right unitors are the
	identity, and the triangle identity just says $a(g_1,1,g_2) = 1$.  Or,
	written additively, $a(g_1,1,g_2) = 0$. Because $a$ is normalized, this
	is automatic.

	To check that $\String_a(G,H)$ is really a bicategory, it
	therefore remains to check the pentagon identity. This says that the
	following automorphisms of $g_1 g_2 g_3 g_4$ are equal:
	\[ a(g_1, g_2, g_3 g_4) \circ a(g_1 g_2, g_3, g_4) = (1_{g_1} \cdot a(g_2, g_3, g_4)) \circ a(g_1, g_2 g_3, g_4) \circ (a(g_1,g_2,g_3) \cdot 1_{g_4}) \]
	Or, using the definition of vertical composition:
	\[ a(g_1, g_2, g_3 g_4) + a(g_1 g_2, g_3, g_4) = (1_{g_1} \cdot a(g_2, g_3, g_4)) + a(g_1, g_2 g_3, g_4) + (a(g_1,g_2,g_3) \cdot 1_{g_4}) \]
	Finally, use the definition of the dot operation for 2-morphisms, as
	the semidirect product:
	\[ a(g_1, g_2, g_3 g_4) + a(g_1 g_2, g_3, g_4) = g_1 a(g_2, g_3, g_4)) + a(g_1, g_2 g_3, g_4) + a(g_1,g_2,g_3). \] 
	This is the 3-cocycle condition---it holds because $a$ is a 3-cocycle.

	So, $\String_a(G,H)$ is a bicategory. It is smooth because everything
	in sight is smooth: $G$, $H$, the source, target, identity-assigning,
	and composition maps, and the associator $a \maps G^3 \to H$. And it is
	a Lie 2-group: the morphisms in $G$ and 2-morphisms in $H$ all have
	smooth strict inverses given by inversion in the Lie groups $G$ and
	$H$.
\end{proof}

In fact, we can say something a bit stronger about $\String_a(G,H)$, if we let
$a$ be any normalized $H$-valued 3-cochain, rather requiring it to be a
cocycle. In this case, $\String_a(G,H)$ is a Lie 2-group if and only if $a$ is
a 3-cocycle, because $a$ satisfies the pentagon identity if and only if it is a
cocycle.

\section{Lie \emph{n}-superalgebras from Lie superalgebra cohomology} \label{sec:Lie-n-superalgebras}

Having sketched the construction of Lie $n$-groups from smooth group
$(n+1)$-cocycles, we now turn to a parallel construction of Lie $n$-algebras.
As one might expect from experience with ordinary Lie groups and Lie algebras,
Lie $n$-algebras are much easier than their Lie $n$-group counterparts, and it
is straightforward to give the definition for all $n$. It also straightforward
to incorporate the `super' case immediately.

As we touched on in the Introduction, a Lie $n$-superalgebra is a certain kind of
$L_\infty$-superalgebra, which is the super version of an $L_\infty$-algebra.
This last is a chain complex, $V$:
\[ V_0 \stackrel{d}{\longleftarrow} V_1 \stackrel{d}{\longleftarrow} V_2 \stackrel{d}{\longleftarrow} \cdots \]
equipped with a structure like that of a Lie algebra, but where the Jacobi
identity only holds \emph{up to chain homotopy}, and this chain homotopy
satifies its own identity up to chain homotopy, and so on. For an
$L_\infty$-\emph{super}algebra, each term in the chain complex has a
$\Z_2$-grading, and we introduce extra signs. A Lie $n$-superalgebra is an
$L_\infty$-superalgebra in which only the first $n$ terms are nonzero, starting
with $V_0$.

In the last section, we sketched how a Lie group $(n+1)$-cocycle:
\[ f \maps G^n \to H \]
can be used to construct an especially simple Lie $n$-group, with 1-morphisms
forming the group $G$, $n$-morphisms forming the group $G \ltimes H$, and all
$k$-morphisms in between trivial. In this section, we shall describe how a Lie
superalgebra $(n+1)$-cocycle:
\[ \omega \maps \Lambda^n \g \to \h \]
can be used to construct an especially simple Lie $n$-superalgebra, defined on
a chain complex with $\g$ in grade 0, $\h$ in grade $n-1$, and all terms in
between trivial. To make this precise, we had better start with some
definitions.

To begin at the beginning, a \define{super vector space} is a $\Z_2$-graded
vector space $V = V_0 \oplus V_1$ where $V_0$ is called the \define{even} part,
and $V_1$ is called the \define{odd} part.  There is a symmetric monoidal
category $\SuperVect$ which has:
\begin{itemize}
	\item $\Z_2$-graded vector spaces as objects;
	\item Grade-preserving linear maps as morphisms;
	\item A tensor product $\tensor$ that has the following grading: if $V
		= V_0 \oplus V_1$ and $W = W_0 \oplus W_1$, then $(V \tensor
		W)_0 = (V_0 \tensor W_0) \oplus (V_1 \tensor W_1)$ and 
              $(V \tensor W)_1 = (V_0 \tensor W_1) \oplus (V_1 \tensor W_0)$;
	\item A braiding
		\[ B_{V,W} \maps V \tensor W \to W \tensor V \]
		defined as follows: $v \in V$ and $w \in W$ are of grade $|v|$
		and $|w|$, then
		\[ B_{V,W}(v \tensor w) = (-1)^{|v||w|} w \tensor v. \]
\end{itemize}
The braiding encodes the `the rule of signs': in any calculation, when two odd
elements are interchanged, we introduce a minus sign. We can see this in the
axioms of a Lie superalgebra, which resemble those of a Lie algebra with some
extra signs.

Briefly, a \define{Lie superalgebra} $\g$ is a Lie algebra in the category of
super vector spaces. More concretely, it is a super vector space $\g = \g_0
\oplus \g_1$, equipped with a graded-antisymmetric bracket:
\[ [-,-] \maps \Lambda^2 \g \to \g , \]
which satisfies the Jacobi identity up to signs:
\[ [X, [Y,Z]] = [ [X,Y], Z] + (-1)^{|X||Y|} [Y, [X, Z]]. \]
for all homogeneous $X, Y, Z \in \g$. Note how we have introduced an extra
minus sign upon interchanging $X$ and $Y$, exactly as the rule of signs says we
should. 

It is straightforward to generalize the cohomology of Lie algebras, as defined
by Chevalley--Eilenberg \cite{ChevalleyEilenberg,AzcarragaIzquierdo}, to Lie
superalgebras \cite{Leites}.  Suppose $\g$ is a Lie superalgebra and $\h$ is a
representation of $\g$. That is, $\h$ is a supervector space equipped with a Lie
superalgebra homomorphism $\rho \maps \g \to \gl(\h)$.  The \define{cohomology
of $\g$ with coefficients in $\h$} is computed using the \define{Lie
superalgebra cochain complex}, which consists of graded-antisymmetric
$p$-linear maps at level $p$:
\[ C^p(\g,\h) = \left\{ \omega \maps \Lambda^p \g \to \h \right\} . \]
We call elements of this set \define{$\h$-valued $p$-cochains on $\g$}. Note
that the $C^p(\g, \h)$ is a super vector space, in which grade-preserving
elements are even, while grade-reversing elements are odd.  When $\h = \R$, the
trivial representation, we typically omit it from the cochain complex and all
associated groups, such as the cohomology groups. Thus, we write
$C^\bullet(\g)$ for $C^\bullet(\g,\R)$.

Next, we define the coboundary operator $d \maps C^p(\g, \h) \to C^{p+1}(\g,
\h)$. Let $\omega$ be a homogeneous $p$-cochain and let $X_1, \dots, X_{p+1}$ be
homogeneous elements of $\g$. Now define:
\begin{eqnarray*}
& & d\omega(X_1, \dots, X_{p+1}) = \\ 
& & \sum^{p+1}_{i=1} (-1)^{i+1} (-1)^{|X_i||\omega|} \epsilon^{i-1}_1(i) \rho(X_i) \omega(X_1, \dots, \hat{X}_i, \dots, X_{p+1}) \\
& & + \sum_{i < j} (-1)^{i+j} (-1)^{|X_i||X_j|} \epsilon^{i-1}_1(i) \epsilon^{j-1}_1(j) \omega([X_i, X_j], X_1, \dots, \hat{X}_i, \dots, \hat{X}_j, \dots X_{p+1}) .
\end{eqnarray*}
Here, $\epsilon^j_i(k)$ is shorthand for the sign one obtains by moving $X_k$
through $X_i, X_{i+1}, \dots, X_j$. In other words,
\[ \epsilon^j_i(k) = (-1)^{|X_k|(|X_i| + |X_{i+1}| + \dots + |X_j|)}. \]

Following the usual argument for Lie algebras, one can check that:

\begin{prop}
	The Lie superalgebra coboundary operator $d$ satisfies $d^2 = 0$.
\end{prop}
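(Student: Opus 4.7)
The plan is to imitate the classical Chevalley--Eilenberg proof that $d^2 = 0$, while tracking the additional signs that the rule of signs introduces. Writing $d = d_\rho + d_b$ where $d_\rho$ is the sum over $i$ involving $\rho(X_i)$ and $d_b$ is the sum over $i<j$ involving the bracket $[X_i, X_j]$, we have
\[ d^2 = d_\rho^2 + (d_\rho d_b + d_b d_\rho) + d_b^2, \]
and I would show that each of the three groupings vanishes.

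To organize the bookkeeping, I would fix homogeneous $X_1, \ldots, X_{p+2} \in \g$, expand $d^2\omega(X_1, \ldots, X_{p+2})$, and sort the resulting terms into classes according to what each application of $d$ contributed: (i) two nested actions $\rho(X_i)\rho(X_j)\omega(\ldots)$; (ii) an action of a bracket $\rho([X_i, X_j])\omega(\ldots)$; (iii) an action outside a bracket $\rho(X_i)\omega([X_j, X_k], \ldots)$; and (iv) a double bracket, either nested as $\omega([[X_i, X_j], X_k], \ldots)$ or disjoint as $\omega([X_i, X_j], [X_k, X_l], \ldots)$. In class (i), for each unordered pair $\{i, j\}$ the two orderings combine into the graded commutator $\rho(X_i)\rho(X_j) - (-1)^{|X_i||X_j|}\rho(X_j)\rho(X_i)$, which equals $\rho([X_i, X_j])$ since $\rho$ is a Lie superalgebra homomorphism; this cancels the corresponding class (ii) term. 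Each class (iii) term arises twice --- once from an outer $d_\rho$ applied to an inner $d_b$, and once from an outer $d_b$ applied to an inner $d_\rho$ whose action selects an original $X_k$ rather than the bracket $[X_i,X_j]$ --- with opposite signs, so they cancel in pairs. In class (iv), the disjoint-bracket terms cancel upon swapping the two pairs by the graded antisymmetry of $\omega$, while the nested-bracket terms group into triples $\{i, j, k\}$ whose three orderings sum to zero by the graded Jacobi identity for $\g$.

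The main obstacle is the sign bookkeeping: each cancellation above requires that the combinatorial factors $(-1)^{i+j}$, the permutation signs $\epsilon^{i-1}_1(i)\epsilon^{j-1}_1(j)$, and the parity signs involving $|X_i|$, $|X_j|$, and $|\omega|$ conspire correctly after reindexing --- and, crucially, when $\omega$ inside $d(d\omega)$ is replaced by the intermediate cochain $d\omega$, the parity of that intermediate cochain (which shifts according to whether $d_\rho$ or $d_b$ was applied) must be propagated correctly. A conceptually cleaner route, which I would sketch alongside the explicit computation, is to observe that $\SuperVect$ is a symmetric monoidal category in which internal Lie algebras are precisely Lie superalgebras; then the identity $d^2 = 0$ is simply the non-super Chevalley--Eilenberg statement reinterpreted inside $\SuperVect$, with all signs produced automatically by the braiding $B_{V,W}(v \tensor w) = (-1)^{|v||w|} w \tensor v$. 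This reduces the verification to matching the explicit formula for $d$ against the categorical definition, which is a finite calculation.
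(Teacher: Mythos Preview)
Your proposal is correct and aligns with the paper's approach: the paper offers no proof beyond the remark ``Following the usual argument for Lie algebras, one can check that'' the proposition holds, and the Chevalley--Eilenberg computation you outline is precisely that usual argument with the super signs tracked. Your sketch is in fact far more detailed than anything the paper provides, and both the explicit term-by-term cancellation and the symmetric-monoidal reinterpretation in $\SuperVect$ are sound ways to fill in what the paper leaves implicit.
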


\noindent
We thus say an $\h$-valued $p$-cochain $\omega$ on $\g$ is an
\define{$p$-cocycle} or \define{closed} when $d \omega = 0$, and an
\define{$p$-coboundary} or \define{exact} if there exists an $(p-1)$-cochain
$\theta$ such that $\omega = d \theta$.   Every $p$-coboundary is an
$p$-cocycle, and we say an $p$-cocycle is \define{trivial} if it is a
coboundary.  We denote the super vector spaces of $p$-cocycles and
$p$-coboundaries by $Z^p(\g,\h)$ and $B^p(\g,\h)$ respectively.  The $p$th \define{ 
Lie superalgebra cohomology of $\g$ with coefficients in $\h$}, denoted
$H^p(\g,\h)$ is defined by 
\[ H^p(\g,\h) = Z^p(\g,R)/B^p(\g,\h). \]
This super vector space is nonzero if and only if there is a nontrivial
$p$-cocycle. In what follows, we shall be especially concerned with the even
part of this super vector space, which is nonzero if and only if there is a
nontrivial even $p$-cocycle. Our motivation for looking for even cocycles is
simple: these parity-preserving maps can regarded as morphisms in the category
of super vector spaces, which is crucial for the construction in Theorem
\ref{trivd} and everything following it.

Suppose $\g$ is a Lie superalgebra with a representation on a supervector space
$\h$.  Then we shall prove that an even $\h$-valued $(n+1)$-cocycle $\omega$ on
$\g$ lets us construct an Lie $n$-superalgebra, called $\brane_\omega(\g,\h)$,
of the following form:
\[  
\g \stackrel{d}{\longleftarrow} 0 
\stackrel{d}{\longleftarrow} \dots 
\stackrel{d}{\longleftarrow} 0 
\stackrel{d}{\longleftarrow} \h .
\]

Now let us make all of these ideas precise. In what follows, we shall use
\define{super chain complexes}, which are chain complexes in the category
SuperVect of $\Z_2$-graded vector spaces:
\[  V_0 \stackrel{d}{\longleftarrow}
    V_1 \stackrel{d}{\longleftarrow}
    V_2 \stackrel{d}{\longleftarrow} \cdots \]
Thus each $V_p$ is $\Z_2$-graded and $d$ preserves this grading.

There are thus two gradings in play: the $\Z$-grading by
\define{degree}, and the $\Z_2$-grading on each vector space, which we
call the \define{parity}. We shall require a sign convention to
establish how these gradings interact. If we consider an object of odd
parity and odd degree, is it in fact even overall?  By convention, we
assume that it is. That is, whenever we interchange something of
parity $p$ and degree $q$ with something of parity $p'$ and degree
$q'$, we introduce the sign $(-1)^{(p+q)(p'+q')}$. We shall call the
sum $p+q$ of parity and degree the \define{overall grade}, or when it
will not cause confusion, simply the grade. We denote the overall
grade of $X$ by $|X|$.

We require a compressed notation for signs. If $x_{1}, \ldots, x_{n}$ are
graded, $\sigma \in S_{n}$ a permutation, we define the \define{Koszul sign}
$\epsilon (\sigma) = \epsilon(\sigma; x_{1}, \dots, x_{n})$ by 
\[ x_{1} \cdots x_{n} = \epsilon(\sigma; x_{1}, \ldots, x_{n}) \cdot x_{\sigma(1)} \cdots x_{\sigma(n)}, \]
the sign we would introduce in the free graded-commutative algebra generated by
$x_{1}, \ldots, x_{n}$. Thus, $\epsilon(\sigma)$ encodes all the sign changes
that arise from permuting graded elements. Now define:
\[ \chi(\sigma) = \chi(\sigma; x_{1}, \dots, x_{n}) := \textrm{sgn} (\sigma) \cdot \epsilon(\sigma; x_{1}, \dots, x_{n}). \]
Thus, $\chi(\sigma)$ is the sign we would introduce in the free
graded-anticommutative algebra generated by $x_1, \dots, x_n$.

Yet we shall only be concerned with particular permutations. If $n$ is a
natural number and $1 \leq j \leq n-1$ we say that $\sigma \in S_{n}$ is an
\define{$(j,n-j)$-unshuffle} if
\[ \sigma(1) \leq\sigma(2) \leq \cdots \leq \sigma(j) \hspace{.2in} \textrm{and} \hspace{.2in} \sigma(j+1) \leq \sigma(j+2) \leq \cdots \leq \sigma(n). \] 
Readers familiar with shuffles will recognize unshuffles as their inverses. A
\emph{shuffle} of two ordered sets (such as a deck of cards) is a permutation
of the ordered union preserving the order of each of the given subsets. An
\emph{unshuffle} reverses this process. We denote the collection of all
$(j,n-j)$ unshuffles by $S_{(j,n-j)}$.

The following definition of an $L_{\infty}$-algebra was formulated by
Schlessinger and Stasheff in 1985 \cite{SS}:

\begin{defn} \label{L-alg} An
\define{$L_{\infty}$-superalgebra} is a graded vector space $V$
equipped with a system $\{l_{k}| 1 \leq k < \infty\}$ of linear maps
$l_{k} \maps V^{\otimes k} \rightarrow V$ with $\deg(l_{k}) = k-2$
which are totally antisymmetric in the sense that
\begin{eqnarray}
   l_{k}(x_{\sigma(1)}, \dots,x_{\sigma(k)}) =
   \chi(\sigma)l_{k}(x_{1}, \dots, x_{n})
\label{antisymmetry}
\end{eqnarray}
for all $\sigma \in S_{n}$ and $x_{1}, \dots, x_{n} \in V,$ and,
moreover, the following generalized form of the Jacobi identity
holds for $0 \le n < \infty :$
\begin{eqnarray}
   \displaystyle{\sum_{i+j = n+1}
   \sum_{\sigma \in S_{(i,n-i)}}
   \chi(\sigma)(-1)^{i(j-1)} l_{j}
   (l_{i}(x_{\sigma(1)}, \dots, x_{\sigma(i)}), x_{\sigma(i+1)},
   \ldots, x_{\sigma(n)}) =0,}
\label{megajacobi}
\end{eqnarray}
where the inner summation is taken over all $(i,n-i)$-unshuffles with $i
\geq 1.$
\end{defn}

A \define{Lie $n$-superalgebra} is an $L_\infty$-superalgebra where only the
first $n$ terms of the chain complex are nonzero. A \define{slim Lie
$n$-superalgebra} is a Lie $n$-superalgebra $V$ with only two nonzero
terms, $V_0$ and $V_{n-1}$, and $d=0$. Given an $\h$-valued $(n+1)$-cocycle
$\omega$ on a Lie superalgebra $\g$, we can construct a slim Lie
$n$-superalgebra $\brane_\omega(\g,\h)$ with:
\begin{itemize}
	\item $\g$ in grade 0, $\h$ in grade $n-1$, and trivial super vector
		spaces in between,
	\item $d = 0$, 
	\item $l_2 \maps (\g \oplus \h)^{\tensor 2} \to \g \oplus \h$ given by:
		\begin{itemize}
			\item the Lie bracket on $\g \tensor \g$,
			\item the action on $\g \tensor \h$,
			\item zero on $\h \tensor \h$, as required by grading.
		\end{itemize}
	\item $l_{n+1} \maps (\g \oplus \h)^{\tensor(n+1)} \to \g \oplus \h$
		given by the cocycle $\omega$ on $\g^{\tensor(n+1)}$, and
		zero otherwise, as required by grading,
	\item all other maps $l_k$ zero, as required by grading.
\end{itemize}

It remains to prove that this is, in fact, a Lie $n$-superalgebra. Indeed,
more is true: every slim Lie $n$-superalgebra is precisely of this form.

\begin{thm} \label{trivd}
	$\brane_\omega(\g,\h)$ is a Lie $n$-superalgebra. Conversely, every
	slim Lie $n$-superalgebra is of form $\brane_\omega(\g,\h)$ for some
	Lie superalgebra $\g$, representation $\h$, and $\h$-valued
	$(n+1)$-cocycle $\omega$ on $\g$.
\end{thm}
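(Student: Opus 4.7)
The plan is to prove both directions by unpacking Definition~\ref{L-alg} and using the sparseness of the chain complex in a slim Lie $n$-superalgebra to reduce the generalized Jacobi identities to data we already understand.

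For the forward direction, I would first verify graded antisymmetry of the proposed maps $l_2$ and $l_{n+1}$, which is automatic: $l_2$ on $\g \otimes \g$ is the graded-antisymmetric bracket of $\g$; $l_2$ on $\g \otimes \h$ is the representation $\rho$; and $l_{n+1}$ factors through $\Lambda^{n+1}\g$ by construction. Next I would verify the generalized Jacobi identity \eqref{megajacobi}. The key point is that degree considerations annihilate almost every term: $l_1 = d = 0$ kills every summand with $i = 1$ or $j = 1$; the relation $\deg(l_k) = k - 2$ combined with $V_p = 0$ for $p \notin \{0, n-1\}$ forces $l_k = 0$ unless $k \in \{2, n+1\}$, and forces $l_2$ to vanish on $\h \otimes \h$ since $2(n-1) > n-1$ for $n \geq 2$. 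The only instances with genuine content are (a) $N = 3$ inputs with $(i,j) = (2,2)$, which on $\g^{\otimes 3}$ is the graded Jacobi identity of $\g$ and on $\g^{\otimes 2} \otimes \h$ is the statement that $\rho$ is a representation; and (b) $N = n+2$ inputs with $(i,j) \in \{(2, n+1), (n+1, 2)\}$ on $\g^{\otimes(n+2)}$, which after a sign count is precisely the Chevalley--Eilenberg cocycle condition $d\omega = 0$. All three hold by hypothesis.

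For the converse, let $V$ be any slim Lie $n$-superalgebra, so $V_0 = \g$, $V_{n-1} = \h$, and $d = 0$. The same degree considerations force $l_k = 0$ unless $k \in \{2, n+1\}$, force $l_2$ to vanish on $\h \otimes \h$, and force $l_{n+1}$ to vanish on any tensor factor containing an element of $\h$. I would then extract $(\g, \rho, \omega)$ by taking $[-,-] \maps \g^{\otimes 2} \to \g$ and $\rho \maps \g \otimes \h \to \h$ to be the respective restrictions of $l_2$, and $\omega$ to be the restriction of $l_{n+1}$ to $\g^{\otimes(n+1)}$. Graded antisymmetry of the $l_k$ ensures that the bracket is graded-antisymmetric, $\rho$ has the correct equivariance, and $\omega$ descends to $\Lambda^{n+1}\g$. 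Reading \eqref{megajacobi} backwards in the two nontrivial cases (a) and (b) then says exactly that $\g$ is a Lie superalgebra under $[-,-]$, that $\h$ is a $\g$-representation under $\rho$, and that $\omega$ is a cocycle --- hence $V \iso \brane_\omega(\g,\h)$.

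The main technical obstacle is the sign bookkeeping in case (b): one must match the Koszul signs $\chi(\sigma)$ and the factor $(-1)^{i(j-1)}$ in the $L_\infty$ identity against the signs $\epsilon^{i-1}_1(i)$, $\epsilon^{j-1}_1(j)$, and $(-1)^{i+j}$ appearing in the Chevalley--Eilenberg differential for Lie superalgebras. Beyond this verification, the theorem is a direct unpacking of definitions; the sign matching is where the correspondence between $L_\infty$-structures and cocycles is actually forged.
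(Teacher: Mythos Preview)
Your proposal is correct and follows exactly the approach the paper indicates: the paper's own proof is a citation to Baez--Huerta \cite{BaezHuerta:susy2} (generalizing Baez--Crans \cite{BaezCrans}), and the remark immediately following the proof spells out the key point you identified --- that the generalized Jacobi identity \eqref{megajacobi} collapses, by degree reasons, to the cocycle condition $d\omega = 0$ together with the Lie superalgebra and representation axioms. Your degree-counting to eliminate all $l_k$ with $k \notin \{2, n+1\}$ and to force $l_2|_{\h \otimes \h} = 0$ and $l_{n+1}$ to vanish on any $\h$-input is exactly right, and your identification of the sign bookkeeping in case (b) as the only genuine work matches what the referenced proofs do.
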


\begin{proof}
	See the proof of Theorem 17 in our previous paper
	\cite{BaezHuerta:susy2}, which is a straightforward generalization of
	the proof found in Baez--Crans \cite{BaezCrans} to the super case.
\end{proof}

For the 2-group $\String_a(G,H)$, we noted that $a$ is cocycle if and only if
$a$ satisfies the pentagon identity. Likewise, the key to above theorem is
recognizing that $\omega$ is a Lie superalgebra cocycle if and only if the
generalized Jacobi identity, Equation \ref{megajacobi}, holds. By analogy with
2-groups, when $n=2$, we will also write $\strng_\omega(\g,\h)$ for the Lie
2-superalgebra constructed from the 3-cocycle $\omega$, and when $\h$ is the
trivial representation $\R$, we omit it. In the next section, we give some
examples of these objects.

\subsection{Examples of slim Lie \emph{n}-superalgebras}

\subsubsection{The string Lie 2-algebra} \label{sec:string2alg}

For $n \geq 3$, consider the Lie algebra $\so(n)$ of infinitesimal rotations of
$n$-dimensional Euclidean space. This matrix Lie algebra has Killing form given
by the trace, $\langle X , Y \rangle = \tr(XY)$, and an easy calculation shows
that 
\[ j = \langle - , [-, -] \rangle \] 
is a 3-cocycle on $\so(n)$. We call $j$ the \define{canonical 3-cocycle} on
$\so(n)$. Using $j$, we get a Lie 2-algebra $\strng_j(\so(n))$, which we denote
simply by $\strng(n)$. We call this the \define{string Lie 2-algebra}. First
defined by Baez--Crans \cite{BaezCrans}, it is so-named because it turned out
to be intimately related to the string group, $\String(n)$, the topological
group obtained from $\SO(n)$ by killing the 1st and 3rd homotopy groups. For a
description of this relationship, as well as the construction of Lie 2-groups
which integrate $\strng(n)$, see the papers of
Baez--Crans--Schreiber--Stevenson \cite{BCSS}, Henriques \cite{Henriques}, and
Schommer-Pries \cite{SchommerPries}.

\subsubsection{The Heisenberg Lie 2-algebra} \label{sec:heisenberg2alg}

As we mentioned earlier, central extensions of Lie algebras are classified by
second cohomology. A famous example of this is the `Heisenberg Lie algebra', so
named because it mimics the canonical commutation relations in quantum
mechanics. Here we present a Lie 2-algebra generalization: the `Heisenberg Lie
2-algebra'.

Consider the abelian Lie algebra of translations in position-momentum space:
\[ \R^2 = \mathrm{span}(p,q). \]
Here, $p$ and $q$ are our names for the standard basis, the usual letters for
momentum and position in physics. Up to rescaling, this Lie algebra has a
single, nontrivial 2-cocycle:
\[ p^* \wedge q^* \in \Lambda^2(\R^2), \]
where $p^*$ and $q^*$ comprise the dual basis. Thus it has a nontrivial central
extension:
\[ 0 \to \R \to \mathfrak{H} \to \R^2 \to 0. \]

This central extension is called the \define{Heisenberg Lie algebra}. As a
vector space, $\mathfrak{H} = \R^3$, and we call the basis vectors $p,
q$ and $z$, where $z$ is central. When chosen with suitable normalization,
they satisfy the relations:
\[ [p,q] = z, \quad [p,z] = 0, \quad [q,z] = 0. \]
These are the same as the canonical commutation relations in quantum mechanics,
except that the generator $z$ would usually be a number, $-i \hbar$. It is from
this parallel that the Heisenberg Lie algebra derives its physical
applications: a representation of $\mathfrak{H}$ is exactly a way of choosing
linear operators $p$, $q$ and $z$ on a Hilbert space that satisfy the canonical
commutation relations.

With Lie 2-algebras, we can repeat the process that yielded the Heisenberg Lie
algebra to obtain a higher structure. Before we needed a 2-cocycle, but now we
need a 3-cocycle. Indeed, letting $p^*$, $q^*$ and $z^*$ be the dual basis of
$\mathfrak{H}^*$, it is easy to check that $\gamma = p^* \wedge q^* \wedge z^*$
is a nontrivial 3-cocycle on $\mathfrak{H}$. Thus there is a Lie 2-algebra
$\strng_\gamma(\mathfrak{H})$, the \define{Heisenberg Lie 2-algebra}, which we
denote by $\mathfrak{Heisenberg}$.  Later, in Section \ref{sec:integrating}, we
will see how to integrate this Lie 2-algebra to a Lie 2-group.

We suspect the Heisenberg Lie 2-algebra, like its Lie algebra cousin, is also
important for physics. We also suspect that the pattern continues: the
Heisenberg Lie 2-algebra may admit a `4-cocycle', and a central extension to a
Lie 3-algebra. However, since we have not defined the cohomology of Lie
$n$-algebras \cite{Penkava}, we do not pursue this here.

\subsubsection{The supertranslation Lie 2-superalgebras} \label{sec:trans2alg}

This entire project began with the following puzzle: the classical superstring
makes sense only in spacetimes of dimensions $n+2=3$, 4, 6 and 10, each of
these numbers two higher than the dimensions of the normed division algebras.
From the physics literature \cite{Duff,GreenSchwarz}, we see this is because a
certain spinor identity holds in these spacetime dimensions and no others,
namely the \define{3-$\psi$'s rule}:
\[ (\psi \cdot \psi) \psi = 0 \]
for all spinors $\psi \in S_+$. Here, the dot denotes an operation that takes
two spinors and and outputs a vector, which in the above identity acts on the
original spinor. In notation more typical of the physics literature, this would
usually be written as:
\[ (\psibar \gamma^\mu \psi) \gamma_\mu \psi = 0 , \]
though it takes a number of different guises. For instance, one equivalent form
is to say that all spinors square to null vectors in these dimensions:
\[ ||\psi \cdot \psi||^2 = 0 . \]
See Huerta \cite{Huerta}, Section 2.4 for a full discussion.

We can understand this identity in terms of division algebras, as reviewed in
the first paper of this series \cite{BaezHuerta:susy1}. In some sense, this
solves the puzzle we began with, but leaves us with another: what is the
meaning of the 3-$\psi$'s rule itself? The answer, as we described in our second paper
\cite{BaezHuerta:susy2}, lies in Lie algebra cohomology: it is a \emph{cocycle
condition}.

To understand this, we need to introduce supersymmetry.  In any dimension, a
symmetric bilinear intertwining operator that eats two spinors and spits out a
vector gives rise to a `super-Minkowski spacetime' \cite{Deligne}.  The
infinitesimal translation symmetries of this object form a Lie superalgebra,
which we call the `supertranslation algebra', $\T$.  The cohomology of this Lie
superalgebra is interesting and apparently rather subtle \cite{Brandt, MSX,
MSX2}.  We shall see that its 3rd cohomology is nontrivial in dimensions
$n+2=3$, 4, 6 and 10, thanks to the 3-$\psi$'s rule.

For arbitrary superspacetimes, the cohomology of $\T$ is not explicitly known.
Techniques to compute it have been described by Brandt \cite{Brandt}, who
applied them in dimension 5 and below. Movshev, Schwarz and Xu \cite{MSX, MSX2}
showed how to augment these techniques using computer algebra systems, such as
LiE \cite{LiE}, and fully describe the cohomology in dimensions less than 11 in
this way.

Based on the work of these authors, it seems likely that the 3rd cohomology of
$\T$ is nontrivial in sufficiently large dimensions. We conjecture, however,
that dimensions $n+2$ are the only ones with \emph{Lorentz-invariant}
3-cocycles. Exploratory calculations with LiE bare this conjecture out, but the
general answer appears to be unknown.

Let us see how division algebras get into the game by using them to construct
$\T$ in the relevant dimensions---3, 4, 6 and 10. Recall that, by a classic
result of Hurwitz \cite{Hurwitz}, there are precisely four normed division
algebras: the real numbers, $\R$, the complex numbers, $\C$, the quaternions,
$\H$, and the octonions $\O$. These have dimensions $n = 1$, 2, 4, and 8,
respectively. 

Most properties of the division algebras are familiar from working with complex
numbers. Each division algebra $\K$ is \define{normed}, in the sense that it is
equipped with a norm $|\cdot|$ satisfying:
\[ |ab| = |a||b| . \]
Each division algebra also has a \define{conjugation}, a linear map $* \maps \K
\to \K$ satisfying:
\[ (ab)^* = b^* a^*, \quad a^{**} = a. \]
We can use this conjugation to write the norm:
\[ |a|^2 = aa^* = a^* a . \]
Not familiar from complex numbers, however, is nonassociativity. While $\R$,
$\C$, and $\H$ are all associative, the octonions, $\O$, are not. However, they
come close. They are \define{alternative}: every subalgebra generated by two
elements is associative. Indeed, every normed division algebra is alternative.

We can now systematically use the normed division algebra $\K$ of dimension $n$
to construct the superstranlation algebra $\T$ for spacetime of dimension
$n+2$. Most of this construction is well-known \cite{Baez:octonions,
ChungSudbery, KugoTownsend,  ManogueSudbery, Sudbery}, though we
learned it from Manogue and Schray \cite{SchrayManogue,Schray}.
First, the \define{vector representation} $V$ of $\Spin(n+1,1)$ is defined to
be the set of $2 \times 2$ hermitian matrices over $\K$:
\[ V = \left\{ \left( \begin{matrix} t + x & y^* \\ y & t - x \end{matrix} \right) : t,x \in \R, \, y \in \K \right\} . \]
Note that this is a $(n+2)$-dimensional vector space. We define the Minkowski
norm on this space using the determinant:
\[ -\det \left( \begin{matrix} t + x & y^* \\ y & t - x \end{matrix} \right) = -t^2 + x^2 + |y|^2 . \]
The minus sign ensures that we have signature $(n+1,1)$. Thus, the
\define{Lorentz group} $\Spin(n+1,1)$, the double-cover of $\SO_0(n+1,1)$, acts
on $V$ via determinant-preserving linear transformations.

While vectors are $2 \times 2$ matrices over $\K$, spinors are column vectors
in $\K^2$. Indeed, the irreducible \define{spinor representations} of
$\Spin(n+1,1)$ are both defined on $\K^2$:
\[ S_+ = \K^2, \quad S_- = \K^2 , \]
but with slightly different actions of $\Spin(n+1,1)$. We shall avoid
specifying these actions explicitly. For details, see Section 3 of our first
paper \cite{BaezHuerta:susy1}, or Chapter 2 of Huerta \cite{Huerta}.

Finally, for both $S_+$ and $S_-$, there is a symmetric, bilinear,
$\Spin(n+1,1)$-equivariant map:
\[ \cdot \maps S_\pm \tensor S_\pm \to V . \]
Despite our natation, we suggestively call this map the \define{bracket of
spinors}. The form of this map is particularly charismatic on $S_-$, thanks to
our use of $\K$: just multiply the column vector $\psi$ by the row vector
$\phi^\dagger$ and take the hermitian part of the resulting $2 \times 2$
matrix:
\[ \psi \cdot \phi = \psi \phi^\dagger + \phi \psi^\dagger, \quad \psi, \phi \in S_- . \]
Here, the dagger denotes the conjugate transpose, $\psi^\dagger = (\psi^*)^T$.
On $S_+$, it is only slightly more complex, as our choice of action forces us
to apply \define{trace reversal}---for a $2 \times 2$ matrix $A$, we define:
\[ \tilde{A} = A - \tr(A) . \]
Then we can write:
\[ \psi \cdot \phi = \widetilde{\psi \phi^\dagger + \phi \psi^\dagger}, \quad \psi, \phi \in S_+ . \]
In either case, the result is a $2 \times 2$ hermitian matrix---a vector! The
fact that these maps are $\Spin(n+1,1)$-equivariant is checked in our previous
paper \cite{BaezHuerta:susy1}.

Finally, we fuse these two representations of $\Spin(n+1,1)$ together into a
single object, the \define{supertranslation algbera} $\T$. This is a Lie
superalgebra whose even part consists of vectors and odd part consists of
spinors:
\[ \T_0 = V, \quad \T_1 = S_+ . \]
The bracket on $\T$ is defined by vanish unless we bracket a spinor $\psi$ with
a spinor $\phi$, in which case the bracket is simply $\psi \cdot \phi$. Since
this operation is symmetric and spinors are odd, the bracket operation is
graded-antisymmetric overall. Furthermore, the Jacobi identity holds trivially,
thanks to the near triviality of the bracket.  Thus $\T$ is indeed, a Lie
superalgebra.

Moreover, $\T$ has a nontrivial 3-cocycle, thanks to the 3-$\psi$'s rule. We define a
3-cochain which eats a vector, $A$, and two spinors, $\psi, \phi$, and vanishes otherwise, as
follows:
\[ \alpha(A,\psi,\phi) = g(A, \psi \cdot \phi) . \]
Here, $g$ is the Minkowski inner product on $V$. 

\begin{thm}
\label{thm:3-cocycle}
In dimensions 3, 4, 6 and 10, the supertranslation algebra $\T$ has a
nontrivial, Lorentz-invariant even 3-cocycle $\alpha$ taking values in the trivial
representation $\R$. Decomposing the graded exterior power $\Lambda^3 \T$ into
the direct sum $\displaystyle \Lambda^3 \T \iso \bigoplus_{p+q = 3} \Lambda^p V
\tensor \Sym^q S_+$, we define the 3-cocycle:
\[ \alpha \maps \Lambda^3 \T \to \R , \]
as the unique 3-cochain which takes the value:
\[ \alpha(A \wedge \psi \wedge \phi) = g(A, \psi \cdot \phi) \]
on the $V \tensor \Sym^2 S_+$ direct summand of $\Lambda^3 \T$, and vanishes
otherwise, for vectors $A \in V$ and spinors $\psi, \phi \in S_+$.
\end{thm}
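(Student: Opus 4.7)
The plan is to verify, in order, that $\alpha$ is (i) well-defined as an even $\R$-valued 3-cochain, (ii) Lorentz-invariant, (iii) a cocycle, and (iv) not a coboundary.

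Steps (i) and (ii) fall out of the definition. The spinor-spinor bracket $\psi \cdot \phi$ is symmetric, so $g(A, \psi \cdot \phi)$ is well-defined on the summand $V \tensor \Sym^2 S_+$ of $\Lambda^3 \T$. Since $V$ is even and $S_+$ is odd, this summand is even (overall grade $0+2=2 \equiv 0$), so $\alpha$ is grade-preserving, i.e., even. Lorentz-invariance is immediate from the $\Spin(n+1,1)$-equivariance of the bracket $\cdot \maps S_+ \tensor S_+ \to V$ and the Lorentz-invariance of $g$.

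The heart of the proof is step (iii), checking $d\alpha=0$. Since $\alpha$ takes values in the trivial representation $\R$, the $\rho(X_i)\omega(\ldots)$ terms in the Chevalley--Eilenberg coboundary drop out, leaving
\[ d\alpha(X_1,\ldots,X_4) = \sum_{i<j} \pm \, \alpha\bigl([X_i,X_j], X_1,\ldots,\hat{X}_i,\ldots,\hat{X}_j,\ldots,X_4\bigr). \]
I would first observe that the bracket on $\T$ vanishes except between two spinors, where it lands in $V$, and that $\alpha$ vanishes off the summand $V \tensor \Sym^2 S_+$. Consequently, for any parity-assignment of $(X_1,\ldots,X_4)$ other than all four spinors, each summand is forced to zero by either the bracket or $\alpha$. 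So $d\alpha=0$ on every sector of $\Lambda^4 \T$ except possibly $\Sym^4 S_+$. Expanding $d\alpha(\psi_1,\psi_2,\psi_3,\psi_4)$ on four spinors produces a signed sum of three terms of the form $g(\psi_i \cdot \psi_j,\; \psi_k \cdot \psi_l)$ with $\{i,j,k,l\}=\{1,2,3,4\}$; by symmetry of `$\cdot$' and of $g$, this sum is the full polarization of the scalar $g(\psi\cdot\psi,\psi\cdot\psi) = \|\psi\cdot\psi\|^2$. In dimensions $n+2=3,4,6,10$ this vanishes by the 3-$\psi$'s rule, which was established division-algebra-by-division-algebra in the earlier papers of this series. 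This polarization/3-$\psi$'s step is the main technical obstacle and the conceptual crux of the theorem.

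For step (iv), nontriviality, suppose $\alpha = d\beta$ for some $\beta \in C^2(\T,\R)$. Decomposing $\beta$ along $\Lambda^2 \T = \Lambda^2 V \oplus (V\tensor S_+) \oplus \Sym^2 S_+$, only the $\Lambda^2 V$ piece can contribute to $d\beta$ evaluated on $(A,\psi,\phi)$, because the bracket appearing in $d\beta$ is nonzero only on $S_+\tensor S_+$ and outputs a vector. A direct computation then gives $d\beta(A,\psi,\phi) = \pm\, \beta|_{\Lambda^2 V}(\psi\cdot\phi, A)$. Setting this equal to $\alpha(A,\psi,\phi)=g(A,\psi\cdot\phi)$ for all $A,\psi,\phi$ would force $\beta|_{\Lambda^2 V}$ to agree (up to sign) with $g$, but the former is antisymmetric in its two $V$-arguments while $g$ is symmetric and nondegenerate—a contradiction. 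Hence $\alpha$ is not exact, completing the proof.
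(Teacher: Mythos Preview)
Your argument is correct and follows the approach of the reference the paper cites (Theorem~14 of \cite{BaezHuerta:susy2}): the cocycle condition on $\Sym^4 S_+$ is the polarization of $g(\psi\cdot\psi,\psi\cdot\psi)=0$, which is an equivalent form of the 3-$\psi$'s rule, and nontriviality comes from the symmetry clash between an antisymmetric $\beta|_{\Lambda^2 V}$ and the symmetric $g$.

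One small point to tighten in step (iv): from $\beta(\psi\cdot\phi, A)=\pm g(A,\psi\cdot\phi)$ for all $A,\psi,\phi$ you conclude that $\beta|_{\Lambda^2 V}$ agrees with $\pm g$ on all of $V\times V$, but this requires the bracket $\cdot\colon \Sym^2 S_+\to V$ to be surjective. It is---since $V$ is irreducible under $\Spin(n+1,1)$ and $\cdot$ is a nonzero equivariant map---but you should say so explicitly. Without surjectivity the argument stalls: taking $A=v=\psi\cdot\psi$ only yields $g(v,v)=0$, which is exactly the 3-$\psi$'s rule and hence no contradiction.
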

\begin{proof}
	See the proof of Theorem 14 in our previous paper
	\cite{BaezHuerta:susy2}. 
\end{proof}

There is thus a Lie 2-superalgebra, the \define{supertranslation Lie
2-superalgebra}, $\strng_\alpha(\T)$. There is much more that one can
do with the cocycle $\alpha$, however. We can use it to extend not just the
supertranslations $\T$ to a Lie 2-superalgebra, but the full Poincar\'e
superalgebra, $\so(V) \ltimes \T$. We turn to this now.

\subsubsection{Superstring Lie 2-superalgebras} \label{sec:superstring2alg}

One of the principal themes of theoretical physics over the last century has
been the search for the underlying symmetries of nature. This began with
special relativity, which could be summarized as the discovery that the laws of
physics are invariant under the action of the Poincar\'e group:
\[ \ISO(V) = \Spin(V) \ltimes V. \]
Here, $V$ is the set of vectors in Minkowski spacetime and acts on Minkowski
spacetime by translation, while $\Spin(V)$ is the \define{Lorentz group}: the
double cover of $\SO_0(V)$, the connected component of the group of symmetries
of the Minkowski norm. Much of the progress in physics since special relativity
has been associated with the discovery of additional symmetries, like the
$\U(1) \times \SU(2) \times \SU(3)$ symmetries of the Standard Model of
particle physics \cite{BaezHuerta:guts}.

Today, `supersymmetry' could be summarized as the hypothesis that the laws of
physics are invariant under the `Poincar\'e supergroup', which is larger than
the Poincar\'e group:
\[ \SISO(V) = \Spin(V) \ltimes T. \]
Here, $V$ is again the set of vectors in Minkowski spacetime and $\Spin(V)$ is
the Lorentz group, but $T$ is the supergroup of translations on Minkowski
`superspacetime'. Though we have not yet learned enough supergeometry to talk
about $T$ precisely, we have already met its infinitesimal approximation in the
last section: the superstranslation algebra, $\T = V \oplus S_+$.  We think of
the spinor representation $S_+$ as giving extra, supersymmetric translations,
or `supersymmetries'.

In this paper, we show how to further extend the Poincar\'e supergroup to
include higher symmetries, thanks to the normed division algebras. That is, we
will show that in dimensions $n+2 = 3$, 4, 6 and 10, one can extend the
Poincar\'e supergroup $\SISO(n+1,1)$ to a `Lie 2-supergroup' we call
$\Superstring(n+1,1)$. 

We begin this construction in this section by working at the infinitesimal
level.  Using the 3-cocycle $\alpha$, we construct a Lie 2-superalgebra, 
\[ \superstring(n+1,1) , \]
which extends the Poincar\'e superalgebra in dimension $n+2$:
\[ \siso(n+1,1 = \so(n+1,1) \ltimes \T \]
This is possible because $\alpha$ is invariant under the action of
the Lorentz algebra, $\so(n+1,1)$.  This is manifestly true, because $\alpha$
is built from equivariant maps.

As we shall see, this invariance implies that $\alpha$ is a cocycle, not merely
on the supertranslation algebra, but on the full Poincar\'e superalgebra,
$\siso(n+1,1)$.  We can extend $\alpha$ to this larger algebra in a trivial
way: define the unique extension which vanishes unless all of its arguments
come from $\T$. Doing this, $\alpha$ remains a cocycle, even though the Lie
bracket (and thus $d$) has changed.  Moreover, it remains nontrivial. All of
this is contained in the following proposition:

\begin{prop} \label{prop:extendingcochains1}
       	Let $\g$ and $\h$ be Lie superalgebras such that $\g$ acts on
	$\h$, and let $R$ be a representation of $\g \ltimes \h$. Given any
	$R$-valued $n$-cochain $\omega$ on $\h$, we can uniquely extend it to
	an $n$-cochain $\tilde{\omega}$ on $\g \ltimes \h$ that takes the value
	of $\omega$ on $\h$ and vanishes on $\g$. When $\omega$ is even, we
	have:
	\begin{enumerate}
		\item $\tilde{\omega}$ is closed if and only if $\omega$ is
			closed and $\g$-equivariant.
		\item $\tilde{\omega}$ is exact if and only if $\omega =
			d\theta$, for $\theta$ a $\g$-equivariant
			$(n-1)$-cochain on $\h$.
	\end{enumerate}
\end{prop}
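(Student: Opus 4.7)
The plan is to compute $d\tilde{\omega}$ directly from the Lie superalgebra coboundary formula, splitting the argument according to how many of its inputs lie in $\g$ versus $\h$. Since $\tilde{\omega}$ is defined to vanish on any tuple with a $\g$-entry, and since $\h$ is an ideal in $\g \ltimes \h$ (so $[G,Y] \in \h$ for $G \in \g$, $Y \in \h$, while $[G,G'] \in \g$), most terms in the coboundary formula vanish automatically. I would organize the calculation into three cases based on the number of $\g$-arguments in the $(n+1)$-tuple $(X_1,\dots,X_{n+1})$.

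First, when all $X_i \in \h$, every bracket stays in $\h$, so the coboundary formula for $\tilde{\omega}$ reduces verbatim to that of $\omega$, giving $d\tilde{\omega}|_{\Lambda^{n+1}\h} = d\omega$. Second, when two or more entries lie in $\g$: every $\rho(X_i)\tilde{\omega}(\dots)$ term has a $\g$-entry remaining, every $\tilde{\omega}([X_i,X_j],\dots)$ term either leaves a $\g$-entry alone or (in the case $[G_1,G_2]\in\g$) replaces two $\g$-entries with one, and so \emph{all} terms vanish, giving $d\tilde{\omega} = 0$ on this piece. Third, the crucial case is one $\g$-entry $G$ and $n$ entries $Y_1,\dots,Y_n \in \h$; here the surviving terms are
\[
\rho(G)\,\omega(Y_1,\dots,Y_n) - \sum_i \pm\, \omega(Y_1,\dots,[G,Y_i],\dots,Y_n),
\]
which is precisely $(G\cdot\omega)(Y_1,\dots,Y_n)$, the obstruction to $\g$-equivariance. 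Combining, $d\tilde{\omega}=0$ is equivalent to $d\omega=0$ together with $\g$-equivariance of $\omega$, proving (1).

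For (2), the direction $(\Leftarrow)$ is immediate: if $\omega = d\theta$ with $\theta$ even and $\g$-equivariant, then $\tilde{\theta}$ is defined and part (1) applied to $\theta$ gives $d\tilde{\theta} = \widetilde{d\theta} = \tilde{\omega}$ (the two potentially nonzero pieces of $d\tilde{\theta}$ are $d\theta = \omega$ on $\Lambda^n\h$ and the $\g$-equivariance obstruction on $\g\tensor\Lambda^{n-1}\h$, which vanishes by hypothesis). For $(\Rightarrow)$, suppose $\tilde{\omega} = d\phi$ for some even $(n-1)$-cochain $\phi$ on $\g \ltimes \h$. I would let $\theta := \phi|_{\Lambda^{n-1}\h}$ and observe, by the same case analysis as above applied to $d\phi$ on $\Lambda^n\h$, that $d\theta = d\phi|_{\Lambda^n\h} = \tilde{\omega}|_{\Lambda^n\h} = \omega$.

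The main obstacle is verifying $\g$-equivariance of this $\theta$: the restriction of $\phi$ need not be $\g$-equivariant on the nose, because the equation $d\phi(G,Y_1,\dots,Y_n) = 0$ expresses $(G\cdot\theta)(\vec Y)$ in terms of the mixed-type components $\phi|_{\g \tensor \Lambda^{n-2}\h}$ rather than forcing it to vanish. The way out is to replace $\phi$ by $\phi + d\psi$ for a suitably chosen $(n-2)$-cochain $\psi$ that kills the mixed component; equivalently, one filters cochains on $\g\ltimes\h$ by the number of $\g$-arguments and argues that modulo coboundaries one may assume $\phi$ is supported entirely on $\Lambda^{n-1}\h$, at which point $d\phi(G,Y_1,\dots,Y_n)=0$ reads exactly as $\g$-equivariance of $\theta$. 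This reduction is the technical heart of the argument, and it is where I would spend the most care, making the filtration and the inductive killing of mixed components explicit.
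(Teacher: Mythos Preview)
The paper does not prove this proposition here; it simply cites Proposition~20 of \cite{BaezHuerta:susy2}. So there is no in-paper argument to compare against, and I will assess your proposal on its own.

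Your treatment of part~(1) and of the $(\Leftarrow)$ direction of part~(2) is correct and is exactly the standard computation: the case split by number of $\g$-arguments, together with $\h \trianglelefteq \g\ltimes\h$, yields $d\tilde\omega|_{\Lambda^{n+1}\h}=d\omega$, the $\g$-equivariance obstruction on the one-$\g$-argument piece, and zero elsewhere.

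The gap is in $(\Rightarrow)$ of part~(2). You correctly isolate the issue---that $\theta=\phi|_{\Lambda^{n-1}\h}$ need not be $\g$-equivariant---and propose to fix it by adding $d\psi$ to $\phi$ so as to kill the mixed components inductively. But this step is not justified. Write $d=\delta_0+\delta_1$ for the bicomplex decomposition (with $\delta_0$ the $\h$-type differential and $\delta_1$ the $\g$-type differential). If $\phi_m$ is the top mixed component of $\phi$, then the equation $(d\phi)_{m+1}=0$ only tells you that $\phi_m$ is a $\delta_1$-\emph{cocycle} in $C^m(\g,\,C^{n-1-m}(\h,R))$; to kill it by $(d\psi)_m$ with $\psi$ supported in degrees $\le m-1$ you would need $\phi_m$ to be a $\delta_1$-\emph{coboundary}, i.e.\ you would need the class $[\phi_m]\in H^m\bigl(\g,\,C^{n-1-m}(\h,R)\bigr)$ to vanish. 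Nothing in the hypotheses forces this. In spectral-sequence language, you are implicitly assuming that the edge $E_2^{0,n}\to E_\infty^{0,n}$ in the row-filtration Hochschild--Serre spectral sequence is injective, which amounts to the vanishing of the incoming differentials $d_r\colon E_r^{r-1,n-r}\to E_r^{0,n}$; these land in subquotients of $H^{r-1}(\g, C^{n-r}(\h,R))$ and need not vanish in general. For the paper's application one has $\g=\so(n+1,1)$ semisimple, and Whitehead's lemmas kill the relevant $H^1$ and $H^2$, so your argument would go through there; but as a proof of the proposition in the stated generality, the inductive killing of mixed components needs either a genuine argument for why $[\phi_m]=0$, or a different strategy altogether.
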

\begin{proof}
	See the proof of Proposition 20 in our previous paper
	\cite{BaezHuerta:susy2}.
\end{proof}

Thus we can extend $\alpha$ to a nontrivial cocycle on the Poincar\'e Lie
superalgebra, simply by defining $\alpha$ to vanish outside of the
supertranslation algebra. Thanks to Theorem~\ref{trivd}, we know that $\alpha$
lets us extend $\siso(n+1,1)$ to a Lie 2-superalgebra:

\begin{thm} \label{thm:superstring}
	In dimensions 3, 4, 6 and 10, there exists a Lie 2-superalgebra formed
	by extending the Poincar\'e superalgebra $\siso(n+1,1)$ by the
	3-cocycle $\alpha$, which we call we the \define{superstring Lie
	2-superalgebra, $\superstring(n+1,1)$}.
\end{thm}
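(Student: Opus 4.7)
The plan is to assemble this theorem from three ingredients already established in the excerpt: the existence of the 3-cocycle $\alpha$ on the supertranslation algebra $\T$ (Theorem \ref{thm:3-cocycle}), the extension result for cocycles along a semidirect product (Proposition \ref{prop:extendingcochains1}), and the general construction of a slim Lie 2-superalgebra from a 3-cocycle (Theorem \ref{trivd}). Essentially, no new computation is required; the statement follows by chaining these three facts.

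First I would recall that $\alpha \maps \Lambda^3 \T \to \R$ is an even 3-cocycle on $\T$ in dimensions $n+2 = 3, 4, 6, 10$, by Theorem \ref{thm:3-cocycle}. The crucial observation is that $\alpha$ is built entirely out of $\Spin(n+1,1)$-equivariant operations: the Minkowski pairing $g \maps V \tensor V \to \R$ and the spinor bracket $\cdot \maps S_+ \tensor S_+ \to V$. Since these are equivariant under the Lorentz group, their infinitesimal version is $\so(n+1,1)$-equivariant, so $\alpha$ is $\so(n+1,1)$-invariant as a 3-cochain on $\T$.

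Next I would invoke Proposition \ref{prop:extendingcochains1} with $\g = \so(n+1,1)$, $\h = \T$, and $R = \R$ the trivial representation. Because $\alpha$ is both closed and $\so(n+1,1)$-equivariant, its unique extension $\tilde{\alpha}$ to $\siso(n+1,1) = \so(n+1,1) \ltimes \T$ that vanishes whenever any argument lies in $\so(n+1,1)$ is again a closed, even 3-cocycle. Nontriviality of $\tilde{\alpha}$ follows from the second part of Proposition \ref{prop:extendingcochains1}: if $\tilde{\alpha} = d\tilde{\theta}$ for some $\tilde{\theta}$, then $\alpha = d\theta$ on $\T$ for an $\so(n+1,1)$-equivariant 2-cochain $\theta$, contradicting the nontriviality asserted in Theorem \ref{thm:3-cocycle}.

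Finally, I would apply Theorem \ref{trivd} to the even 3-cocycle $\tilde{\alpha}$ on $\siso(n+1,1)$ with values in the trivial representation $\R$. This produces the slim Lie 2-superalgebra $\brane_{\tilde{\alpha}}(\siso(n+1,1), \R)$, which we rename $\superstring(n+1,1)$. There is no real obstacle here beyond verifying Lorentz invariance of $\alpha$, which is immediate from its construction out of equivariant maps; every other step is a direct citation of previously established machinery.
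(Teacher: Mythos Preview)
Your proposal is correct and follows essentially the same route as the paper: the paragraph preceding the theorem already assembles precisely the three ingredients you cite—Theorem~\ref{thm:3-cocycle} for the cocycle $\alpha$ on $\T$, Proposition~\ref{prop:extendingcochains1} to extend it to $\siso(n+1,1)$ via Lorentz invariance, and Theorem~\ref{trivd} to produce the slim Lie 2-superalgebra. Your explicit verification of nontriviality via part 2 of Proposition~\ref{prop:extendingcochains1} is a small elaboration beyond what the paper spells out, but it is correct and in the same spirit.
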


\section{Integrating nilpotent Lie \emph{n}-algebras} \label{sec:integrating}

Any mathematician worth her salt knows that we can easily construct Lie
algebras as the infinitesimal versions of Lie groups, and that a more
challenging inverse construction exists: we can `integrate' Lie algebras to get
Lie groups. By analogy, we expect that the same is true of Lie $n$-algebras and
Lie $n$-groups: that we can construct Lie $n$-algebras as the infinitesimal
versions of Lie $n$-groups, and we can `integrate' Lie $n$-algebras to obtain
Lie $n$-groups. 

In fact, it is easy to see how to obtain slim Lie $n$-algebras from slim Lie
$n$-groups. As we saw in Section~\ref{sec:Lie-n-superalgebras}, slim Lie
$n$-algebras are built from $(n+1)$-cocycles in Lie algebra cohomology.
Remember, $p$-cochains on the Lie algebra $\g$ are linear maps:
\[ C^p(\g,\h) = \left\{ \omega \maps \Lambda^p \g \to \h \right\} , \]
where $\h$ is a representation of $\g$, though we shall restrict ourselves to
the trivial representation $\h = \R$ in this section.

On the other hand, in Section~\ref{sec:Lie-n-groups}, we saw that slim Lie
$n$-groups are built from $(n+1)$-cocycles in Lie group cohomology, at least
for $n=2$. Remember, $p$-cochains on $G$ are smooth maps:
\[ C^p(G,H) = \left\{ f \maps G^p \to H \right\} , \]
where $H$ is an abelian group on which $G$ acts by automorphism, though we
shall restrict ourselves to $H = \R$ with trivial action in this section.

Thus, to derive a Lie $n$-algebra from a Lie $n$-group, just differentiate
the defining Lie group $(n+1)$-cocycle at the identity to obtain a Lie algebra
$(n+1)$-cocycle. In other words, for every Lie group $G$ with Lie algebra $\g$,
there is a cochain map:
\[ D \maps C^\bullet(G) \to C^\bullet(\g) , \]
given by differentiation. Here, we have omitted reference to the coefficients
$H$ and $\h$ because both are assumed to be $\R$. We continue this practice for
the rest of the section.

Going the other way, however, is challenging---integrating a Lie $n$-algebra is
harder, even when the Lie $n$-algebra in question is slim.  Nonetheless, this
challenge has been met. Building on the earlier work of Getzler \cite{Getzler}
on integrating nilpotent Lie $n$-algebras, Henriques \cite{Henriques} has shown
that any Lie $n$-algebra can be integrated to a `Lie $n$-group', which
Henriques defines as a sort of smooth Kan complex in the category of Banach
manifolds.  More recently, Schreiber \cite{Schreiber} has generalized this
integration procedure to a setting much more general than that of Banach
manifolds, including both supermanifolds and manifolds with infinitesimals. For
both Henriques and Schreiber, the definition of Lie $n$-group is weaker
than the one we sketched in Section \ref{sec:Lie-n-groups}---it weakens the
notion of multiplication so that the product of two group `elements' is only
defined up to equivalence. This level of generality seems essential for the
construction to work for \emph{every} Lie $n$-algebra.

However, for \emph{some} Lie $n$-algebras, we can integrate them using the more
naive idea of Lie $n$-group we prefer in this thesis: a smooth $n$-category
with one object in which every $k$-morphism is weakly invertible, for all $1
\leq k \leq n$. We shall see that, for some slim Lie $n$-algebras, we can
integrate the defining Lie algebra $(n+1)$-cocycle to obtain a Lie group
$(n+1)$-cocycle. In other words, for certain Lie groups $G$ with Lie algebra
$\g$, there is a cochain map:
\[ \smallint \maps C^\bullet(\g) \to C^\bullet(G) . \]
which is a chain homotopy inverse to differentiation.

When is this possible? We can always differentiate Lie group cochains to obtain
Lie algebra cochains, but if we can also integrate Lie algebra cochains to obtain
Lie group cochains, the cohomology of the Lie group and its Lie algebra
will coincide:
\[ H^\bullet(\g) \iso H^\bullet(G) . \]
By a theorem of van Est \cite{vanEst}, this happens when all the homology
groups of $G$, as a topological space, vanish.

Thus, we should look to Lie groups with vanishing homology for our examples.
How bad can things be when the Lie group is not homologically trivial? To get a
sense for this, recall that any semisimple Lie group $G$ is diffeomorphic to
the product of its maximal compact subgroup $K$ and a contractible space $C$:
\[ G \approx K \times C . \]
When $K$ is a point, $G$ is contractible, and certainly has vanishing homology.
At the other extreme, when $C$ is a point, $G$ is compact. And indeed, in this
case there is no hope of obtaining a nontrivial cochain map from Lie algebra
cochains to Lie group cochains:
\[ \smallint \maps C^\bullet(\g) \to C^\bullet(G) \]
because \emph{every smooth cocycle on a compact group is trivial}
\cite{BaezLauda,vanEst}.

This fact provided an obstacle to early attempts to integrate Lie 2-algebras.
For instance, consider the string Lie 2-algebra $\strng(n)$ we described in
Section \ref{sec:string2alg}. Recall that it is the slim Lie 2-algebra
$\strng_j(\so(n))$, where $j$ is the canonical 3-cocycle on $\so(n)$, given by
combining the Killing form with the bracket: 
\[ j = \langle -, [-,-] \rangle . \]
One could attempt to integrate $\strng(n)$ to a slim Lie 2-group
$\String_{\smallint j}(\SO(n))$, where $\smallint j$ is a Lie group 3-cocycle
on $\SO(n)$ which somehow integrates $j$, but because the compact group
$\SO(n)$ admits no nontrivial smooth Lie group cocycles, this idea fails.

The real lesson of the string Lie 2-algebra is that, once again, our notion of
Lie 2-group is not general enough. By generalizing the concept of Lie 2-group,
various authors, like Baez--Crans--Schreiber--Stevenson \cite{BCSS}, Henriques
\cite{Henriques}, and Schommer-Pries \cite{SchommerPries}, were successful in
integrating $\strng(n)$. 

Nonetheless, there is a large class of Lie $n$-algebras for which our Lie
$n$-groups \emph{are} general enough. In particular, when $G$ is an
`exponential' Lie group, the story is completely different. A Lie group or Lie
algebra is called \define{exponential} if the exponential map 
\[ \exp \maps \g \to G \]
is a diffeomorphism.  For instance, all simply-connected nilpotent Lie groups
are exponential, though the reverse is not true. Certainly, all exponential Lie
groups have vanishing homology, because $\g$ is contractible. We caution the
reader that some authors use the term `exponential' merely to indicate that
$\exp$ is surjective.

When $G$ is an exponential Lie group with Lie algebra $\g$, we can use a
geometric technique developed by Houard \cite{Houard} to construct a cochain
map:
\[ \smallint \maps C^\bullet(\g) \to C^\bullet(G). \]
The basic idea behind this construction is simple, a natural outgrowth of a
familiar concept from the cohomology of Lie algebras. Because a Lie algebra
$p$-cochain is a linear map:
\[ \omega \maps \Lambda^p \g \to \R, \]
using left translation, we can view $\omega$ as defining a $p$-form on the
Lie group $G$. So, we can integrate this $p$-form over $p$-simplices in $G$.
Thus we can define a smooth function:
\[ \smallint \omega \maps G^p \to \R, \]
by viewing the integral of $\omega$ as a function of the vertices of a
$p$-simplex:
\[ \smallint \omega(g_1, g_2, \dots, g_p) = \int_{[1, g_1, g_1 g_2, \dots, g_1 g_2 \cdots g_p]} \omega . \]
For the right-hand side to truly be a function of the $p$-tuple $(g_1, g_2,
\dots, g_p)$, we will need a standard way to `fill out' the $p$-simplex $[1,
g_1, g_1 g_2, \dots, g_1 g_2 \cdots g_p]$, based only on its vertices. It is
here that the fact that $G$ is exponential is key: in an exponential group, we
can use the exponential map to define a unique path from the identity $1$ to
any group element. We think of this path as giving a 1-simplex, $[1,g]$, and we
can extend this idea to higher dimensional $p$-simplices. 

Therefore, when $G$ is exponential, we can construct $\smallint$. Using this
cochain map, it is possible to integrate the slim Lie $n$-algebra
$\brane_\omega(\g)$ to the slim Lie $n$-group $\Brane_{\smallint \omega}(G)$. 

We proceed as follows. In Section \ref{sec:integratingcochains}, we construct
$\smallint$ and show that, along with $D$, it gives a homotopy equivalence
between the complexes $C^\bullet(\g)$ and $C^\bullet(G)$. In Section
\ref{sec:heisenberg2group}, we use $\smallint$ to integrate the Heisenberg Lie
2-algebra of Section \ref{sec:heisenberg2alg}.  Later, in Section
\ref{sec:integrating2}, we shall see that this construction can be `superized',
and integrate Lie $n$-superalgebras to $n$-supergroups. Finally, so that the
reader can see concrete calculations with $\smallint$, in Appendix
\ref{app:examples} we work out explicit formulas for the Lie group $p$-cochains
one obtains from integrating Lie algebra $p$-cochains, for $p = 0$, 1, 2 and 3.

\subsection{Integrating Lie algebra cochains} \label{sec:integratingcochains}

In what follows, we shall see that for an exponential Lie group $G$, we can
construct simplices in $G$ that get along with the action of $G$ on itself.
Since we can treat any $p$-cochain $\omega$ on $\g$ as a left-invariant
$p$-form on $G$, we can integrate $\omega$ over a $p$-simplex in $S$ in $G$.
Regarding $\int_S \omega$ as a function of the vertices of $S$, we will see
that it defines a Lie group $p$-cochain. The fact that this is a cochain map is
purely geometric: it follows automatically from Stokes' theorem.

Let us begin by replacing the cohomology of $\g$ with the cohomology of
left-invariant differential forms on $G$.  Recall that the cohomology of the
Lie algebra $\g$ is given by the Lie algebra cochain complex,
$C^{\bullet}(\g)$, which at level $p$ consists of $p$-linear maps from $\g$ to
$\R$:
\[ C^p(\g) = \left\{ \omega \maps \Lambda^p \g \to \R \right\}. \]
We already defined this for Lie superalgebras in Section
\ref{sec:Lie-n-superalgebras}.  In that section, we saw that the coboundary map
$d$ on this complex is usually defined by a rather lengthy formula, but here we
shall substitute an equivalent, more geometric definition. Since we can think
of the Lie algebra $\g$ as the tangent space $T_1 G$, we can think of a
$p$-cochain on $\g$ as giving a $p$-form on this tangent space. Using left
translation on the group, we can translate this $p$-form over $G$ to define a
$p$-form on all of $G$.  This $p$-form is left-invariant, and it is easy to see
that any left-invariant $p$-form on $G$ arises in this way.

So, in fact, we could just as well define 
\[ C^p(\g) = \left\{ \mbox{left-invariant $p$-forms on $G$} \right\}. \] 
It is well-known that the de Rham differential of a left-invariant $p$-form
$\omega$ is again left-invariant, and remarkably, the formula for $d \omega_1$
involves only the Lie bracket on $\g$. This formula is Chevalley and
Eilenberg's original definition of $d$ \cite{ChevalleyEilenberg}, the one we gave
in Section \ref{sec:Lie-n-superalgebras}, albeit adapted for Lie superalgebras.
In this section, we may forget about this messy formula, and use the de Rham
differential instead.

The cohomology of the Lie group $G$ is given by the Lie group cochain complex,
$C^{\bullet}(G)$, which at level $p$ is given by the set of smooth functions
from $G^p$ to $\R$:
\[ C^p(G) = \left\{ f \maps G^p \to \R \right\}. \]
We have already discussed this in Section~\ref{sec:Lie-n-groups}. The coboundary
map $d$ on this complex is usually defined by a complicated formula we gave in
that section, but we can give it a more geometric description just as we did in
the case of Lie algebras. 

Since we are going to construct a cochain map by integrating $p$-forms over
$p$-simplices, it would be best to view Lie group cohomology in terms of
simplices now. To this end, let us define a \define{combinatorial $p$-simplex}
in the group $G$ to be an $(p+1)$-tuple of elements of $G$, which we call the
vertices in this context.  Of course, $G$ acts on the set of combinatorial
$p$-simplices by left multiplication of the vertices.

Now, we would like to think of Lie group $p$-cochains as `smooth,
homogeneous, $\R$-valued cochains' on the free abelian group on
combinatorial $p$-simplices. Of course, we need to say what this means. We say
an $\R$-valued $p$-cochain $F$ is \define{homogeneous} if it is
invariant under the action of $G$, and that it is \define{smooth} if
the corresponding map 
\[ F \maps G^{p+1} \to \R \]
is smooth. Now if
\[ C^p_H(G) = \left\{ \mbox{smooth homogeneous $p$-cochains} \right\}. \]
denotes the abelian group of all smooth, homogeneous $p$-cochains, there is a
standard way to make $C^\bullet_H(G)$ into a cochain complex. Just take the
coboundary operator to be:
\[ dF = F \circ \partial, \]
where $\partial$ is the usual boundary operator on $p$-chains. It is automatic
that $d^2 = 0$.

In fact, this cochain complex is isomorphic to the original one, which we
distinguish as the \define{inhomogeneous cochains}:
\[ C^p_I(G) = \left\{ f \maps G^p \to \R \right\} . \]
To see this, note that any inhomogeneous cochain:
\[ f \maps G^p \to \R \]
gives rise to a unique, smooth, homogeneous $p$-cochain $F$, by defining:
\[ F(g_0, \dots, g_p) = f(g^{-1}_0 g_1, g^{-1}_1 g_2,  \dots, g^{-1}_{p-1} g_p) \]
for each combinatorial $p$-simplex $(g_0, \dots, g_p)$. Conversely, every
smooth, homogeneous $p$-cochain $F$ gives a unique inhomogeneous $p$-cochain $f
\maps G^p \to \R$, by defining:
\[ f(g_1, \dots, g_p) = F(1, g_1, g_1 g_2, \dots, g_1 g_2 \dots g_p). \]
Finally, note that these isomorphisms commute with the coboundary operators on
$C^\bullet_H(G)$ and $C^\bullet_I(G)$. Henceforth, we will write $C^\bullet(G)$
to mean either complex.

These simplicial notions will permit us to define a cochain map from the Lie
algebra complex to the Lie group complex:
\[ C^{\bullet}(\g) \to C^{\bullet}(G). \]
For $\omega \in C^p(\g)$, the idea is to define an element $\smallint \omega
\in C^p(G)$ by \emph{integrating} the left-invariant $p$-form $\omega$ over a
$p$-simplex $S$ in the group $G$. In other words, the value which $\smallint
\omega$ assigns to $S$ is defined to be:
\[ (\smallint \omega)(S) = \int_S \omega. \]
This is nice because Stokes' theorem will tell us it is a cochain map:
\[ (\smallint d \omega)(S) = \int_S d \omega = \int_{\partial S} \omega = d (\smallint \omega)(S) \]
The only hard part is defining $p$-simplices in $G$ in such a way that $\smallint
\omega$ is actually a smooth, homogeneous $p$-cochain. It is here that the
fact that $G$ is \emph{exponential} is key. 

Note that, up until this point, we have only discussed combinatorial
$p$-simplices, which have no relationship to the Lie group structure of
$G$---they are just $(p+1)$-tuples of vertices. We now wish to `fill out'
the combinatorial simplices. That is, we want to create a rule that to any
$(p+1)$-tuple $(g_0, \dots, g_p)$ of vertices in $G$ assigns a filled
$p$-simplex in $G$, which we denote
\[ [g_0, \dots, g_p ]. \]
In order to prove that $\smallint \omega$ is smooth, we need smoothness
conditions for this rule, and in order to prove $\smallint \omega$ is
homogeneous, we shall require the left-translate of a $p$-simplex to again be a
$p$-simplex. In other words, we need:
\[ g [g_0, \dots, g_p] = [g g_0, \dots, g g_p]. \]
We make this precise as follows.

\begin{defn} \label{def:simplices}
Let $\Delta^p$ denote $\{(x_0, \dots, x_p) \in \R^{p+1} : \sum x_i = 1, x_i
\geq 0 \}$,  the standard $p$-simplex in $\R^{p+1}$.  Given a collection of
smooth maps
\[ \varphi_p \maps \Delta^p \times G^{p+1} \to G \]
for each $p \geq 0$, we say this collection defines a \define{left-invariant notion
of simplices} in $G$ if it satisfies:
\begin{enumerate}
	\item \textbf{The vertex property.} For any $(p+1)$-tuple, the
		restriction 
		\[ \varphi_p \maps \Delta^p \times \{(g_0, \dots, g_p)\} \to G \] 
		sends the vertices of $\Delta^p$ to $g_0, \dots, g_p$, in that
		order. We denote this restriction by 
		\[ [g_0, \dots, g_p]. \] 
		We call this map a \define{\boldmath{$p$}-simplex}, and regard
		it as a map from $\Delta^p$ to $G$.
	      
	\item \textbf{Left-invariance.} For any $p$-simplex $[g_0, \dots, g_p]$ and any $g \in G$, we
		have:
		\[ g [g_0, \dots, g_p] = [g g_0, \dots, g g_p ]. \]

	\item \textbf{The face property.} For any $p$-simplex
		\[ [g_0, \dots, g_p] \maps \Delta^p \to G \]
		the restriction to a face of $\Delta^p$ is a $(p-1)$-simplex.
\end{enumerate}

\end{defn}

Note that the second condition just says that the map
\[ \varphi_p \maps \Delta^p \times G^{p+1} \to G \]
is equivariant with respect to the left action of $G$, where we take $G$ to act
trivially on $\Delta^p$. 

On any group equipped with a left-invariant notion of simplex, we have the
following result:
\begin{prop} \label{prop:integral}
	Let $G$ be a Lie group equipped with a left-invariant notion of
	simplices, and let $\g$ be its Lie algebra. Then there is a cochain map
	from the Lie algebra cochain complex to the Lie group cochain complex
	\[ \smallint \maps C^{\bullet}(\g) \to C^{\bullet}(G) \]
	given by integration---that is, if $\omega$ is a left-invariant
	$p$-form on $G$, and $S$ is a $p$-simplex in $G$, then define:
	\[ (\smallint \omega)(S) = \int_S \omega. \]
\end{prop}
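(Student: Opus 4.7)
The plan is to work with the homogeneous model of the Lie group cochain complex, where a $p$-cochain is a smooth, left-invariant $\R$-valued function on $(p+1)$-tuples of vertices. With the left-invariant notion of simplices in hand, the formula
\[ (\smallint \omega)(g_0,\dots,g_p) = \int_{[g_0,\dots,g_p]} \omega \]
makes sense, and I first need to verify that $\smallint \omega$ really lands in $C^p_H(G)$. Smoothness follows because $\varphi_p \maps \Delta^p \times G^{p+1} \to G$ is smooth by hypothesis, so pulling back the smooth form $\omega$ to $\Delta^p \times G^{p+1}$ and fiber-integrating over the compact $\Delta^p$ yields a smooth function of $(g_0,\dots,g_p)$. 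Left-invariance of $\smallint \omega$ as a homogeneous cochain follows from the two left-invariance properties in play: $\omega$ is left-invariant as a form on $G$, and $[gg_0,\dots,gg_p] = g \cdot [g_0,\dots,g_p]$ by the second axiom of Definition \ref{def:simplices}. So the change-of-variables induced by left multiplication by $g$ does not change the integral.

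The cochain map property is essentially Stokes' theorem. I need
\[ \smallint (d_{\mathrm{CE}}\,\omega) = d_H (\smallint \omega), \]
where $d_{\mathrm{CE}}$ is the Chevalley--Eilenberg differential, which on left-invariant forms agrees with the de Rham differential $d$, and $d_H$ is the coboundary on homogeneous cochains, $d_H F = F \circ \partial$. On a combinatorial simplex $(g_0,\dots,g_{p+1})$, Stokes gives
\[ \int_{[g_0,\dots,g_{p+1}]} d\omega = \int_{\partial [g_0,\dots,g_{p+1}]} \omega, \]
so what remains is to check that the face property yields the expected simplicial decomposition of the boundary, namely
\[ \partial [g_0,\dots,g_{p+1}] = \sum_{i=0}^{p+1} (-1)^i [g_0,\dots,\hat{g}_i,\dots,g_{p+1}] \]
as a signed sum of singular simplices. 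This is where the third axiom in Definition \ref{def:simplices} does the work: the restriction of $\varphi_{p+1}$ to the $i$th face of $\Delta^{p+1}$ coincides with the $p$-simplex on the corresponding vertex subset, so the boundary of the integration domain matches, face by face and with the standard orientation signs, the one used to define $d_H$.

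The main obstacle is really the orientation bookkeeping in this last step, since one must pin down the identification of $\partial \Delta^{p+1}$ with the disjoint union of faces used in the homogeneous coboundary, and check that the induced orientation on each face agrees with the standard one up to the sign $(-1)^i$. Once this is set up correctly, combining the three displays above gives $(\smallint d\omega)(g_0,\dots,g_{p+1}) = (d_H \smallint \omega)(g_0,\dots,g_{p+1})$, which is the desired cochain map identity. Translating back through the isomorphism $C^\bullet_H(G) \iso C^\bullet_I(G)$ from the preceding discussion then produces the statement in the inhomogeneous form used elsewhere in the paper.
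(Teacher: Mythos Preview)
Your proposal is correct and follows essentially the same approach as the paper: verify smoothness via pulling back along $\varphi_p$ and integrating over $\Delta^p$, verify homogeneity from the left-invariance of both $\omega$ and the simplices, and invoke Stokes' theorem for the cochain map property. If anything, you are more careful than the paper, which simply asserts that Stokes' theorem handles the cochain identity without spelling out how the face property and orientation signs match the homogeneous coboundary.
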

\begin{proof}
	Let $\omega \in C^p(\g)$. We have already noted that Stokes' theorem
	\[ \int_S d \omega = \int_{\partial S} \omega \]
	implies that this map is a cochain map. We only need to check that
	$\smallint \omega$ really lands in $C^p(G)$. That is, that it is smooth
	and homogeneous. Because $G$ acts trivially on the coefficient group
	$\R$, homogeneity means that $(\smallint \omega)(S)$ is invariant of
	the left action of $G$ on $S$.

	Indeed, note that we can pull the smooth, left-invariant $p$-form
	$\omega$ back along
	\[ \varphi_p \maps \Delta^p \times G^{p+1} \to G. \]
	The result, $\varphi^*_p \omega$, is a smooth $p$-form on $\Delta^p
	\times G^{p+1}$, still invariant under the action of $G$. Integrating
	out the dependence on $\Delta^p$, we see this results in a smooth,
	invariant map:
	\[ \smallint \omega \maps G^{p+1} \to \R, \]
	which is precisely what we wanted to prove.
\end{proof}

We would now like to show that any exponential Lie group $G$ comes with a
left-invariant notion of simplices. Our essential tool for this is our ability
to use the exponential map to connect any element of $G$ to the identity by a
uniquely-defined path. If $h = \exp(X) \in G$ is such an element, we can then
define the `based' 1-simplex $[1,h]$ to be swept out by the path $\exp(tX)$,
left translate this to define the general 1-simplex $[g, gh]$ as that swept out
by the path $g \exp(tX)$, and proceed to define higher-dimensional simplices
with the help of the exponential map and induction, using what we call the
\define{apex-base construction}: given a definition of $(p-1)$-simplex, we
define the $p$-simplex
\[ [1, g_1, \dots, g_p] \]
by using the exponential map to sweep out a path from 1, the \define{apex}, to
each point of the already defined $(p-1)$-simplex, the \define{base}:
\[ [g_1, \dots, g_p]. \]
Having done this, we can then use left translation to define the general
$p$-simplex:
\[ [g_0, g_1, \dots, g_p] = g_0 [1, g^{-1}_0 g_1, \dots, g^{-1}_0 g_p ]. \]
In fact, this construction also covers the 1-simplex case. All we need to kick
off our induction is to define 0-simplices to be points in $G$.

To make all this precise, we must use it to define smooth maps
\[ \varphi_p \maps \Delta^p \times G^{p+1} \to G, \]
for each $p \geq 0$. To overcome some analytic technicalities in constructing
$\varphi_p$, we will also need to fix a smooth increasing function:
\[ \ell \maps [0,1] \to [0,1] \]
which is 0 on a \emph{neighborhood} of 0, and then monotonically increases to 1
at 1. We shall call $\ell$ the \define{smoothing factor}. We shall see latter
that our choice of smoothing factor is immaterial: $\varphi_p$ depends on
$\ell$, but integrals over simplices do not. 

Let us begin by defining 0-simplices as points. That is, we define
\[ \varphi_0 \maps \Delta^0 \times G \to G \]
as the obvious projection.

Now, assume that we have defined $(p-1)$-simplices, so we have:
\[ \varphi_{p-1} \maps \Delta^{p-1} \times G^p \to G. \]
Using this, we wish to define:
\[ \varphi_p \maps \Delta^p \times G^{p+1} \to G. \]
But since we want this to be $G$-equivariant, we might as well define it for
\define{based \boldmath{$p$}-simplices}: a simplex whose first vertex is
$1$. So first, we will give a map:
\[ \tilde{f}_p \maps \Delta^p \times G^p \to G \]
which we think of as giving us the based $p$-simplex
\[ [1, g_1, \dots, g_p] \]
for any $p$-tuple. We do this using the apex-base construction. First, the map
$\varphi_{p-1} \maps \Delta^{p-1} \times G^p \to G$ can be extended to a map
\[ f_p \maps [0,1] \times \Delta^{p-1} \times G^p \to G \]
by defining $f_p$ to be $\varphi_{p-1}$ on $\{1\} \times \Delta^{p-1} \times G^p$, to
be $1$ on $\{0\} \times \Delta^{p-1} \times G^p$, and using the exponential map
to interpolate in between. Since $[0,1] \times \Delta^p$ is a kind of
generalized prism, we take the liberty of calling $\{0\} \times \Delta^p$ the
\define{0 face}, and $\{1\} \times \Delta^p$ the \define{1 face}.

Here, the requirement for smoothness complicates things slightly,
because we shall actually need $f_p$ to be $1$ on a \emph{neighborhood} of
the 0 face. So, to be precise, for $(t,x,g_1,\dots,g_p) \in [0,1] \times
\Delta^{p-1} \times G^p$, we have that $\varphi_{p-1}(x,g_0,\dots,g_p)$ is a
point of $G$, say $\exp(X)$. Define:
\[ f_p(t, x, g_0, \dots, g_p) = \exp(\ell(t)X). \]
where $\ell$ is the smoothing factor we mention above: a smooth increasing
function which is 0 on a \emph{neighborhood} of 0, and then monotonically
increases to 1 at 1. This guarantees $f_p$ will be $1$ on a neighborhood of the
0 face, and will match $\varphi_{p-1}$ on the 1 face. 

Since $f_p$ is smooth and is constant on a neighborhood of the 0 face of the
prism, $[0,1] \times \Delta^{p-1}$, we can quotient by this face and obtain a
smooth map:
\[ \tilde{f}_p \maps \Delta^p \times G^p \to G. \]
For definiteness, we can use the smooth quotient map defined by:
\[ 
\begin{array}{lccc}
	q_p \maps & [0,1] \times \Delta^{p-1} & \to     & \Delta^p \\
	          & (t, x)                     & \mapsto & (1-t, tx) 
\end{array}
\]
which we note sends the 0 face to the 0th vertex of $\Delta^p$, and sends the
vertices of $\Delta^{p-1}$ to the remaining vertices of $\Delta^p$, in order.
Finally, to define the nonbased $p$-simplices, we extend by the left action of
$G$---for any $g \in G$ and any $(x, g_1, \dots, g_p) \in \Delta^p \times G^p$,
set:
\[ \varphi_p(x, g, g g_1, \dots, g g_p) = g \tilde{f}_p(x, g_1, \dots, g_p). \]
This defines
\[ \varphi_p \maps \Delta^p \times G^{p+1} \to G. \]
It just remains to check that:
\begin{prop} \label{prop:standard}
	This defines a left-invariant notion of simplices on $G$, which we call
	the \define{standard left-invariant notion of simplices with smoothing
	factor $\ell$}.
\end{prop}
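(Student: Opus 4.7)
The plan is to verify the three conditions in Definition \ref{def:simplices} by induction on $p$, handling smoothness as an independent matter. The base case $p=0$ is immediate: the $0$-simplex $[g_0]$ is the single point $g_0$, left-invariance reduces to $g \cdot g_0 = g g_0$, and the face condition is vacuous.

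For the smoothness of $\varphi_p$, the map $f_p$ on the prism $[0,1] \times \Delta^{p-1} \times G^p$ is built from the smooth map $\varphi_{p-1}$, the exponential map $\exp \maps \g \to G$ (a diffeomorphism because $G$ is exponential), and the smoothing factor $\ell$. The whole point of $\ell$ being constant $0$ on a neighborhood of $0$ is to force $f_p$ to be constantly $1$ on a neighborhood of the $0$-face of the prism, so that $f_p$ descends through the collapsing quotient $q_p$ to a smooth $\tilde{f}_p \maps \Delta^p \times G^p \to G$. The further extension to $\varphi_p$ by left translation preserves smoothness, since left multiplication on $G$ is smooth.

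The vertex property and left-invariance follow almost directly from the construction. For a based simplex $[1,g_1,\ldots,g_p]$: under $q_p$, the $0$-face of the prism collapses to the vertex $v_0$ of $\Delta^p$, where $f_p = 1$; the remaining vertices $v_1,\ldots,v_p$ of $\Delta^p$ are the images under $q_p$ of the vertices of the $1$-face $\{1\} \times \Delta^{p-1}$, and on this face $f_p = \varphi_{p-1}$, which by induction sends these vertices to $g_1,\ldots,g_p$ in order. Left-invariance is wired into the defining formula $\varphi_p(x,g,gg_1,\ldots,gg_p) = g \tilde{f}_p(x,g_1,\ldots,g_p)$, which also determines $\varphi_p$ on an arbitrary tuple $(g_0,g_1,\ldots,g_p)$ unambiguously by writing $g_i = g_0 \cdot (g_0^{-1}g_i)$.

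The face property is the main point requiring care. The $0$-face of $\Delta^p$ corresponds under $q_p$ to the $1$-face of the prism, on which $f_p = \varphi_{p-1}$, so the $0$-face of $[1,g_1,\ldots,g_p]$ is precisely the $(p-1)$-simplex $[g_1,\ldots,g_p]$. For $i \geq 1$, the $i$-face of $\Delta^p$ pulls back under $q_p$ to a subprism of the shape $[0,1] \times (\text{the } (i-1)\text{-face of } \Delta^{p-1})$; on this subprism, $f_p$ is exactly the apex-base construction applied to the restriction $\varphi_{p-1}|_{(i-1)\text{-face}}$, which by the inductive face property is the $(p-2)$-simplex $[g_1,\ldots,\hat{g}_i,\ldots,g_p]$. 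Thus the restriction of $\tilde{f}_p$ to the $i$-face of $\Delta^p$ coincides with $\tilde{f}_{p-1}(\,\cdot\,,g_1,\ldots,\hat{g}_i,\ldots,g_p)$, which is the based $(p-1)$-simplex $[1,g_1,\ldots,\hat{g}_i,\ldots,g_p]$; left translation then extends this to the non-based case. The principal obstacle is purely combinatorial bookkeeping: confirming that the quotient $q_p \maps [0,1] \times \Delta^{p-1} \to \Delta^p$ identifies the faces of the prism with the faces of $\Delta^p$ in exactly the pattern needed for the inductive face property of $\varphi_{p-1}$ to feed correctly into that of $\varphi_p$.
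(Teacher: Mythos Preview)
Your proof is correct and follows essentially the same approach as the paper: induction on $p$, with left-invariance and smoothness taken from the construction, and the vertex and face properties verified on based simplices via the apex-base construction and then extended by left translation. Your treatment is slightly more explicit than the paper's about how the quotient map $q_p$ aligns faces of the prism with faces of $\Delta^p$, but the underlying argument is the same.
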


\begin{proof}
	By construction, the $\varphi_p$ are all smooth and $G$-equivariant, so
	we only need to check the vertex property and the face property. We do
	this inductively.

	For 0-simplices, the vertex property is trivial. Assume it holds for
	$(p-1)$-simplices. In particular, the map
	\[ [g_1, \dots, g_p] \maps \Delta^{p-1} \to G \]
	sends the vertices of $\Delta^{p-1}$ to $g_1, \dots, g_p$, in that order.
	By construction, the based $p$-simplex
	\[ [1, g_1, \dots, g_p] \maps \Delta^p \to G \]
	sends the 0th vertex to 1 and the rest of the vertices to $g_1, \dots,
	g_p$, since the $(p-1)$-simplex $[g_1, \dots, g_p]$ has the vertex
	property and is defined to be the base of this $p$-simplex in the
	apex-base construction. By $G$-equivariance, this extends to all
	$p$-simplices.

	For 0-simplices, the face property holds vacuously, and for 1-simplices
	it is the same as the vertex property. Now take $p \geq 2$, and assume
	the face property holds for all $k$-simplices with $k < p$. By
	$G$-equivariance, the face property will hold for all $p$-simplices as
	long as it holds for all based $p$-simplices, for instance:
	\[ [1, g_1, \dots, g_p]. \]
	By the apex-base construction, the $(p-1)$-simplex $[g_1, \dots, g_p]$
	is the 0th face of $[1, g_1, \dots, g_p]$, since it was chosen as the
	base. For any other face, say the $i$th face, the apex-base
	construction gives the $(p-1)$-simplex
	\[ [1, g_1, \dots, \hat{g_i}, \dots, g_p] \maps \Delta^{p-1} \to G \]
	with 1 as apex, and the $(p-2)$-simplex $[g_1, \dots, \hat{g_i}, \dots,
	g_p]$ as base. Thus, the face property holds for the $p$-simplex $[1,
	g_1, \dots, g_p]$.
\end{proof}

While the existence of \emph{any} left-invariant notion of simplices in $G$ suffices
to integrate Lie algebra cochains, we have found an almost overwhelming wealth
of these notions---one for each smoothing factor $\ell$. In fact, for the
moment we will indicate the dependence of the standard notion of left-invariant
simplices on $\ell$ with a superscript:
\[ \varphi_p^\ell \maps \Delta^p \times G^{p+1} \to G . \]
Of course, the dependence of $\varphi_p^\ell$ on $\ell$ passes to the
individual simplices, so we give them a superscript as well:
\[ [g_0, \dots, g_p]^\ell \maps \Delta^p \to G . \]
Fortunately, however, the cochain map:
\[ \smallint \maps C^\bullet(\g) \to C^\bullet(G) \]
is \emph{independent} of $\ell$. That is, if $\ell'$ is another smoothing
factor, we have:
\[ \int_{[g_0, \dots, g_p]^\ell} \omega = \int_{[g_0, \dots, g_p]^{\ell'}} \omega , \]
for any left-invariant $p$-form $\omega$. 

We shall prove this not by comparing the integrals for two smoothing factors,
but rather computing the integral in a way that is manifestly independent of
smoothing factor. We do this by showing that the role of the smoothing factor
is basically to allow us to smoothly quotient the $p$-dimensional cube
$[0,1]^p$ down to the standard $p$-simplex $\Delta^p$. Had we parameterized our
$p$-simplices with cubes to begin with, we would have had no need for a smoothing
factor. As a trade off, however, our proof that integration gives a cochain map
would have required more care when analyzing the boundary.

Now we get to work. Rather than parameterizing the $p$-simplex on the domain
$\Delta^p$:
\[ [g_0, g_1, \dots, g_p]^\ell \maps \Delta^p \to G , \]
we shall show how to parameterize it on the $p$-dimensional cube:
\[ \langle g_0, g_1, \dots, g_p \rangle \maps [0,1]^p \to G , \]
That is, these two functions have the same images---a $p$-simplex in $G$ with
vertices $g_0, \dots, g_p \in G$, they induce the same orientations on
their images, and both traverse the image precisely once. So, as we shall
prove, the integral over either simplex is the same. But, as we shall also see,
the latter parameterization does not depend on the smoothing factor $\ell$.

How do we discover the parameterization $\langle g_0, \dots, g_p \rangle$? We
just repeat the apex-base construction, but we avoid quotienting to down to
$\Delta^p$! Begin by defining the 0-simplices to map the 0-dimensional cube to
the indicated vertex:
\[ \langle g_0 \rangle \maps \{0\} \to G \]
Define a 1-simplex by using the exponential map to sweep out a path from $g_0$
to $g_1$:
\[ \langle g_0, g_1 \rangle \maps [0,1] \to G , \] 
by defining:
\[ \langle g_0, g_1 \rangle (t_1) = g_0 \exp(t_1 X_1), \quad t_1 \in [0,1] . \]
where $g_0^{-1} g_1 = \exp(X_1)$. Now, define a 2-simplex using the exponential map to sweep out paths from $g_0$
to the 1-simplex $\langle g_1, g_2 \rangle$. That is, define:
\[ \langle g_0, g_1, g_2 \rangle \maps [0,1]^2 \to G , \]
to be given by:
\[ \langle g_0, g_1, g_2 \rangle (t_1, t_2) = g_0 \exp(t_1 Z(X_1, t_2 X_2)) , \]
where $g_0^{-1} g_1 = \exp(X_1)$, $g_1^{-1} g_2 = \exp(X_2)$, and $Z$ denotes the
Baker--Campbell--Hausdorff series:
\[ g_1 g_2 = \exp(Z(X_1, X_2)) = \exp(X_1 + X_2 + \half[X_1, X_2] + \cdots ) . \]
Continuing in this manner, with a bit of work one can see that the $p$-simplex:
\[ \langle g_0, g_1, g_2, \dots, g_{p-1}, g_p \rangle \maps [0,1]^p \to G \]
is given by the horrendous formula:
\[ 
\langle g_0, \dots , g_p \rangle (t_1, \dots, t_p) = g_0 \exp(t_1 Z(X_1, t_2 Z(X_2, \dots, t_{p-1} Z(X_{p-1} , t_p X_p) \dots ))) ,
\]
where $g_0^{-1} g_1 = \exp(X_1)$, $g_1^{-1} g_2 = \exp(X_2)$, \dots,
$g_{p-1}^{-1} g_p = \exp(X_p)$. While horrendous, this formula is at least
independent of the smoothing factor $\ell$, and this forms the basis of the
following proposition:

\begin{prop}
	Let $G$ be an exponential Lie group with Lie algebra $\g$, let $\ell$
	be a smoothing factor, and equip $G$ with the standard left-invariant
	notion of simplices with smoothing factor $\ell$. For any $p$-simplex
	\[ [g_0, \dots, g_p ]^\ell \maps \Delta^p \to G , \]
	depending on $\ell$ and parameterized on the domain $\Delta^p$, there
	is a $p$-simplex:
	\[ \langle g_0, \dots, g_p \rangle \maps [0,1]^p \to G \]
	given by the formula:
	\[ 
	\langle g_0, \dots , g_p \rangle (t_1, \dots, t_p) = g_0 \exp(t_1 Z(X_1, t_2 Z(X_2, \dots, t_{p-1} Z(X_{p-1} , t_p X_p) \dots ))) ,
	\]
	where $g_0^{-1} g_1 = \exp(X_1)$, $g_1^{-1} g_2 = \exp(X_2)$, \dots,
	$g_{p-1}^{-1} g_p = \exp(X_p)$.  Then $\langle g_0, \dots, g_p \rangle$
	is independent of $\ell$, parameterized on the domain $[0,1]^p$, and
	has the same image and orientation as $[g_0, \dots, g_p]^\ell$.
	Furthermore, for any $p$-form $\omega$ on $G$, the integral of $\omega$
	is the same over either simplex:
	\[ \int_{[g_0, \dots, g_p]^\ell} \omega = \int_{\langle g_0, \dots, g_p \rangle} \omega . \]
\end{prop}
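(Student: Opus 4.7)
The four conclusions split: (a) the formula for $\langle g_0, \dots, g_p \rangle$ is manifestly $\ell$-free, and (b) its domain $[0,1]^p$ is visible from the formula. The substance lies in (c) agreement of image and orientation with $[g_0, \dots, g_p]^\ell$, and (d) equality of integrals. I would prove (c) and (d) simultaneously by induction on $p$. The case $p=0$ is trivial; for $p=1$ both parameterizations trace the exponential arc from $g_0$ to $g_1$ with the same orientation, and the identity $\int_0^1 h(\ell(t))\ell'(t)\,dt = \int_0^1 h(s)\,ds$ for smooth $h$ (monotone substitution $s = \ell(t)$) finishes the base case.

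For the inductive step, use left-equivariance (Definition~\ref{def:simplices}, part 2) to reduce to $g_0 = 1$. Define $\Phi \maps [0,1] \times G \to G$ by $\Phi(s,g) = \exp(s \log g)$, which is smooth because $G$ is exponential. Both parameterizations then factor through $\Phi$:
\[ ([1, g_1, \dots, g_p]^\ell \circ q_p)(t,x) = \Phi(\ell(t),\, [g_1, \dots, g_p]^\ell(x)), \quad \langle 1, g_1, \dots, g_p\rangle(t_1, \vec t) = \Phi(t_1,\, \langle g_1, \dots, g_p\rangle(\vec t)). \]
Both are exponential sweeps from the identity to the $(p-1)$-dimensional base, which by the inductive hypothesis has the same image and orientation in $G$ whether parameterized over $\Delta^{p-1}$ or $[0,1]^{p-1}$. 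Since the sweep is uniquely determined by the base, the full $p$-simplices coincide as oriented subsets of $G$, settling (c).

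For (d), fix a $p$-form $\omega$ on $G$. Decompose $\Phi^*\omega = ds \wedge \beta$, where $\beta(s, \cdot)$ is a smooth family of $(p-1)$-forms on $G$; any no-$ds$ term would pull back to a $p$-form on the $(p-1)$-dimensional slice $\{s\}\times\Delta^{p-1}$ and must vanish. Since $q_p$ is a smooth, orientation-preserving surjection that is a diffeomorphism off the measure-zero $0$-face,
\[ \int_{\Delta^p} ([1, g_1, \dots, g_p]^\ell)^* \omega = \int_{[0,1] \times \Delta^{p-1}} (\ell \times [g_1, \dots, g_p]^\ell)^* \Phi^* \omega. \]
Fubini combined with the substitution $s = \ell(t)$ converts this to
\[ \int_0^1 \int_{\Delta^{p-1}} ([g_1, \dots, g_p]^\ell)^* \beta(s,\cdot)\,ds. \]
For each $s$, $\beta(s,\cdot)$ is a fixed $(p-1)$-form on $G$, independent of any choice of base parameterization, so the inductive hypothesis replaces the inner integral by $\int_{[0,1]^{p-1}} \langle g_1, \dots, g_p\rangle^* \beta(s,\cdot)$. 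Reassembling and carrying out the analogous computation for the cube side (trivially, as there is no $\ell$ and no $q_p$) produces $\int_{[0,1]^p} \langle 1, g_1, \dots, g_p\rangle^* \omega$, closing the induction.

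The main obstacle is the absence of any smooth bijection between $\Delta^p$ and $[0,1]^p$ that intertwines the two parameterizations on the nose: the quotient $q_p$ collapses the $0$-face, and $\ell$ is constant on a neighborhood of $0$, so the naive candidate for a reparameterization fails to be a diffeomorphism. The argument circumvents this by factoring both parameterizations through the common map $\Phi$, isolating the $\ell$-dependence into a one-dimensional Riemann-integral substitution where it is harmless, and leaning on the measure-zero character of the degenerate loci so they contribute nothing to the integral of a smooth $p$-form.
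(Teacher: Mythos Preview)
Your proof is correct and follows the same strategy as the paper's own argument: the paper gives only a two-sentence sketch invoking the apex--base construction for equality of images and orientations, and reparameterization invariance (via the change-of-variables formula, with monotonicity of $\ell$) for equality of integrals. You have fleshed out precisely these ideas---factoring both parameterizations through the common sweep map $\Phi$, handling the collapse $q_p$ as a degree-one map off a measure-zero set, and isolating the $\ell$-dependence into a one-dimensional substitution---so your argument is a detailed realization of the paper's sketch rather than a different route.
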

\begin{proof}[Sketch of proof]
	Equality of images and orientations follows from the apex-base
	construction, and equality of the integrals follows from
	reparameterization invariance---specifically the change of variables
	formula for multiple integrals, for which the monotonicity of $\ell$
	becomes crucial.
\end{proof}

\begin{cor}
	Let $G$ be an exponential Lie group with Lie algebra $\g$, let $\ell$
	be a smoothing factor, and equip $G$ with the standard left-invariant
	notion of simplices with smoothing factor $\ell$. Let 
	\[ \smallint \maps C^\bullet(\g) \to C^\bullet(G) \]
	be the cochain map from Lie algebra cochains to Lie group cochains
	given by integration over simplices. Then the cochain map $\smallint$
	is independent of the smoothing factor $\ell$.
\end{cor}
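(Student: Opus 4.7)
The plan is to deduce this corollary directly from the preceding proposition, which does essentially all of the work. That proposition provides, for each choice of vertices $(g_0, \dots, g_p)$, an alternative parameterization $\langle g_0, \dots, g_p \rangle \maps [0,1]^p \to G$ given by an explicit formula involving only the exponential map and iterated Baker--Campbell--Hausdorff expressions. Crucially, this alternative parameterization makes no reference to the smoothing factor $\ell$, and yet integrates any left-invariant $p$-form $\omega$ to the same value as the $\ell$-dependent simplex $[g_0, \dots, g_p]^\ell$.

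Concretely, my proof would unfold $\smallint$ applied to a left-invariant $p$-form $\omega$ in its homogeneous-cochain form:
\[
(\smallint \omega)(g_0, \dots, g_p) \;=\; \int_{[g_0, \dots, g_p]^\ell} \omega.
\]
By the preceding proposition, the right-hand side equals $\int_{\langle g_0, \dots, g_p \rangle} \omega$, which is manifestly independent of $\ell$ because the parameterization $\langle g_0, \dots, g_p \rangle$ is. Since this holds for every $\omega \in C^p(\g)$, every $p \geq 0$, and every tuple of vertices, the entire cochain map $\smallint \maps C^\bullet(\g) \to C^\bullet(G)$ is independent of $\ell$.

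There is essentially no obstacle here, since the heavy lifting --- comparing the two parameterizations as maps out of $\Delta^p$ and $[0,1]^p$, verifying that they traverse the same oriented image, and invoking the change-of-variables formula (where the monotonicity of $\ell$ is used) --- has already been absorbed into the preceding proposition. The only thing to be careful about is to recall that $\smallint \omega$, viewed as a homogeneous cochain, is determined by its values on tuples $(g_0, \dots, g_p)$, so $\ell$-independence at the level of each such tuple really does imply $\ell$-independence of the cochain map. In short, this corollary is merely a packaging statement repackaging the previous proposition as a statement about the cochain map rather than about individual integrals.
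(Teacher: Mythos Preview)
Your proposal is correct and follows essentially the same approach as the paper: unfold $(\smallint \omega)(g_0,\dots,g_p)$ as $\int_{[g_0,\dots,g_p]^\ell}\omega$, invoke the preceding proposition to replace this by $\int_{\langle g_0,\dots,g_p\rangle}\omega$, and observe that the latter parameterization is $\ell$-independent. The paper's proof is just a terser version of what you wrote.
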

\begin{proof}
	Recall that if $\omega$ is a left-invariant $p$-form on $G$, and $[g_0,
	\dots, g_p]^\ell$ is a $p$-simplex in $G$, the cochain map $\smallint$
	is defined by:
	\[ (\smallint \omega)(g_0, \dots, g_p) = \int_{[g_0, \dots, g_p]^\ell} \omega \]
	By the previous proposition, this integral is equal to 
	\[ \int_{\langle g_0, \dots, g_p \rangle} \omega , \]
	where $\langle g_0, \dots, g_p \rangle \maps [0,1]^p \to G$ is given as
	above, and is independent of $\ell$. Thus, $\smallint$ is also
	independent of $\ell$.
\end{proof}

Having proven that the cochain map $\smallint$ is independent of smoothing
factor, we will now allow the smoothing factor to recede into the background.
Henceforth, we abuse terminology somewhat and speak of \emph{the} standard
left-invariant notion of simplices to mean the standard notion with some
implicit choice of smoothing factor.

The hard work of integrating Lie algebra cochains is now done. We would now
like to go the other way, and show how to get a Lie algebra cochain from a Lie
group cochain. This direction is much easier: in essence, we differentiate the
Lie group cochain at the identity, and antisymmetrize the result. To do this,
we make use of the fact that any element of the Lie algebra can be viewed as a
directional derivative at the identity. The following result, due to van Est
(c.f.\ \cite{vanEst}, Formula 46) just says this map defines a cochain map:
\begin{prop}
	Let $G$ be a Lie group with Lie algebra $\g$. Then there is a cochain map
	from the Lie group cochain complex to the Lie algebra cochain complex:
	\[ D \maps C^{\bullet}(G) \to C^{\bullet}(\g) \]
	given by differentiation---that is, if $F$ is a homogeneous
	$p$-cochain on $G$, and $X_1, \dots, X_p \in \g$, then we can define:
	\[ DF(X_1, \dots, X_p) = \frac{1}{p!} \sum_{\sigma \in S_p} \mathrm{sgn}(\sigma) X_{\sigma(1)}^1 \dots X_{\sigma(p)}^p F(1, g_1, g_1 g_2, \dots, g_1 g_2 \dots g_p), \]
	where by $X_i^j$ we indicate that the operator $X_i$ differentiates
	only the $j$th variable, $g_j.$
\end{prop}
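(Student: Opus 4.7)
The proof has two distinct parts: first, verifying that $DF$ actually lands in $C^p(\g)$, and second, verifying that $D$ commutes with the two coboundary operators. The first part is nearly automatic. Each operator $X_i^j$ is, by definition of directional derivative at the identity, linear in $X_i$, so the composition $X^1_{\sigma(1)} \cdots X^p_{\sigma(p)}$ is $p$-linear in $(X_1, \ldots, X_p)$. The antisymmetrization $\frac{1}{p!} \sum_\sigma \mathrm{sgn}(\sigma)$ then forces $DF$ to be alternating, so $DF \in C^p(\g)$.

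For the cochain map property $D(dF) = d(DF)$, I would work with inhomogeneous cochains on the group side, so that
\[ df(g_1, \ldots, g_{p+1}) = f(g_2, \ldots, g_{p+1}) + \sum_{i=1}^p (-1)^i f(g_1, \ldots, g_i g_{i+1}, \ldots, g_{p+1}) + (-1)^{p+1} f(g_1, \ldots, g_p). \]
The first and last terms have no dependence on $g_1$ and $g_{p+1}$ respectively, so $X^1_{\sigma(1)}$ annihilates the first and $X^{p+1}_{\sigma(p+1)}$ annihilates the last. Only the merge terms contribute to $D(df)$. The key lemma I need is the identity
\[ (X^1 Y^2 - Y^1 X^2)\phi(g_1 g_2)\bigr|_{g_1 = g_2 = 1} = [X, Y]\phi\bigr|_1 \]
for any smooth $\phi \maps G \to \R$ and $X, Y \in \g$. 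This follows from the Baker--Campbell--Hausdorff expansion $\exp(sX)\exp(tY) = \exp(sX + tY + \frac{st}{2}[X,Y] + \cdots)$: Taylor-expanding $\phi \circ \exp$ at $0 \in \g$, the coefficient of $st$ splits into a symmetric bilinear Hessian part (which is symmetric in $X,Y$ and cancels upon antisymmetrization) and the antisymmetric contribution $\frac{1}{2}[X,Y]\phi\bigr|_1$, which doubles.

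For the $i$-th merge term in $D(df)$, I would pair each $\sigma \in S_{p+1}$ with $\sigma \circ (i, i+1)$; the sum of the two contributions replaces the second-order operator $X^i_{\sigma(i)} X^{i+1}_{\sigma(i+1)}$ acting on slot $i$ of $f$ (via the product $g_i g_{i+1}$) by the Lie bracket directional derivative $[X_{\sigma(i)}, X_{\sigma(i+1)}]$ in that slot, thanks to the key lemma. The remaining operators $X^j_{\sigma(j)}$ for $j \neq i, i+1$ simply differentiate $f$ at the identity in the neighbouring slots. Reindexing so that the bracket is pulled to the first argument, and collecting over all choices of the pair $\{\sigma(i), \sigma(i+1)\} = \{k, l\}$, the combinatorial factors $\frac{1}{(p+1)!}$ reorganize into $\frac{1}{p!}$ with sign $(-1)^{k+l}$, reproducing exactly the Chevalley--Eilenberg formula
\[ d(DF)(X_1, \ldots, X_{p+1}) = \sum_{k < l} (-1)^{k+l} DF([X_k, X_l], X_1, \ldots, \hat{X_k}, \ldots, \hat{X_l}, \ldots, X_{p+1}). \]

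The main obstacle will be bookkeeping: carefully tracking the signs $(-1)^i$ coming from the group coboundary against the sign changes from transposing $\sigma(i), \sigma(i+1)$ and from moving the bracketed slot to the first position, and confirming that the Hessian-type contributions genuinely cancel (rather than conspiring into something extra) when one sums over all of $S_{p+1}$ with signs. Once these combinatorial identities are pinned down, the two sides match term by term.
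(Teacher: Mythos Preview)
The paper does not give its own proof of this proposition; it simply cites Houard, p.~224, Lemma~1. Your outline is exactly the classical van Est/Houard argument that lives behind that citation: kill the two boundary terms of $df$ because they miss a variable, apply the identity $(X^1Y^2 - Y^1X^2)\phi(g_1g_2)\big|_{1} = [X,Y]\phi\big|_{1}$ (which you derive correctly from BCH) to each merge term by pairing $\sigma$ with $\sigma\circ(i,i+1)$, and then match the resulting bracket terms against the Chevalley--Eilenberg formula.

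Your identification of the remaining work as sign/combinatorial bookkeeping is accurate, and there is no hidden obstruction. One small organizational suggestion that makes the counting transparent: rather than ``pulling the bracket to the first argument,'' introduce the (non-antisymmetrized) multilinear form $\Phi(Y_1,\dots,Y_p) = Y_1^1\cdots Y_p^p\,f\big|_{1}$, so that $DF = \mathrm{Alt}\,\Phi$. After your pairing, the $i$-th merge term becomes $\frac{(-1)^i}{2(p+1)!}\sum_{\sigma}\mathrm{sgn}(\sigma)\,\Phi(X_{\sigma(1)},\dots,[X_{\sigma(i)},X_{\sigma(i+1)}],\dots,X_{\sigma(p+1)})$. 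Now reparametrize the double sum over $(i,\sigma)$ by the data: the pair $\{k,l\}=\{\sigma(i),\sigma(i+1)\}$, the slot $m=i$ of $\Phi$ receiving the bracket, and the bijection $\sigma$ induces on the complements. For fixed $k<l$ this sum over $(m,\text{bijection})$ is exactly $(p{+}1)\cdot p!$ times the antisymmetrization defining $DF$, divided appropriately, and the residual sign is $(-1)^{k+l}$. The Hessian terms you worry about are genuinely symmetric in $(X_{\sigma(i)},X_{\sigma(i+1)})$ and are annihilated by the same transposition pairing, so nothing extra survives.
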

\begin{proof}
	See Houard \cite{Houard}, p.\ 224, Lemma 1.
\end{proof}

Having now defined cochain maps
\[ \smallint \maps C^{\bullet}(\g) \to C^{\bullet}(G) \]
and 
\[ D \maps C^{\bullet}(G) \to C^{\bullet}(\g), \]
the obvious next question is whether or not this defines a homotopy equivalence
of cochain complexes. Indeed, as proved by Houard, they do:
\begin{thm}
	Let $G$ be a Lie group equipped with a left-invariant notion of
	simplices, and $\g$ its Lie algebra. The cochain map
	\[ D \smallint \maps C^{\bullet}(\g) \to C^{\bullet}(\g), \]
	is the identity, whereas the cochain map
	\[ \smallint D \maps C^{\bullet}(G) \to C^{\bullet}(G) \]
	is cochain-homotopic to the identity. Therefore the Lie algebra cochain
	complex $C^\bullet(\g)$ and the Lie group cochain complex
	$C^\bullet(G)$ are homotopy equivalent and thus have isomorphic
	cohomology.
\end{thm}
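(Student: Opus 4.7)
The plan is to prove the two claims separately: $D \smallint = \id$ on the nose via direct calculation, and $\smallint D \simeq \id$ via the construction of an explicit cochain homotopy. Once both are established, the statement that $C^\bullet(\g)$ and $C^\bullet(G)$ are homotopy equivalent, and hence have isomorphic cohomology, follows formally from standard homological algebra.

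For $D \smallint = \id$, I would use the cube parameterization from the preceding proposition together with the Baker--Campbell--Hausdorff series. Setting $g_i = \exp(s_i Y_i)$, BCH gives
\[ \langle 1, g_1, g_1 g_2, \dots, g_1 \cdots g_p \rangle(t_1, \dots, t_p) = \exp\bigl(s_1 t_1 Y_1 + s_2 t_1 t_2 Y_2 + \cdots + s_p t_1 \cdots t_p Y_p + O(s^2)\bigr), \]
with all commutator corrections pushed into terms of total order $s^{p+1}$ or higher in the integrand. Pulling back a left-invariant $p$-form $\omega$ and evaluating on the coordinate vector fields $\partial_{t_i}$, the antisymmetry of $\omega$ together with the fact that $\partial_{t_i}$ has a $Y_j$-coefficient only for $j \geq i$ forces the multi-index sum to collapse to the single diagonal term $j_i = i$. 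The resulting monomial $s_1 \cdots s_p \cdot t_1^{p-1} t_2^{p-2} \cdots t_{p-1}$ integrates over $[0,1]^p$ to $s_1 \cdots s_p / p!$, so
\[ \smallint \omega(\exp(s_1 Y_1), \dots, \exp(s_p Y_p)) = \frac{s_1 \cdots s_p}{p!}\, \omega(Y_1, \dots, Y_p) + O(s^{p+1}). \]
Applying $D$---that is, taking the mixed partials $\partial^p / \partial s_1 \cdots \partial s_p$ at $s = 0$ and antisymmetrizing in the $Y_i$---then extracts the coefficient, and the $p!$ from antisymmetrization cancels the $1/p!$ from the monomial integral to return $\omega(Y_1, \dots, Y_p)$ exactly.

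For $\smallint D \simeq \id$, I would construct an explicit cochain homotopy $K \maps C^p(G) \to C^{p-1}(G)$ satisfying $dK + Kd = \id - \smallint D$ via a prism built from the exponential structure. Given a combinatorial simplex $(g_0, \dots, g_p)$, introduce a rescaling parameter $\lambda \in [0,1]$ and set $g_i(\lambda) = g_0 \exp(\lambda \log(g_0^{-1} g_i))$, so that the family $[g_0, g_1(\lambda), \dots, g_p(\lambda)]$ interpolates smoothly between the original simplex at $\lambda = 1$ and an infinitesimal simplex at $\lambda = 0$. Define $KF$ by pairing $F$ with this 1-parameter family along its lateral boundary in $[0,1] \times \Delta^p$, and apply Stokes' theorem to the resulting $(p+1)$-dimensional region: the $\lambda = 1$ face reproduces $F$, the $\lambda = 0$ face reproduces $\smallint D F$ by the leading-order rescaling calculation (essentially the computation of the first part run in reverse), and the lateral faces contribute $dKF + KdF$ by the standard prism identity, giving the desired homotopy.

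The principal obstacle is the analysis of the $\lambda \to 0$ face of the prism: one must show that, after the appropriate normalization of the $\lambda$-scaling and BCH corrections, the restriction of $F$ to the infinitesimal simplex at $g_0$ converges exactly to $\smallint D F(g_0, \dots, g_p)$ rather than some off-by-a-combinatorial-factor variant. This is really the same Taylor-series computation that powered the proof of $D \smallint = \id$, and the two arguments must fit together so that the boundary terms of the prism cancel correctly. Once the rescaling and normalization are in place, the homotopy identity $dK + Kd = \id - \smallint D$ becomes a routine application of Stokes' theorem on $[0,1] \times \Delta^p$, as in Houard's original argument.
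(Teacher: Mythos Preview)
The paper does not prove this theorem; it cites Houard. So your proposal is being measured against what a self-contained argument would require.

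Your argument for $D\smallint = \id$ is correct in substance: the cube parameterization together with BCH gives exactly the leading term you write, and the triangularity constraint $j_i \geq i$ forcing $j_i = i$ is the right mechanism. One caveat on bookkeeping: with the paper's displayed normalization $DF = \tfrac{1}{p!}\sum_\sigma \mathrm{sgn}(\sigma)\,X_{\sigma(1)}^1 \cdots X_{\sigma(p)}^p F$, antisymmetrizing an already-antisymmetric quantity returns it unchanged rather than multiplying by $p!$, so read literally one obtains $D\smallint\omega = \tfrac{1}{p!}\omega$. This reflects a normalization mismatch between the paper's formula and Houard's (where the prefactor $\tfrac{1}{p!}$ is absent), not an error in your calculation; under Houard's convention your cancellation is exactly right.

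Your argument for $\smallint D \simeq \id$ has a genuine gap. A group $p$-cochain $F$ is a function $G^{p+1}\to\R$, not a differential form, so there is nothing to integrate over the geometric prism $[0,1]\times\Delta^p$ and no Stokes theorem to invoke. Your rescaling $g_i(\lambda)=g_0\exp\bigl(\lambda\log(g_0^{-1}g_i)\bigr)$ produces a path of combinatorial $(p{+}1)$-tuples, and evaluating $F$ along it yields only a scalar function of $\lambda$---not a $(p{-}1)$-cochain, and not a form whose boundary integral decomposes into the four pieces you list. Worse, at $\lambda=0$ the tuple degenerates to $(g_0,\dots,g_0)$, where $F$ takes a value unrelated to $\smallint DF(g_0,\dots,g_p)$; no ``leading-order rescaling'' of that limit recovers $\smallint DF$. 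A correct construction of $K$ must work at the level of combinatorial simplices: one inserts auxiliary vertices running along the exponential rays and uses the simplicial decomposition of $\Delta^1\times\Delta^p$ into $(p{+}1)$-simplices to write down an explicit integral formula for $KF(g_0,\dots,g_{p-1})$, then verifies $dK+Kd=\id-\smallint D$ by direct computation. The geometric Stokes picture you sketch does not furnish this, and it is not what Houard actually does.
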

\begin{proof}
	See Houard \cite{Houard}, p.\ 234, Proposition 2.
\end{proof}

\subsection{The Heisenberg Lie 2-group} \label{sec:heisenberg2group}

In Section \ref{sec:heisenberg2alg}, we met the Heisenberg Lie algebra,
$\mathfrak{H} = \mathrm{span}(p,q,z)$. This is the 3-dimensional Lie algebra
where the generators $p$, $q$ and $z$ satisfy relations which mimic the
canonical commutation relations from quantum mechanics:
\[ [p,q] = z, \quad [p,z] = 0, \quad [q,z] = 0 . \]
As one can see from the above relations, $\mathfrak{H}$ is 2-step nilpotent:
brackets of brackets are zero. 

We then met the Lie 2-algebra generalization, the Heisenberg Lie 2-algebra:
\[ \mathfrak{Heisenberg} = \strng_\gamma(\mathfrak{H}), \]
built by extending $\mathfrak{H}$ with the 3-cocycle $\gamma = p^* \wedge q^*
\wedge z^*$, where $p^*$, $q^*$, and $z^*$ is the basis dual to $p$, $q$ and
$z$.

It is easy to construct a Lie group $H$ with Lie algebra $\mathfrak{H}$. Just
take the group of $3 \times 3$ upper triangular matrices with units down the
diagonal:
\[ H = \left\{ \left(
		\begin{matrix} 
			1 & a & b \\ 
			0 & 1 & c \\
			0 & 0 & 1 \\
		\end{matrix} 
                \right)
		: a, b, c \in \R \right\} . 
\]
This is an exponential Lie group:
\[ \begin{array}{cccc} \exp \maps & \mathfrak{H} & \to     &  H \\
		                  & ap + cq + bz & \mapsto & \left( \begin{matrix} 1 & a & b \\ 0 & 1 & c \\ 0 & 0 & 1 \\ \end{matrix} \right) . 
   \end{array}
\]
So we can apply Proposition \ref{prop:standard} to construct the standard
left-invariant notion of simplices in $H$, and Proposition \ref{prop:integral}
to integrate the Lie algebra 3-cocycle $\gamma$ to a Lie group 3-cocycle
$\smallint \gamma$. We therefore get a Lie 2-group, the \define{Heisenberg Lie
2-group}:
\[ \mathrm{Heisenberg} = \String_{\smallint \gamma}(H) . \]

\section{Supergeometry and supergroups} \label{sec:supergeometry}

We would now like to generalize our work from Lie algebras and Lie groups to
Lie superalgebras and supergroups. Of course, this means that we need a way to
talk about Lie supergroups, their underlying supermanifolds, and the maps
between supermanifolds. This task is made easier because we do not need the
full machinery of supermanifold theory. Because our supergroups will be
exponential, we only need to work with supermanifolds that are diffeomorphic
to super vector spaces. Nonetheless, let us begin with a sketch of
supermanifold theory from the perspective that suits us best, which could
loosely be called the `functor of points' approach.

The rough geometric picture one should have of a supermanifold $M$ is that of
an ordinary manifold with infinitesimal `superfuzz', or `superdirections',
around each point. At the infinitesimal level, an ordinary manifold is merely a
vector space---its tangent space at a point. In contrast, the tangent space to
$M$ has a $\Z_2$-grading: tangent vectors which point along the underlying
manifold of $M$ are taken to be even, while tangent vectors which point along
the superdirections are taken to be odd.

At least infinitesimally, then, all supermanifolds look like super vector spaces, 
\[ \R^{p|q} : = \R^p \oplus \R^q , \] 
where $\R^p$ is even and $\R^q$ is odd.  And indeed, just as ordinary manifolds
are locally modeled on ordinary vector spaces, $\R^n$, supermanifolds are
locally modeled on super vector spaces, $\R^{p|q}$. But before we sketch how
this works, let us introduce our main tool: the so-called `functor of points'.

The basis for the functor of points is the Yoneda Lemma, a very general and
fundamental fact from category theory:
\begin{YL} 
	Let $C$ be a category. The functor  
	\[ \begin{array}{ccc} 
		 C & \to & \Fun(C^{\op}, \Set) \\
		 x & \mapsto & \Hom(-,x) \\
	\end{array} \]
	is a full and faithful embedding of $C$ into the category
	$\Fun(C^{\op}, \Set)$ of contravariant functors from $C$ to $\Set$.
	This embedding is called the \define{Yoneda embedding}. 
\end{YL}
The upshot of this lemma is that, without losing any information, we can
replace an object $x$ by a functor $\Hom(-,x)$, and a morphism $f \maps x
\to y$ by a natural transformation 
\[ \Hom(-,f) \maps \Hom(-,x) \Rightarrow \Hom(-,y) \] 
of functors. Each component of this natural transformation is the `obvious'
thing: for an object $z$, the function
\[ \Hom(z,f) \maps \Hom(z,x) \to \Hom(z,y) \]
just takes the morphism $g \maps z \to x$ to the morphism $fg \maps z \to y$.

On a more intuitive level, the functor of points tells us how to reconstruct a
`space' $x \in C$ by probing it with \emph{every other space} $z \in C$---that
is, by looking at all the ways in which $z$ maps into $x$, which forms the set
$\Hom(z,x)$. The true power of the functor of points, however, arises when we
can reconstruct $x$ without having to probe it will \emph{every} $z$, but with
$z$ from a manageable subcategory of $C$. And while it deviates slightly from the
spirit of the Yoneda Lemma, we can shrink this subcategory still further if we
allow $\Hom(z,x)$ to have more structure than that of a mere set. In fact, when
$M$ is a supermanifold, we will consider probes $z$ for which $\Hom(z,M)$ is an
ordinary manifold. 

For what $z$ is $\Hom(z,M)$ a manifold? One clue is that when $M$ is an
ordinary manifold, there is a manifold of ways to map a point into $M$:
\[ M \iso \Hom(\R^0, M) , \]
but the space of maps from any higher-dimensional manifold to $M$ is generally
not a finite-dimensional manifold in its own right. Similarly, when $M$ is a
supermanifold, there is an ordinary manifold of ways to map a point into $M$:
\[ M_{\R^{0|0}} = \Hom(\R^{0|0}, M) . \]
One should think of this as the ordinary manifold one gets from $M$ by
forgetting about the superdirections. But thanks to the superdirections, we now we have
more ways of obtaining a manifold of maps to $M$: there is an ordinary manifold
of ways to map a point with $q$ superdirections into $M$:
\[ M_{\R^{0|q}} = \Hom(\R^{0|q}, M) . \]
So, for every supermanifold $M$, we get a functor:
\[ \begin{array}{cccc} 
	\Hom(-,M) \maps & \SuperPoints^\op & \to     & \Man \\
		        & \R^{0|q}         & \mapsto & \Hom(\R^{0|q}, M) 
\end{array}
\]
where $\SuperPoints$ is the category consisting of supermanifolds of the form
$\R^{0|q}$ and smooth maps between them. Of course, we have not yet said what
this category is precisely, but one should think of $\R^{0|q}$ as a
supermanifold whose underlying manifold consists of one point, with $q$
infinitesimal superdirections---a `superpoint'. Because this lets us
probe the superdirections of $M$, this functor has enough information to completely
reconstruct $M$. We will go further, however, and sketch how to define $M$ as a
certain kind of functor from $\SuperPoints^\op$ to $\Man$.

This approach goes back to Schwarz \cite{Schwarz} and Voronov \cite{Voronov},
who used it to formalize the idea of `anticommuting coordinates' used in the
physics literature. Since Schwarz, a number of other authors have developed the
functor of points approach to supermanifolds, most recently Sachse
\cite{Sachse} and Balduzzi, Carmeli and Fioresi \cite{BCF}. We will follow
Sachse, who defines supermanifolds entirely in terms of their functors of
points, rather than using sheaves.

\subsection{Supermanifolds} \label{sec:supermanifolds}

\subsubsection{Super vector spaces as supermanifolds}

Let us now dive into supermathematics. Our main need is to define smooth maps
between super vector spaces, but we will sketch the full definition of
supermanifolds and the smooth maps between them. Just as an ordinary manifold
is a space that is locally modeled on a vector space, a supermanifold is
locally modeled on a super vector space. Since we will define a supermanifold
$M$ as a functor
\[ M \maps \SuperPoints^\op \to \Man , \]
we first need to say how to think of the simplest kind of supermanifold, a
super vector space $V$, as such a functor:
\[ V \maps \SuperPoints^\op \to \Man . \]
But first we owe the reader a definition of the category of superpoints.

Recall from Section \ref{sec:Lie-n-superalgebras} that a \define{super vector
space} is a $\Z_2$-graded vector space $V = V_0 \oplus V_1$ where $V_0$ is
called the \define{even} part, and $V_1$ is called the \define{odd} part.
There is a symmetric monoidal category $\SuperVect$ which has:
\begin{itemize}
	\item $\Z_2$-graded vector spaces as objects;
	\item Grade-preserving linear maps as morphisms;
	\item A tensor product $\tensor$ that has the following grading: if $V
		= V_0 \oplus V_1$ and $W = W_0 \oplus W_1$, then $(V \tensor
		W)_0 = (V_0 \tensor W_0) \oplus (V_1 \tensor W_1)$ and 
              $(V \tensor W)_1 = (V_0 \tensor W_1) \oplus (V_1 \tensor W_0)$;
	\item A braiding
		\[ B_{V,W} \maps V \tensor W \to W \tensor V \]
		defined as follows: $v \in V$ and $w \in W$ are of grade $|v|$
		and $|w|$, then
		\[ B_{V,W}(v \tensor w) = (-1)^{|v||w|} w \tensor v. \]
\end{itemize}
The braiding encodes the `the rule of signs': in any calculation, when two odd
elements are interchanged, we introduce a minus sign. We write $\R^{p|q}$ for
the super vector space with even part $\R^p$ and odd part $\R^q$.

We define a \define{supercommutative superalgebra} to be a commutative algebra $A$
in the category $\SuperVect$.  More concretely, it is a real, associative
algebra $A$ with unit which is $\Z_2$-graded:
\[ A = A_0 \oplus A_1, \]
and is graded-commutative. That is:
\[ ab = (-1)^{|a||b|} ba, \]
for all homogeneous elements $a, b \in A$, as required by the rule of signs. We
define a \define{homomorphism of superalgebras} $f \maps A \to B$ to be an
algebra homomorphism that respects the grading.  So, there is a category
$\SuperAlg$ with supercommutative superalgebras as objects, and homomorphisms
of superalgebras as morphisms. Henceforth, we will assume all our superalgebras
to be supercommutative unless otherwise indicated.

A particularly important example of a supercommutative superalgebra is a
\define{Grassmann algebra}: a finite-dimensional exterior algebra
\[ A = \Lambda \R^n, \]
equipped with the grading:
\[ A_0 = \Lambda^0 \R^n \oplus \Lambda^2 \R^n \oplus \cdots, \quad A_1 = \Lambda^1 \R^n \oplus \Lambda^3 \R^n \oplus \cdots . \]
Let us write $\GrAlg$ for the category with Grassmann algebras as objects and
homomorphisms of superalgebras as morphisms. 

In fact, the Grassmann algebras are essential for our approach to supermanifold
theory, because:
\[ \GrAlg = \SuperPoints^\op \]
so rather than thinking of a supermanifold $M$ as a contravariant functor from
$\SuperPoints$ to $\Man$, we can view a supermanifold as a covariant functor:
\[ M \maps \GrAlg \to \Man \]
To see why this is sensible, recall that a smooth map between ordinary manifolds
\[ \varphi \maps M \to N \]
is the same as a homomorphism between their algebras of smooth functions which
goes the other way:
\[ \varphi^* \maps C^\infty(N) \to C^\infty(M) \]
By analogy, we expect something similar to hold for supermanifolds. In particular, 
a smooth map from a superpoint:
\[ \varphi \maps \R^{0|q} \to M \]
ought to be to the same as a homomorphism of their `superalgebras of smooth
functions' which points the other way:
\[ \varphi^* \maps C^{\infty}(M) \to C^{\infty}(\R^{0|q}) . \]
But since $\R^{0|q}$ is a purely odd super vector space, we \emph{define} its
algebra of smooth functions to be $\Lambda(\R^q)^*$.  Intuitively, this is
because $\R^{0|q}$ is a supermanifold with $q$ `odd, anticommuting
coordinates', given by the standard projections:
\[ \theta^1, \dots, \theta^q \maps \R^q \to \R , \] 
so a `smooth function' $f$ on $\R^{0|q}$ should have a `power series expansion'
that looks like:
\[ f = \sum_{i_1 < i_2 < \cdots < i_k} f_{i_1 i_2 \dots i_k} \theta^{i_1} \wedge \theta^{i_2} \wedge \cdots \wedge \theta^{i_k} . \]
where the coefficients $f_{i_1 i_2 \dots i_k}$ are real. Such $f$ is precisely
an element of $\Lambda(\R^q)^*$, so we \emph{define}
\[ \Hom(\R^{0|q}, M) = \Hom(C^\infty(M), \Lambda(\R^q)^*) . \]
In this way, rather than thinking of $M$ as a functor:
\[ \begin{array}{cccc} 
	\Hom(-,M) \maps & \SuperPoints^\op & \to     & \Man \\
		        & \R^{0|q}         & \mapsto & \Hom(\R^{0|q}, M) 
\end{array}
\]
where $\Hom$ is in the category of supermanifolds (though we have not defined
this), we think of $M$ as a functor:
\[ \begin{array}{cccc} 
	\Hom(C^{\infty}(M),-) \maps & \GrAlg  & \to     & \Man \\
	                            & A & \mapsto & \Hom(C^\infty(M),A)
\end{array}
\]
where $\Hom$ is in the category of superalgebras (which we have defined, though
we have not defined $C^\infty(M)$).

Since we have just given a slew of definitions, let us bring the discussion
back down to earth with a concise summary:
\begin{itemize}
	\item Every supermanifold is a functor: 
		\[ M \maps \GrAlg \to \Man , \]
		though not every such functor is a supermanifold.
	\item Every smooth map of supermanifolds is a natural transformation: 
	       \[ \varphi \maps M \to N , \]
	      though not every such natural transformation is a smooth map of
	      supermanifolds. 
\end{itemize}
Next, let us introduce some concise notation:
\begin{itemize}
	\item Let us write $M_A$ for the value of $M$ on the Grassmann
	       algebra $A$, and call this the \define{$A$-points}
	       of $M$.
	\item Let us write $M_f \maps M_A \to M_{B}$ for the smooth
	       map induced by a homomorphism $f \maps A \to B$.
	\item Finally, we write $\varphi_A \maps M_A \to N_A$ for the smooth
		map which the natural transformation $\varphi$ gives between
		the $A$-points. We call $\varphi_A$ a \define{component} of the
		natural transformation $\varphi$.
\end{itemize}

With this background, we can now build up the theory of supermanifolds in
perfect analogy to the theory of manifolds. First, we need to say how to think
of our model spaces, the super vector spaces, as supermanifolds. 

Indeed, given a finite-dimensional super vector space $V$, define the \define{
supermanifold associated to $V$}, or just the \define{supermanifold $V$} to be
the functor:
\[ V \maps \GrAlg \to \Man \]
which takes:
\begin{itemize}
	\item each Grassmann algebra $A$ to the vector space:
		\[ V_A = (A \tensor V)_0 = A_0 \tensor V_0 \, \oplus \, A_1 \tensor V_1 \]
		regarded as a manifold in the usual way;
	\item each homomorphism $f \maps A \to B$ of Grassmann algebras to the
		linear map $V_f \maps V_A \to V_B$ that is the identity on
		$V$ and $f$ on $A$:
		\[ V_f = (f \tensor 1)_0 \maps (A \tensor V)_0 \to (B \tensor V)_0 . \]
		This map, being linear, is also smooth.
\end{itemize}
We take this definition because, roughly speaking, the set of $A$-points is
the set of homomorphisms of superalgebras, $\Hom(C^\infty(V),
A)$. By analogy with the ordinary manifold case, we expect that any such
homomorphism is determined by its restriction to the `dense subalgebra' of
polynomials: 
\[ \Hom(C^\infty(V), A) \iso \Hom(\Sym(V^*), A) , \]
though here we are being very rough, because we have not assumed any topology
on our superalgebras, so the term `dense subalgebra' is not meaningful.
Since $\Sym(V^*)$ is the free supercommutative superalgebra on $V^*$, a
homomorphism out of it is the same as a linear map of super vector spaces:
\[ \Hom(\Sym(V^*), A) \iso \Hom(V^*, A) , \]
where the first $\Hom$ is in $\SuperAlg$ and the second $\Hom$ is in
$\SuperVect$.  Finally, because $V$ is finite-dimensional and linear maps of
super vector spaces preserve grading, this last $\Hom$ is just:
\[ \Hom(V^*, A) \iso V_0 \tensor A_0 \, \oplus \, V_1 \tensor A_1 . \]
which, up to a change of order in the factors, is how we defined $V_A$. This
last set is a manifold in an obvious way: it is an ordinary,
finite-dimensional, real vector space. In fact, it is just the even part of the
super vector space $A \tensor V$:
\[ V_{A} = (A \tensor V)_0 , \]
as we have noted in our definition. 

Further, $V_A = A_0 \tensor V_0 \, \oplus \, A_1 \tensor V_1$ is more than a
mere vector space---it is an $A_0$-module. Moreover, given any linear map of
super vector spaces:
\[ L \maps V \to W \]
we get an $A_0$-module map between the $A$-points in a natural way:
\[ L_A = (1 \tensor L)_0 \maps (A \tensor V)_0 \to (A \tensor W)_0 . \]
Indeed, $L$ induces a natural transformation between the supermanifold $V$ and
the supermanifold $W$. That is, given any homomorphism $f \maps A \to B$ of
Grassmann algebras, the following square commutes:
\[ \xymatrix{   V_A \ar[r]^{L_A} \ar[d]_{V_f} & W_A \ar[d]^{W_f} \\
		V_B \ar[r]_{L_B}              & W_B } 
\]
We therefore have a functor 
\[ \SuperVect \to \Fun(\GrAlg, \Man) \]
which takes super vector spaces to their associated supermanifolds, and linear
transformations to natural transformations between supermanifolds. For future
reference, we note this fact in a proposition:
\begin{prop} \label{prop:supervectormanifold}
	There is a faithful functor:
	\[ \SuperVect \to \Fun(\GrAlg, \Man) \]
	that takes a super vector space $V$ to the supermanifold $V$ whose
	$A$-points are:
	\[ V_A = (A \tensor V)_0, \]
	and takes a linear map of super vector spaces:
	\[ L \maps V \to W \]
	to the natural transformation whose components are:
	\[ L_A = (1 \tensor L)_0 \maps (A \tensor V)_0 \to (A \tensor W)_0 . \]
	In the above, $A$ is a Grassmann algebra and the tensor product takes
	place in $\SuperVect$. 
\end{prop}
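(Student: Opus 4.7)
The plan is to verify in turn that: (i) for each super vector space $V$, the rule $A \mapsto V_A = (A \tensor V)_0$, $f \mapsto V_f = (f \tensor 1_V)_0$ genuinely defines a functor $\GrAlg \to \Man$; (ii) for each grade-preserving linear map $L \maps V \to W$, the family of maps $L_A = (1_A \tensor L)_0$ assembles into a natural transformation $V \To W$; (iii) the assignment $V \mapsto V$, $L \mapsto (L_A)_A$ is itself a functor $\SuperVect \to \Fun(\GrAlg, \Man)$; and (iv) this functor is faithful.

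The first three items are essentially formal and reduce to the bifunctoriality of $\tensor$ in $\SuperVect$ together with the fact that taking the even part is a functor $\SuperVect \to \Vect$. For (i), each $V_A$ is a finite-dimensional real vector space, hence a smooth manifold, and each $V_f$ is $\R$-linear, hence smooth; identities and composition are preserved because $(1_A \tensor 1_V)_0 = 1_{V_A}$ and $((gf) \tensor 1_V)_0 = (g \tensor 1_V)_0 \circ (f \tensor 1_V)_0$. For (ii), each $L_A$ is again linear, hence smooth, and the naturality square for a morphism $f \maps A \to B$ commutes because both composites agree with $(f \tensor L)_0$; at the level of $\SuperVect$ this is just $(f \tensor 1_W) \circ (1_A \tensor L) = f \tensor L = (1_B \tensor L) \circ (f \tensor 1_V)$. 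Item (iii) amounts to $(1_A \tensor \id_V)_0 = \id_{V_A}$ and $(1_A \tensor ML)_0 = (1_A \tensor M)_0 \circ (1_A \tensor L)_0$.

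The only subtle point, which I expect to be the main obstacle, is faithfulness. A grade-preserving linear map $L \maps V \to W$ is determined by its even component $L|_{V_0} \maps V_0 \to W_0$ together with its odd component $L|_{V_1} \maps V_1 \to W_1$, and evaluating at $A = \R$ would only recover the first of these, since $V_\R = V_0$ and $L_\R = L|_{V_0}$. To detect the odd component I would evaluate at the Grassmann algebra $A = \Lambda \R^1$ with odd generator $\theta$: here $V_A = V_0 \oplus \theta \tensor V_1$, and $L_A(v_0 + \theta \tensor v_1) = L(v_0) + \theta \tensor L(v_1)$. Hence if $L$ and $L'$ induce the same natural transformation, then in particular $L_A = L'_A$ for this $A$, forcing $L$ and $L'$ to agree on both $V_0$ and $V_1$, so $L = L'$. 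This completes the verification.
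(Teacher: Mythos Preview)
Your argument is correct. The functoriality and naturality checks are exactly as routine as you say, and your faithfulness argument is clean: the single Grassmann algebra $A = \Lambda \R^1$ already has enough odd directions to see the odd component of $L$, so evaluating at this one $A$ distinguishes morphisms.

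The paper takes a different route. It dismisses the functoriality check as easy (as you do) but then, rather than arguing faithfulness directly, cites a more general result of Sachse. Your approach is more elementary and self-contained: instead of appealing to an external source, you exhibit a single probe $A = \Lambda \R^1$ that witnesses faithfulness. What the citation buys is brevity and a pointer to a broader framework; what your argument buys is that the reader sees exactly why faithfulness holds, namely that the odd generator $\theta$ lets one read off $L|_{V_1}$ from $L_A$.
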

\begin{proof}
	It is easy to check that this defines a functor. Faithfulness follows
	from a more general result in Sachse \cite{Sachse}, c.f.\ Proposition
	3.1.
\end{proof}
\noindent While this functor is faithful, it is far from full; in particular,
it misses all of the `smooth maps' between super vector spaces which do not
come from a linear map. We define these additional maps now.

Infinitesimally, all smooth maps should be like a linear map $L \maps V \to
W$, so given two finite-dimensional super vector spaces $V$ and $W$,
we define a \define{smooth map between super vector spaces}:
\[ \varphi \maps V \to W , \]
to be a natural transformation between the supermanifolds $V$ and $W$ such that
the derivative
\[ (\varphi_A)_* \maps T_x V_A \to T_{\varphi_A(x)} W_A \]
is $A_0$-linear at each $A$-point $x \in V_A$, where the $A_0$-module structure
on each tangent space comes from the canonical identification of a vector space
with its tangent space:
\[ T_x V_A \iso V_A, \quad T_{\varphi(x)} W_A \iso W_A . \]
Note that each component $\varphi_A \maps V_A \to W_A$ is smooth in the
ordinary sense, by virtue of living in the category of smooth
manifolds. We say that a smooth map $\varphi_A \maps V_A \to W_A$ whose
derivative is $A_0$-linear at each point is \define{$A_0$-smooth} for short.

Finally, note that there is a supermanifold:
\[ 1 \maps \GrAlg \to \Man , \]
which takes each Grassmann algebra to the one-point manifold. We call this the
\define{one-point supermanifold}, and note that it is the supermanifold
associated to the super vector space $\R^{0|0}$. The one-point supermanifold is
the terminal object in the category of supermanifolds, whose definition we now
describe. 

\subsubsection{Supermanifolds in general}

The last section treated the special kind of a supermanifold of greatest
interest to us: the supermanifold associated to a super vector space.

Nonetheless, we now sketch how to define a general supermanifold, $M$. Since
$M$ will be locally isomorphic to a super vector space $V$, it helps to have
local pieces of $V$ to play the same role that open subsets of $\R^n$ play for
ordinary manifolds.  So, fix a super vector space $V$, and let $U \subseteq
V_0$ be open. The \define{superdomain} over $U$ is the functor: 
\[ \mathcal{U} \maps \GrAlg \to \Man \]
that takes each Grassmann algebra $A$ to the following open subset of $V_A$:
\[ \mathcal{U}_A = V_{\epsilon_A}^{-1}(U) \]
where $\epsilon_A \maps A \to \R$ is the projection of the Grassmann algebra
$A$ that kills all nilpotent elements. We say that $\mathcal{U}$ is a
\define{superdomain in $V$}, and write $\mathcal{U} \subseteq V$.

If $\mathcal{U} \subseteq V$ and $\mathcal{U}' \subseteq W$ are two
superdomains in super vector spaces $V$ and $W$, a \define{smooth map of
superdomains} is a natural transformation:
\[ \varphi \maps \mathcal{U} \to \mathcal{U}' \]
such that for each Grassmann algebra $A$, the component on
$A$-points is smooth:
\[ \varphi_{A} \maps \mathcal{U}_A \to {\mathcal{U}'}_A . \]
and the derivative:
\[ (\varphi_A)_* \maps T_x \mathcal{U}_A \to T_{\varphi_A(x)} {\mathcal{U}'}_A \]
is $A_0$-linear at each $A$-point $x \in \mathcal{U}_A$, where the  
$A_0$-module structure on each tangent space comes from the canonical
identification with the ambient vector spaces:
\[ T_x \mathcal{U}_A \iso V_A, \quad T_{\varphi(x)} \mathcal{U}'_A \iso W_A . \]
Again, we say that a smooth map $\varphi_{A} \maps \mathcal{U}_A \to
{\mathcal{U}'}_A$ whose derivative is $A_0$-linear at each point is
\define{$A_0$-smooth} for short.

At long last, a \define{supermanifold} is a functor
\[ M \maps \GrAlg \to \Man \]
equipped with an atlas 
\[ (\mathcal{U}_\alpha, \varphi_\alpha \maps \mathcal{U} \to M) , \]
where each $\mathcal{U}_\alpha$ is a superdomain, each $\varphi_\alpha$ is a
natural transformation, and one can define transition functions that are smooth
maps of superdomains. 

Finally, a \define{smooth map of supermanifolds} is a natural
transformation:
\[ \psi \maps M \to N \]
which induces smooth maps between the superdomains in the atlases.
Equivalently, each component 
\[ \varphi_A \maps M_A \to N_A \] 
is \define{$A_0$-smooth}: it is smooth and its derivative
\[ (\varphi_A)_* \maps T_x M_A \to T_{\varphi_A(x)} N_A \]
is $A_0$-linear at each $A$-point $x \in M_A$, where the $A_0$-module structure
on each tangent space comes from the superdomains in the atlases. Thus, there
is a category $\SuperMan$ of supermanifolds. See Sachse \cite{Sachse} for more
details. 

\subsection{Supergroups from nilpotent Lie superalgebras} \label{sec:supergroups}

We now describe a procedure to integrate a nilpotent Lie superalgebra to a Lie
supergroup. This is a partial generalization of Lie's Third Theorem, which
describes how any Lie algebra can be integrated to a Lie group. In fact, the
full theorem generalizes to Lie supergroups \cite{Tuynman}, but we do not need
it here.

Recall from Section \ref{sec:Lie-n-superalgebras} that a \define{Lie superalgebra} $\g$
is a Lie algebra in the category of super vector spaces. More concretely, it is
a super vector space $\g = \g_0 \oplus \g_1$, equipped with a
graded-antisymmetric bracket:
\[ [-,-] \maps \Lambda^2 \g \to \g , \]
which satisfies the Jacobi identity up to signs:
\[ [X, [Y,Z]] = [ [X,Y], Z] + (-1)^{|X||Y|} [Y, [X, Z]]. \]
for all homogeneous $X, Y, Z \in \g$.  A Lie superalgebra $\n$ is called
\define{k-step nilpotent} if any $k$ nested brackets vanish, and it is called
\define{nilpotent} if it is $k$-step nilpotent for some $k$. Nilpotent Lie
superalgebras can be integrated to a unique supergroup $N$ defined on the same
underlying super vector space $\n$.

A \define{Lie supergroup}, or \define{supergroup}, is a group object in the
category of supermanifolds. That is, it is a supermanifold $G$ equipped with
the following maps of supermanifolds:
\begin{itemize}
	\item \define{multiplication}, $m \maps G \times G \to G$;
	\item \define{inverse}, $\inv \maps G \to G$;
	\item \define{identity}, $\id \maps 1 \to G$, where $1$ is the one-point
		supermanifold;
\end{itemize}
such that the following diagrams commute, encoding the usual group axioms:
\begin{itemize}
	\item the associative law:
	\[ \vcenter{
	\xymatrix{ &   G \times G \times G \ar[dr]^{1 \times m}
	   \ar[dl]_{m \times 1} \\
	 G \times G \ar[dr]_{m}
	&&  G \times G \ar[dl]^{m}  \\
	&  G }}
	\]
	\item the right and left unit laws:
	\[ \vcenter{
	\xymatrix{
	 I \times G \ar[r]^{\id \times 1} \ar[dr]
	& G \times G \ar[d]_{m}
	& G \times I \ar[l]_{1 \times \id} \ar[dl] \\
	& G }}
	\]
	\item the right and left inverse laws:
	\[
	\xy (-12,10)*+{G \times G}="TL"; (12,10)*+{G \times G}="TR";
	(-18,0)*+{G}="ML"; (18,0)*+{G}="MR"; (0,-10)*+{1}="B";
	     {\ar_{} "ML";"B"};
	     {\ar^{\Delta} "ML";"TL"};
	     {\ar_{\id} "B";"MR"};
	     {\ar^{m} "TR";"MR"};
	     {\ar^{1 \times \inv } "TL";"TR"};
	\endxy
	\qquad \qquad \xy (-12,10)*+{G \times G}="TL"; (12,10)*+{G \times
	G}="TR"; (-18,0)*+{G}="ML"; (18,0)*+{G}="MR"; (0,-10)*+{1}="B";
	     {\ar_{} "ML";"B"};
	     {\ar^{\Delta} "ML";"TL"};
	     {\ar_{\id} "B";"MR"};
	     {\ar^{m} "TR";"MR"};
	     {\ar^{\inv \times 1} "TL";"TR"};
	\endxy
	\]
\end{itemize}
where $\Delta \maps G \to G \times G$ is the diagonal map. In addition, a
supergroup is \define{abelian} if the following diagram commutes:
\[ \xymatrix{ 
G \times G \ar[r]^\tau \ar[dr]_m & G \times G \ar[d]^m \\
                                 & G
}
\]
where $\tau \maps G \times G \to G \times G$ is the \define{twist map}. Using
$A$-points, it is defined to be:
\[ \tau_A(x,y) = (y,x), \]
for $(x,y) \in G_A \times G_A$.

Examples of supergroups arise easily from Lie groups. We can regard any
ordinary manifold as a supermanifold, and so any Lie group $G$ is also a
supergroup. In this way, any classical Lie group, such as $\SO(n)$, $\SU(n)$
and $\Sp(n)$, becomes a supergroup.

To obtain more interesting examples, we will integrate a nilpotent Lie
superalgebra, $\n$ to a supergroup $N$. For any Grassmann algebra $A$, the bracket 
\[ [-,-] \maps \Lambda^2 \n \to \n \]
induces an $A_0$-linear map between the $A$-points:
\[ [-,-]_A \maps \Lambda^2 \n_A \to \n_A, \]
where $\Lambda^2 \n_A$ denotes the exterior square of the $A_0$-module $\n_A$.
Thus $[-,-]_A$ is antisymmetric, and it easy to check that it makes $\n_A$ into
a Lie algebra which is also nilpotent.

On each such $A_0$-module $\n_A$, we can thus define a Lie group $N_A$ where
the multplication is given by the Baker--Campbell--Hausdorff formula, inversion
by negation, and the identity is $0$. Because we want to write the group $N_A$
multiplicatively, we write $\exp_A \maps n_A \to N_A$ for the identity map, and
then define the multiplication, inverse and identity:
\[ m_A \maps N_A \times N_A \to N_A, \quad \inv_A \maps N_A \to N_A, \quad \id_A \maps 1_A \to N_A, \]
as follows:
\[ m_A(\exp_A(X), \exp_A(Y)) = \exp_A(X) \exp_A(Y) = \exp_A(X + Y + \half[X,Y]_A + \cdots ) \]
\[ \inv_A(\exp_A(X)) = \exp_A(X)^{-1} = \exp_A(-X) , \]
\[ \id_A(1) = 1 = \exp_A(0),  \]
for any $A$-points $X, Y \in \n_A$, where the first 1 in the last equation
refers to the single element of $1_A$. But it is clear that all of these maps
are natural in $A$. Furthermore, they are all $A_0$-smooth, because as
polynomials with coefficients in $A_0$, they are smooth with derivatives that
are $A_0$-linear.  They thus define smooth maps of supermanifolds:
\[ m \maps N \times N \to N, \quad \inv \maps N \to N, \quad \id \maps 1 \to N, \]
where $N$ is the supermanifold $\n$. And because each of the $N_A$ is a group,
$N$ is a supergroup. We have thus proved:

\begin{prop} \label{prop:nilpotentsupergroup}
	Let $\n$ be a nilpotent Lie superalgebra. Then there is a supergroup
	$N$ defined on the supermanifold $\n$, obtained by integrating the
	nilpotent Lie algebra $\n_A$ with the Baker--Campbell--Hausdorff
	formula for all Grassmann algebras $A$. More precisely, we define the maps:
	\[ m \maps N \times N \to N, \quad \inv \maps N \to N, \quad \id \maps 1 \to N, \]
	by defining them on $A$-points as follows:
	\[ m_A(\exp_A(X), \exp_A(Y)) = \exp_A(Z(X,Y)), \]
	\[ \inv_A(\exp_A(X)) = \exp_A(-X), \]
	\[ \id_A(1) = \exp_A(0), \]
	where 
	\[ \exp \maps \n \to N \]
	is the identity map of supermanifolds, and:
	\[ Z(X,Y) = X + Y + \half[X,Y]_A + \cdots \]
	denotes the Baker--Campbell--Hausdorff series on $\n_A$, which
	terminates because $\n_A$ is nilpotent.
\end{prop}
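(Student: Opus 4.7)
The plan is to verify four things from which the proposition follows: (1) for each Grassmann algebra $A$, the $A$-points $\n_A = (A \tensor \n)_0$ form an ordinary nilpotent Lie algebra over $A_0$, so that the Baker--Campbell--Hausdorff series truncates; (2) the group axioms hold for $(N_A, m_A, \inv_A, \id_A)$; (3) the maps $m_A, \inv_A, \id_A$ are natural in $A$, and hence define natural transformations $m, \inv, \id$ of the underlying supermanifold functors; (4) each component is $A_0$-smooth. Once these are in hand, the commutative diagrams in the definition of a supergroup follow componentwise, since a diagram of natural transformations commutes iff it commutes on all components.

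For step (1), I would endow $\n_A$ with a bracket $[-,-]_A$ by combining the multiplication $\mu \maps A \tensor A \to A$ with the super bracket $[-,-] \maps \n \tensor \n \to \n$, braiding $\n$ past $A$ in the middle. Because the braiding in $\SuperVect$ contributes a Koszul sign that is absorbed upon restricting to the even part $(A \tensor \n)_0$, the result is an ordinary $A_0$-bilinear, antisymmetric bracket. Explicitly, on decomposable elements $a \tensor X, b \tensor Y \in \n_A$ one gets $[a \tensor X, b \tensor Y]_A = (-1)^{|X||b|} ab \tensor [X,Y]$. The super Jacobi identity for $\n$ translates directly into the ordinary Jacobi identity for $[-,-]_A$. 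Finally, any $k$-fold nested bracket in $\n_A$ expands into $A_0$-multiples of $k$-fold nested brackets in $\n$, so if $\n$ is $k$-step nilpotent as a Lie superalgebra, $\n_A$ is $k$-step nilpotent as an ordinary Lie algebra, and $Z(X,Y)$ truncates.

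For step (2), this is the classical construction of a simply-connected nilpotent Lie group from its Lie algebra, valid over any commutative ring containing $\mathbb{Q}$ (which $A_0$ does): the BCH identities $Z(Z(X,Y),W) = Z(X,Z(Y,W))$, $Z(X,0) = Z(0,X) = X$, and $Z(X,-X) = 0$ are universal Lie series identities, so associativity, unitality, and the inverse laws hold for $m_A$. For step (3), any Grassmann homomorphism $f \maps A \to B$ induces $\n_f = (f \tensor 1)_0 \maps \n_A \to \n_B$, which is a Lie algebra homomorphism because $f$ preserves multiplication; since BCH is built from iterated brackets, $\n_f$ intertwines $m_A$ with $m_B$, $\inv_A$ with $\inv_B$, and $\id_A$ with $\id_B$. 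For step (4), each $m_A$, $\inv_A$, $\id_A$ is, by the truncation in step (1), a polynomial in its arguments with coefficients obtained from $A_0$-bilinear operations, hence a polynomial map of finite-dimensional $A_0$-modules; such maps are automatically smooth with $A_0$-linear derivatives, i.e.\ $A_0$-smooth.

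The only genuinely subtle point is step (1): correctly tracking the signs from the braiding so that the induced bracket is honestly antisymmetric and $A_0$-bilinear on the even part $\n_A$, and then confirming that the nilpotence degree is preserved under this extension of scalars. Everything else is then a consequence either of the universality of BCH (giving the group axioms and naturality) or of the fact that polynomial maps with $A_0$-coefficients are $A_0$-smooth (giving smoothness in the supermanifold sense used in Section~\ref{sec:supermanifolds}).
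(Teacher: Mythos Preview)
Your proposal is correct and follows essentially the same approach as the paper: the paper's proof (given in the paragraphs immediately preceding the proposition) likewise checks that $\n_A$ is a nilpotent Lie algebra, defines the group structure on $N_A$ via BCH, and then observes that the resulting maps are natural in $A$ and $A_0$-smooth because they are polynomials. Your version is more explicit about the sign-tracking in step (1) and the universality of the BCH identities in steps (2)--(3), but the architecture of the argument is the same.
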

\noindent
Experience with ordinary Lie theory suggests that, in general, there will be
more than one supergroup which has Lie superalgebra $\n$. To distinguish the
one above, we call $N$ the \define{exponential supergroup} of $\n$. 

\subsection{The Poincar\'e supergroup} \label{sec:poincaresupergroup}

We can use the result of the last section to construct our favorite supergroup:
the Poincar\'e supergroup, $\SISO(n+1,1)$, the supergroup of symmetries of
`Minkowski superspacetime'. First, we construct superspacetime, using a
familiar trick: just as we can identify the group of spacetime translations
with spacetime itself, we can identify the supergroup of supertranslations,
$T$, with superspacetime itself. So let us build $T$, and \emph{define}
superspacetime to be $T$.

We have already met the Lie superalgebra of $T$, at least for the special
dimensions where superstring theory makes sense. In Section
\ref{sec:trans2alg}, we constructed the \define{supertranslation algebra}
$\T$ for spacetimes of dimension $n+2 = 3$, 4, 6 and 10. This is the Lie
superalgebra whose even part consists of vectors, $\T_0 = V$, whose odd part
consists of spinors, $\T_1 = S_+$, and whose Lie bracket vanishes except for
the $\Spin(n+1,1)$-equivariant map:
\[ \cdot \maps \Sym^2 S_+ \to V . \]
Thanks to the near triviality of the bracket, $\T$ is nilpotent. Thus we can
use Proposition \ref{prop:nilpotentsupergroup} to construct the exponential
supergroup $T$ of $\T$. We think of this as the supergroup of translations on
`superspacetime'. In fact, we define \define{Minkowski superspacetime} to be
the supergroup $T$.

As we noted in the last section, we can think of the Lorentz group
$\Spin(n+1,1)$ as a supergroup in a trivial way. Note that $\Spin(n+1,1)$ acts
on $\T$ and hence $T$ by automorphism, so we can define the \define{Poincar\'e
supergroup, $\SISO(n+1,1)$,} to be the semidirect product:
\[ \SISO(n+1,1) = \Spin(n+1,1) \ltimes T . \]

\section{Lie 2-supergroups from supergroup cohomology} \label{sec:Lie-n-supergroups}

We saw in Section \ref{sec:Lie-n-groups} that 3-cocycles in Lie group cohomology
allow us to construct Lie 2-groups. We now generalize this to supergroups. The
most significant barrier is that we now work internally to the category of
supermanifolds instead of the much more familiar category of smooth manifolds.
Our task is to show that this change of categories does not present a problem.
The main obstacle is that the category of supermanifolds is not a concrete
category: morphisms are determined not by their value on the underlying set of
a supermanifold, but by their value on $A$-points for all Grassmann algebras
$A$.

The most common approach is to define morphisms without reference to elements,
and to define equations between morphisms using commutative diagrams. This is
how we gave the definition of smooth bicategory, except that we found it
convenient to state the pentagon and triangle identities using elements.  As an
alternative to commutative diagrams, for supermanifolds, one can use $A$-points
to define morphisms and specify equations between them. This tends to make
equations look friendlier, because they look like equations between functions.
We shall use this approach.

First, let us define the cohomology of a supergroup $G$ with coefficients in an
abelian supergroup $H$, on which $G$ \define{acts by automorphism}. This means
that we have a morphism of supermanifolds:
\[ \alpha \maps G \times H \to H, \]
which, for any Grassmann algebra $A$, induces an action of the group
$G_A$ on the abelian group $H_A$:
\[ \alpha_A \maps G_A \times H_A \to H_A. \]
For this action to be by automorphism, we require:
\[ \alpha_A(g)(h + h') = \alpha_A(g)(h) + \alpha_A(g)(h'), \]
for all $A$-points $g \in G_A$ and $h, h' \in H_A$.

We define supergroup cohomology using \define{the supergroup cochain complex},
$C^\bullet(G,H)$, which at level $p$ just consists of the set of maps
from $G^p$ to $H$ as supermanifolds:
\[ C^p(G,H) = \left\{ f \maps G^p \to H \right\} . \]
Addition on $H$ makes $C^p(G,H)$ into an abelian group for all $p$.  The
differential is given by the usual formula, but using $A$-points:
\begin{eqnarray*} 
	df_A(g_1, \dots, g_{p+1}) & = & g_1 f_A(g_2, \dots, g_{p+1}) \\
	                          &   & + \sum_{i=1}^p (-1)^i f_A(g_1, \dots, g_i g_{i+1}, \dots, g_{p+1}) \\
				  &   & + (-1)^{p+1} f_A(g_1, \dots, g_p) , \\
\end{eqnarray*}
where $g_1, \dots, g_{p+1} \in G_A$ and the action of $g_1$ is given by
$\alpha_A$. Noting that $f_A$, $\alpha_A$, multiplication and $+$ are all:
\begin{itemize}
	\item natural in $A$;
	\item $A_0$-smooth: smooth with derivatives which are $A_0$-linear;
\end{itemize}
we see that $df_A$ is:
\begin{itemize}
	\item natural in $A$; 
	\item $A_0$-smooth: smooth with a derivative which is $A_0$-linear;
\end{itemize}
so it indeed defines a map of supermanifolds:
\[ df \maps G^{p+1} \to H. \]
Furthermore, it is immediate that:
\[ d^2 f_A = 0 \]
for all $A$, and thus
\[ d^2 f = 0. \]
So $C^\bullet(G,H)$ is truly a cochain complex. Its cohomology $H^\bullet(G,H)$
is the \define{supergroup cohomology of $G$ with coefficients in $H$}. Of
course, if $df = 0$, $f$ is called a \define{cocycle}, and $f$ is \define{normalized} if
\[ f_A(g_1, \dots, g_p) = 0 \]
for any Grassmann algebra $A$, whenever one of the $A$-points $g_1, \dots, g_p$
is 1. When $H = \R$, we omit reference to it, and write $C^\bullet(G,\R)$ as
$C^\bullet(G)$.

A \define{super bicategory} $B$ has 
\begin{itemize}
	\item a \define{supermanifold of objects} $B_0$;
	\item a \define{supermanifold of morphisms} $B_1$;
	\item a \define{supermanifold of 2-morphisms} $B_2$;
\end{itemize}
equipped with maps of supermanifolds as described in Definition
\ref{def:smoothbicat}: source, target, identity-assigning, horizontal
composition, vertical composition, associator and left and right unitors all
maps of supermanifolds, and satisfying the same axioms as a smooth bicategory.
The associator satisfies the pentagon identity, which we state in terms of
$A$-points: the following pentagon commutes:
\[
\xy
 (0,20)*+{(f g) (h k)}="1";
 (40,0)*+{f (g (h k))}="2";
 (25,-20)*{ \quad f ((g h) k)}="3";
 (-25,-20)*+{(f (g h)) k}="4";
 (-40,0)*+{((f g) h) k}="5";
 {\ar@{=>}^{a(f,g,h k)}     "1";"2"}
 {\ar@{=>}_{1_f \cdot a_(g,h,k)}  "3";"2"}
 {\ar@{=>}^{a(f,g h,k)}    "4";"3"}
 {\ar@{=>}_{a(f,g,h) \cdot 1_k}  "5";"4"}
 {\ar@{=>}^{a(fg,h,k)}    "5";"1"}
\endxy
\]
for any `composable quadruple of morphisms':
\[ (f,g,h,k) \in (B_1 \times_{B_0} B_1 \times_{B_0} B_1 \times_{B_0} B_1)_A . \]
Similarly, the associator and left and right unitors satisfy the triangle
identity, which we state in terms of $A$-points: the following triangle
commutes:
\[ 
\xy
(-20,10)*+{(f 1) g}="1";
(20,10)*+{f (1 g)}="2";
(0,-10)*+{f g}="3";
{\ar@{=>}^{a(f,1,g)}	"1";"2"}
{\ar@{=>}_{r(f) \cdot 1_g}	"1";"3"}
{\ar@{=>}^{1_f \cdot l(g)} "2";"3"}
\endxy
\]
for any `composable pair of morphisms':
\[ (f,g) \in (B_1 \times_{B_0} B_1)_A . \]
We can similarly speak of \define{super categories}, \define{smooth functors}
between super categories, and \define{smooth natural isomorphisms} between
smooth functors, simply by taking the appropriate definition from Section
\ref{sec:Lie-n-groups} and replacing every manifold in sight with a
supermanifold.

Finally, to talk about 2-supergroups, we will need to talk about inverses.
Again, we can do this in exactly the same way as in the smooth case, but
replacing points with $A$-points. We say that the 2-morphisms in a super
bicategory $B$ have \define{smooth strict inverses} if there exists a smooth
map from 2-morphisms to 2-morphisms:
\[ \inv_2 \maps B_2 \to B_2 \]
such that $\inv_1$ assigns each 2-morphism to its strict inverse. Stated in terms of
$A$-points, this means:
\[ \alpha^{-1} \circ \alpha = 1, \quad \alpha \circ \alpha^{-1} = 1 . \]
for all $\alpha \in (B_2)_A$. Likewise, we say that the morphisms in $B$ have
\define{smooth weak inverses} if there exist smooth maps:
\[ \inv_1 \maps B_1 \to B_1, \quad e \maps B_1 \to B_2, \quad u \maps B_1 \to B_2, \]
such that $\inv_1$ provides a smooth choice of weak inverse and $u$ and $e$
provide smooth choices of 2-isomorphisms that `weaken' the left and right
inverse laws.  In terms of $A$-points:
\[ e(f) \maps f^{-1} \cdot f \To 1, \quad u(f) \maps f \cdot f^{-1} \To 1 . \]
for each $f \in (B_1)_A$.

A \define{2-supergroup} is a super bicategory with one object (more precisely,
the one-point supermanifold), whose morphisms have smooth weak inverses and
whose 2-morphisms have smooth strict inverses. As we did with Lie 2-groups, we
can construct examples of 2-supergroups from cocycles: given a normalized
$H$-valued 3-cocycle $a$ on $G$, we can construct a 2-supergroup
$\String_a(G,H)$ in the same way we constructed the Lie 2-group
$\String_a(G,H)$ when $G$ and $H$ were Lie groups, by just deleting every
reference to elements of $G$ or $H$:
\begin{itemize}
	\item The supermanifold of objects is the one-point supermanifold, 1.
	\item The supermanifold of morphisms is the supergroup $G$, with
		composition given by the multiplication:
		\[ \cdot \maps G \times G \to G. \]
		The source and target maps are the unique maps to the one-point
		supermanifold. The identity-assigning map is the
		identity-assigning map for $G$:
		\[ \id \maps 1 \to G. \]
	\item The supermanifold of 2-morphisms is $G \times H$. The source and
		target maps are both the projection map to $G$. The identity
		assigning map comes from the identity-assigning map for $H$:
		\[ 1 \times \id \maps G \times 1 \to G \times H. \]
	\item Vertical composition of 2-morphisms is given by addition in $H$:
		\[ 1 \times + \maps G \times H \times H \to G \times H, \]
		where we have used the fact that the pullback of 2-morphisms
		over objects is trivially:
		\[ (G \times H) \times_1  (G \times H) \iso G \times H \times H. \]
		Horizontal composition, $\cdot$, given by the multiplication
		on the semidirect product:
		\[ \cdot \maps ( G \ltimes H ) \times ( G \ltimes H ) \to G \ltimes H. \]
	\item The left and right unitors are trivial.
	\item The associator is given by the 3-cocycle $a \maps G^3 \to H$, where
		the source (and target) is understood to come from
		multiplication on $G$.
\end{itemize}
A \define{slim 2-supergroup} is one of this form. It remains to check that it
is, indeed, a 2-supergroup.

\begin{prop} \label{prop:2supergroup}
	$\String_a(G,H)$ is a 2-supergroup: a super bicategory with one object
	where all morphisms have smooth weak inverses and all 2-morphisms have
	smooth strict inverses.
\end{prop}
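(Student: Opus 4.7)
The plan is to mirror the proof of Proposition \ref{prop:Lie2group} exactly, with the only substantive change being that every equation is checked on $A$-points for every Grassmann algebra $A$, rather than on ordinary points. Since the supergroup cochain differential was defined so that $a$ is a cocycle precisely when $da_A = 0$ for all $A$, and since the definition of $\String_a(G,H)$ uses only multiplication in $G$, addition in $H$, and the action of $G$ on $H$ (all of which are morphisms of supermanifolds), the structure maps in sight assemble automatically into morphisms of supermanifolds. So the real content is a single $A$-point verification of the pentagon identity.

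First I would dispense with the easy items. The underlying smooth categorical structure $\underline{\Mor}\, \String_a(G,H)$ with $B_1 = G$ and $B_2 = G \times H$ is a super category because source, target, identity-assigning and vertical composition are all built from supermanifold morphisms (projection, the identity-assigning map of $H$, and $1 \times +$). Horizontal composition $\cdot \maps (G \ltimes H) \times (G \ltimes H) \to G \ltimes H$ is a morphism of supermanifolds because the semidirect product structure on $G \ltimes H$ uses only the multiplication of $G$, the action $\alpha$ of $G$ on $H$, and addition in $H$. Functoriality of $\cdot$ amounts to the interchange law between the two compositions of 2-morphisms, which for $A$-points $(g,h), (g,h'), (g',k), (g',k') \in (G \times H)_A$ reads
\[ (h + h') + g(k + k') = (h + gk) + (h' + gk'), \]
and this holds because $\alpha_A(g)$ is a homomorphism. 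The associator is a smooth natural isomorphism because $a \maps G^3 \to H$ is a morphism of supermanifolds and naturality is automatic in the absence of nonidentity 2-morphisms between distinct 1-morphisms. The triangle identity reduces to $a_A(g_1, 1, g_2) = 0$ for all $A$-points, which holds because $a$ is normalized.

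The main step is the pentagon identity. For any $A$-point $(g_1, g_2, g_3, g_4) \in G_A^4$, unpacking the two sides of the pentagon using the definitions of vertical and horizontal composition gives the equation
\[ a_A(g_1, g_2, g_3 g_4) + a_A(g_1 g_2, g_3, g_4) = g_1 a_A(g_2, g_3, g_4) + a_A(g_1, g_2 g_3, g_4) + a_A(g_1,g_2,g_3), \]
where the action in the first term on the right is $\alpha_A(g_1)$. This is precisely the supergroup 3-cocycle condition $da_A(g_1, g_2, g_3, g_4) = 0$ written out in terms of $A$-points. Since $a$ is a 3-cocycle, this holds for every $A$, so the pentagon commutes.

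Finally, inverses. The 2-morphisms have smooth strict inverses via
\[ \inv_2 = 1 \times (-) \maps G \times H \to G \times H, \]
since negation in $H$ is a morphism of supermanifolds and $(g,h) + (g,-h) = (g,0)$ on $A$-points. For weak inverses of 1-morphisms, take $\inv_1 \maps G \to G$ to be inversion in the supergroup, and let $e, u \maps G \to G \times H$ be given by $g \mapsto (g^{-1}g, 0) = (1,0)$ and $g \mapsto (gg^{-1}, 0) = (1,0)$; these are identity 2-morphisms on $1$, so the weakened inverse laws hold on the nose. I expect no real obstacle beyond bookkeeping: the only place where the hypothesis on $a$ is genuinely used is the pentagon identity, and there the translation from the supergroup cocycle condition is literally the same display as in the smooth case, now interpreted on $A$-points.
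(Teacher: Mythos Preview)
Your proposal is correct and follows essentially the same approach as the paper: the paper's own proof simply states that it duplicates the proof of Proposition~\ref{prop:Lie2group} with $A$-points in place of elements, and that is precisely what you have written out in detail.
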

\begin{proof}
	This proof is a duplicate of the proof of Proposition
	\ref{prop:Lie2group}, but with $A$-points instead of elements.
\end{proof}

\section{Integrating nilpotent Lie \emph{n}-superalgebras} \label{sec:integrating2}

We now generalize our technique for integrating cocycles from nilpotent Lie
algebras to nilpotent Lie \emph{superalgebras}. Those familiar with
supermanifold theory may find it surprising that this is possible---the theory
of differential forms is very different for supermanifolds than for manifolds,
and integrating differential forms on a manifold was crucial to our method in
Section \ref{sec:integratingcochains}. But we can sidestep this issue on a
supergroup $N$ by considering $A$-points for any Grassmann algebra $A$. Then
$N_A$ is a manifold, so the usual theory of differential forms applies.

Here is how we will proceed. Fixing a nilpotent Lie superalgebra $\n$ with
exponential supergroup $N$, we can use Proposition
\ref{prop:supervectormanifold} turn any even Lie superalgebra cochain $\omega$
on $\n$ into a Lie algebra cochain $\omega_A$ on $\n_A$. We then use the
techniques in Section \ref{sec:integratingcochains} to turn $\omega_A$ into a
Lie group cochain $\smallint \omega_A$ on $N_A$. Checking that $\smallint
\omega_A$ is natural in $A$ and $A_0$-smooth, this defines a supergroup cochain
$\smallint \omega$ on $N$.

As we saw in Proposition \ref{prop:supervectormanifold}, any map of super vector
spaces becomes an $A_0$-linear map on $A$-points. We have already touched on
the way this interacts with symmetry: for a Lie superalgebra $\g$, the
graded-antisymmetric bracket
\[ [-,-] \maps \Lambda^2 \g \to \g \]
becomes an honest antisymmetric bracket on $A$-points:
\[ [-,-]_A \maps \Lambda^2 \g_A \to \g_A. \]
More generally, we have:
\begin{lem}
	Graded-symmetric maps of super vector spaces:
	\[ f \maps \Sym^p V \to W \]
	induce symmetric maps on $A$-points:
	\[ f_A \maps \Sym^p V_A \to W_A, \]
	defined by:
	\[ f_A(a_1 v_1, \dots, a_p v_p) = a_p \cdots a_1 f(v_1, \dots, v_p), \]
	where $\Sym^p V_A$ is the symmetric power of $V_A$ as an $A_0$-module
	and $a_i \in A$, $v_i \in V$ are of matching parity.
	Similarly, graded-antisymmetric maps of super vector spaces:
	\[ f \maps \Lambda^p V \to W \]
	induce antisymmetric maps on $A$-points:
	\[ f_A \maps \Lambda^p V_A \to W_A, \]
	defined by:
	\[ f_A(a_1 v_1, \dots, a_p v_p) = a_p \cdots a_1 f(v_1, \dots, v_p), \]
	where $\Lambda^p V_A$ is the exterior power $V_A$ as an $A_0$-module
	and $a_i \in A$, $v_i \in V$ are of matching parity.
\end{lem}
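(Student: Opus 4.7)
The plan is to construct $f_A$ in two stages: first produce a natural $A_0$-multilinear map $V_A^{\times p} \to W_A$ whose value on elementary tensors matches the stated formula, and then check that this map is ordinarily (anti)symmetric, so that it descends to $\Sym^p V_A$ or $\Lambda^p V_A$.

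For the construction, I would use the symmetric braiding in $\SuperVect$ to move all the $a_i$'s to the left of all the $v_j$'s and then multiply in $A$. This yields a natural map
\[ \mu \maps (A \tensor V)^{\tensor p} \to A \tensor V^{\tensor p}, \]
whose restriction to the even part is an $A_0$-multilinear map $V_A^{\times p} \to (A \tensor V^{\tensor p})_0$; composing with $(1 \tensor f)_0$ then defines $f_A$. To recover the explicit formula, applying the Koszul rule to move each $v_i$ past $a_{i+1}, \dots, a_p$ contributes the sign $(-1)^{\sum_{i<j} |v_i||v_j|}$, so $\mu$ sends $(a_1 \tensor v_1) \tensor \cdots \tensor (a_p \tensor v_p)$ to $(-1)^{\sum_{i<j} |v_i||v_j|} a_1 a_2 \cdots a_p \tensor v_1 \tensor \cdots \tensor v_p$. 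By graded-commutativity of $A$ (using $|a_i|=|v_i|$), this rewrites as $a_p \cdots a_1 \tensor v_1 \tensor \cdots \tensor v_p$, matching the formula in the lemma. $A_0$-multilinearity is automatic because an even scalar $c \in A_0$ commutes past every other factor without signs.

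To verify ordinary symmetry in the graded-symmetric case, I would swap the $i$-th and $(i{+}1)$-th arguments of $f_A$. The prefactor $a_p \cdots a_{i+1} a_i \cdots a_1$ becomes $a_p \cdots a_i a_{i+1} \cdots a_1$, and graded-commutativity of $A$ makes these differ by $(-1)^{|a_i||a_{i+1}|} = (-1)^{|v_i||v_{i+1}|}$. On the $f$ side, graded symmetry gives $f(\dots, v_{i+1}, v_i, \dots) = (-1)^{|v_i||v_{i+1}|} f(\dots, v_i, v_{i+1}, \dots)$. The two signs cancel, so $f_A$ is strictly symmetric and descends to $\Sym^p V_A$. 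The graded-antisymmetric case is identical except for one extra minus sign, producing a strictly antisymmetric map on $\Lambda^p V_A$.

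The hard part is merely the sign bookkeeping, and this is fully controlled by the parity-matching condition $|a_i| = |v_i|$ built into the definition $V_A = (A \tensor V)_0$: it is exactly this condition that lines up the signs from the symmetric braiding, from graded-commutativity of $A$, and from graded-(anti)symmetry of $f$, so that they cancel against one another and leave an ordinary (anti)symmetric map.
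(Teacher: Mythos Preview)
Your proof is correct and carries out exactly the direct sign computation the paper has in mind; the paper itself simply declares the lemma ``straightforward'' and leaves it to the reader. Your explicit identification of the sign $(-1)^{\sum_{i<j}|v_i||v_j|}$ from the braiding with the sign coming from reversing $a_1\cdots a_p$ to $a_p\cdots a_1$ via graded-commutativity, together with the transposition check, is precisely the bookkeeping the paper omits.
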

\begin{proof}
	This is straightforward and we leave it to the reader.
\end{proof}

Recall from Section \ref{sec:Lie-n-superalgebras} that the cohomology of our
nilpotent Lie superalgebra $\n$ is computed from the complex of linear maps:
\[ C^p(\n) = \left\{ \omega \maps \Lambda^p \n \to \R, \mbox{ linear} \right\} . \]
Each level of this complex is a super vector space, where the parity preserving
maps are even and the parity reversing maps are odd. Only the even cochains,
however, are honest morphisms of super vector spaces to which we can apply the
above proposition.  For this reason, we will now restrict our attention to the
subcomplex of \define{even cochains}:
\[ C^p_0(\n) = \left\{ \omega \maps \Lambda^p \n \to \R, \mbox{ linear, parity-preserving} \right\} . \]
Now we need to show that even Lie superalgebra cochains $\omega$ on $\n$ give rise
to Lie algebra cochains $\omega_A$ on the $A$-points $\n_A$. In fact, this
works for any Lie superalgebra, but there is one twist: because $\n_A$ is an
$A_0$-module, $\omega \maps \Lambda^p \n \to \R$ gives rise to an $A_0$-linear
map:
\[ \omega_A \maps \Lambda^p \n_A \to A_0, \]
using the fact that $\R_A = A_0$. So, we need to say how to do Lie algebra
cohomology with coefficients in $A_0$. It is just a straightforward
generalization of cohomology with coefficients in $\R$.

Indeed, any Lie superalgebra $\g$ induces a Lie algebra structure on $\g_A$
where the bracket is $A_0$-bilinear. We say that $\g_A$ is an \define{$A_0$-Lie
algebra}. Given any $A_0$-Lie algebra $\g_A$, we define its cohomology with the
\define{$A_0$-Lie algebra cochain complex}, which at level $p$ consists of
antisymmetric $A_0$-multilinear maps:
\[ C^p(\g_A) = \left\{ \omega \maps \Lambda^p \g_A \to A_0 \right\}. \]
We define $d$ on this complex in exactly the same way we define $d$ for
$\R$-valued Lie algebra cochains. This makes $C^\bullet(\g_A)$ into a cochain
complex, and the \define{cohomology of an $A_0$-Lie algebra with coefficients
in $A_0$} is the cohomology of this complex.

\begin{prop}
	Let $\g$ be a Lie superalgebra, and let $\g_A$ be the $A_0$-Lie algebra
	of its $A$-points. Then there is a cochain map:
	\[ C^\bullet_0(\g) \to C^\bullet(\g_A) \]
	given by taking the even $p$-cochain $\omega$
	\[ \omega \maps \Lambda^p \g \to \R \]
	to the induced $A_0$-linear map $\omega_A$:
	\[ \omega_A \maps \Lambda^p \g_A \to A_0, \]
	where $\Lambda^p \g_A$ denotes the $p$th exterior power of $\g_A$ as
	an $A_0$-module.
\end{prop}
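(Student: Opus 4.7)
The plan is to verify that the assignment $\omega \mapsto \omega_A$ intertwines the two coboundary operators, that is, $(d\omega)_A = d(\omega_A)$. The well-definedness of $\omega_A$ as an antisymmetric $A_0$-multilinear map from $\Lambda^p \g_A$ to $A_0$ is already handed to us by the preceding lemma applied to the even (hence grade-preserving) cochain $\omega \maps \Lambda^p \g \to \R$, together with the identification $\R_A = A_0$. So the content of the proposition is entirely the identity $d(\omega_A) = (d\omega)_A$.

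By $A_0$-multilinearity on both sides, it is enough to check the identity on $(p+1)$-tuples of the form $(a_1 X_1, \dots, a_{p+1} X_{p+1})$ with $X_i \in \g$ homogeneous and $a_i \in A$ of matching parity (so that each $a_i X_i$ is an even element of $\g_A$). First I would expand $d(\omega_A)(a_1 X_1, \dots, a_{p+1} X_{p+1})$ using the ordinary (non-super) Chevalley--Eilenberg formula for the $A_0$-Lie algebra $\g_A$; since $A_0$ is the trivial $\g_A$-module in this setup (the representation $\rho$ is zero), only the second sum involving brackets survives. Next I would expand $(d\omega)_A(a_1 X_1, \dots, a_{p+1} X_{p+1})$ by first applying the super coboundary formula from Section \ref{sec:Lie-n-superalgebras} and then pulling all the scalars $a_i$ out front via the lemma.

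The heart of the proof is a Koszul-sign bookkeeping argument comparing these two expressions term by term. In the super formula for $d\omega$, one finds signs $(-1)^{i+j}(-1)^{|X_i||X_j|}\epsilon^{i-1}_1(i)\epsilon^{j-1}_1(j)$ attached to each bracket term $\omega([X_i,X_j],X_1,\dots,\hat{X_i},\dots,\hat{X_j},\dots,X_{p+1})$; the factor $(-1)^{|X_i||\omega|}$ disappears because $\omega$ is even. On the other hand, forming $[a_i X_i, a_j X_j]_A$ in the $A_0$-Lie algebra $\g_A$ introduces a sign $(-1)^{|X_i||a_j|}$ to move $a_j$ past $X_i$, and subsequently pulling all the $a_k$'s out to the front of $\omega_A$ produces exactly the Koszul signs $\epsilon^{i-1}_1(i)\epsilon^{j-1}_1(j)$, since the parities of $a_k$ and $X_k$ agree by construction. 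The claim is that these signs match on the nose, so that the two expressions coincide.

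The main obstacle is simply keeping the sign conventions straight: one must carefully track that the Koszul signs produced by reordering in $A \tensor \g$ really do reproduce the $\epsilon^{j}_i(k)$ notation used in Definition of $d$ for Lie superalgebra cochains. Once this verification is done for a single monomial $(a_1 X_1, \dots, a_{p+1} X_{p+1})$, $A_0$-multilinearity extends the identity to all of $\Lambda^{p+1} \g_A$, showing that $\omega \mapsto \omega_A$ is a cochain map and completing the proof.
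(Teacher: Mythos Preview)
Your proposal is correct and follows essentially the same route as the paper: both reduce the identity $d(\omega_A)=(d\omega)_A$ to a check on monomial generators $a_1X_1,\dots,a_{p+1}X_{p+1}$ and then verify that reordering the Grassmann coefficients $a_{p+1}\cdots\hat a_j\cdots\hat a_i\cdots a_1\,a_j a_i$ into $a_{p+1}\cdots a_1$ produces precisely the signs $(-1)^{|X_i||X_j|}\epsilon^{i-1}_1(i)\epsilon^{j-1}_1(j)$ appearing in the super Chevalley--Eilenberg formula, using throughout that $|a_k|=|X_k|$. The paper's proof is slightly more explicit about where these reordering signs come from, but your outline identifies all the right ingredients.
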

\begin{proof}
	We need to show:
	\[ d(\omega_A) = (d\omega)_A. \]
	Since these are both linear maps on $\Lambda^{p+1}(\g_A)$, it suffices
	to check that they agree on generators, which are of the form:
	\[ a_1 X_1 \wedge a_2 X_2 \wedge \cdots \wedge a_{p+1} X_{p+1} \]
	for $a_i \in A$ and $X_i \in \g$ of matching parity. By definition:
	\[ (d\omega)_A(a_1 X_1 \wedge a_2 X_2 \wedge \cdots \wedge a_{p+1} X_{p+1}) = a_{p+1} a_p \cdots a_{1} d\omega(X_1 \wedge X_2 \wedge \cdots \wedge X_{p+1}). \]

	On the other hand, to compute $d(\omega_A)$, we need to apply the
	formula for $d$ to obtain the intimidating expression:
	\begin{eqnarray*}
		&   & d(\omega_A)(a_1 X_1, \dots, a_{p+1} X_{p+1}) \\
		& = & \sum_{i < j} (-1)^{i+j} \omega_A([a_i X_i, a_j X_j]_A, a_1 X_1, \dots, \widehat{a_i X_i}, \dots, \widehat{a_j X_j}, \dots, a_{p+1} X_{p+1}) \\
		& = & \sum_{i < j} (-1)^{i+j} a_{p+1} \cdots \hat{a}_j \cdots \hat{a}_i \cdots a_{1} a_j a_i \omega([X_i, X_j], X_1, \dots, \hat{X}_i, \dots, \hat{X}_j, \dots, X_{p+1}). \\
	\end{eqnarray*}
	If we reorder the each of the coefficients $a_{p+1} \cdots \hat{a}_j
	\cdots \hat{a}_i \cdots a_{1} a_j a_i$ to $a_{p+1} \cdots a_2 a_{1}$ at
	the cost of introducing still more signs, we can factor all of the
	$a_i$s out of the summation to obtain:
	\begin{eqnarray*}
		&   & a_{p+1} \cdots a_2 a_1  \\
		& \times &  \sum_{i < j} (-1)^{i+j} (-1)^{|X_i||X_j|} \epsilon_1^{i-1}(i) \epsilon_1^{j-1}(j) \omega([X_i, X_j], X_1, \dots, \hat{X}_i, \dots, \hat{X}_j, \dots, X_{p+1}) \\
		& = & a_{p+1} \cdots a_2 a_1 d\omega(X_1 \wedge X_2 \wedge \dots \wedge X_{p+1}).
	\end{eqnarray*}
	Note that the first two lines are a single quantity, the product of
	$a_{p+1} \cdots a_1$ and a large summation. The last line is
	$(d\omega)_A(a_1 X_1 \wedge \dots \wedge a_{p+1} X_{p+1})$, as desired.
\end{proof}
This proposition says that from any even Lie superalgebra cocycle on $\n$ we
obtain a Lie algebra cocycle on $\n_A$, albeit now valued in $A_0$. Since $N_A$
is an exponential Lie group with Lie algebra $\n_A$, we can apply the
techniques we developed in Section \ref{sec:integratingcochains} to integrate
$\omega_A$ to a group cocycle, $\smallint \omega_A$, on $N_A$. 

First, however, we must pause to give some preliminary definitions concerning
calculus on $N_A$, which is diffeomorphic to the $A_0$-module $\n_A$. Recall
from Section \ref{sec:supermanifolds} that a map 
\[ \varphi \maps V \to W \]
between two $A_0$-modules said to be \define{$A_0$-smooth} if it is smooth in
the ordinary sense and its derivative
\[ \varphi_* \maps T_x V \to T_{\varphi(x)} W \]
is $A_{0}$-linear at each point $x \in V$. Here, the $A_0$-module structure on
each tangent space comes from the canonical identification with the ambient
vector space:
\[ T_x V \iso V, \quad T_{\varphi(x)} W \iso W. \]
It is clear that the identity is $A_0$-smooth and the composite of any two
$A_0$-smooth maps is $A_0$-smooth. A vector field $X$ on $V$ is
\define{$A_0$-smooth} if $Xf$ is an $A_0$-smooth function for all $f \maps V
\to A_0$ that are $A_0$-smooth. An $A_0$-valued differential $p$-form $\omega$
on $V$ is \define{$A_0$-smooth} if $\omega(X_1, \dots, X_p)$ is an $A_0$-smooth
function for all $A_0$-smooth vector fields $X_1, \dots, X_p$.

Now, we return to integrating $\omega$. As a first step, because $\n_A = T_1
N_A$, we can view $\omega_A$ as an $A_0$-valued $p$-form on $T_1 N_A$. Using
left translation, we can extend this to a left-invariant $A_0$-valued $p$-form
on $N_A$. Indeed, we can do this for any $A_0$-valued $p$-cochain on $\n_A$:
\[ C^p(\n_A) \iso \left\{ \mbox{left-invariant $A_0$-valued $p$-forms on $N_A$} \right\}. \]
Note that any left-invariant $A_0$-valued form on $N_A$ is automatically
$A_0$-smooth: multiplication on $N_A$ is given the $A_0$-smooth operation
induced from multiplication on $N$, and so left translation on $N_A$ is
$A_0$-smooth. We can differentiate and integrate $A_0$-valued $p$-forms in just
the same way as we would real-valued $p$-forms, and the de Rham differential
$d$ of left-invariant $p$-forms coincides with the usual differential of Lie
algebra $p$-cochains.

As before, we need a notion of simplices in $N$. Since $N$ is a supermanifold,
the vertices of a simplex should not be points of $N$, but rather $A$-points
for arbitrary Grassmann algebras $A$. This means that for any $(p+1)$-tuple of
$A$-points, we want to get a $p$-simplex:
\[ [n_0, n_1, \dots, n_p] \maps \Delta^p \to N_A, \]
where, once again, $\Delta^p$ is the standard $p$-simplex in $\R^{p+1}$, and
this map is required to be smooth. But this only defines a $p$-simplex in
$N_A$. To really get our hands on a $p$-simplex in $N$, we need it to depend
functorially on the choice of Grassmann algebra $A$ we use to probe $N$. So if $f
\maps A \to B$ is a homomorphism between Grassmann algebras and $N_f \maps N_A \to
N_B$ is the induced map between $A$-points and $B$-points, we require:
\[ N_f \circ [n_0, n_1, \dots, n_p] = [ N_f(n_0), N_f(n_1), \dots, N_f(n_p) ] \]
Thus given a collection of maps:
\[ (\varphi_p)_A \maps \Delta^p \times (N_A)^{p+1} \to N_A \]
for all $A$ and $p \geq 0$, we say this collection defines a
\define{left-invariant notion of simplices} in $N$ if 
\begin{itemize}

	\item each $(\varphi_p)_A$ is smooth, and for each $x \in \Delta^p$,
		the restriction: 
		\[ (\varphi_p)_A \maps \left\{x\right\} \times N_A^{p+1} \to N_A \] 
		is $A_0$-smooth;

	\item it defines a left-invariant notion of simplices in $N_A$ for each
		$A$, as in Definition \ref{def:simplices}; 

	\item the following diagram commutes for all homomorphisms $f \maps A
		\to B$:
		\[ \xymatrix{
		\Delta^p \times N_A^{p+1} \ar[r]^>>>>>{(\varphi_p)_A} \ar[d]_{1 \times N_f^{p+1}}  & N_A \ar[d]^{N_f} \\
		\Delta^p \times N_B^{p+1} \ar[r]_>>>>>{(\varphi_p)_B} & N_B 
		}
		\]

\end{itemize}
We can use a left-invariant notion of simplices to define a cochain map
$\smallint \maps C^\bullet(\n) \to C^\bullet(N)$:
\begin{prop} \label{prop:superintegrating}
	Let $\n$ be a nilpotent Lie superalgebra, and let $N$ be the
	exponential supergroup which integrates $\n$. If $N$ is equipped with a
	left-invariant notion of simplices, then there is a cochain map:
	\[ \smallint \maps C_0^\bullet(\n) \to C^\bullet(N) \]
	which sends the even Lie superalgebra $p$-cochain $\omega$ to the
	supergroup $p$-cochain $\smallint \omega$, given on $A$-points by:
	\[ (\smallint \omega)_A(n_1, \dots, n_p) = \int_{[1, n_1, n_1 n_2, \dots, n_1 n_2 \dots n_p ] } \omega_A \]
	for $n_1, \dots, n_p \in N_A$.
\end{prop}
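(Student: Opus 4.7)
The plan is to reduce the statement to the non-super case handled in Section \ref{sec:integratingcochains}, applied $A$-point by $A$-point in the Grassmann algebra $A$, and then check that the resulting collection of Lie group cochains assembles into a genuine morphism of supermanifolds. Given an even $p$-cochain $\omega \in C^p_0(\n)$, the preceding proposition produces an $A_0$-valued Lie algebra $p$-cochain $\omega_A$ on the nilpotent $A_0$-Lie algebra $\n_A$ for each Grassmann algebra $A$, and $\omega_A$ extends to a left-invariant $A_0$-smooth $p$-form on $N_A$ via left translation. Since $N_A$ is an exponential Lie group with Lie algebra $\n_A$, Proposition \ref{prop:integral}, applied with coefficients in $A_0$ rather than $\R$ and with the left-invariant notion of simplices on $N$ restricted to $N_A$, yields a Lie group cochain $\smallint \omega_A \maps (N_A)^p \to A_0$. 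We define $(\smallint \omega)_A = \smallint \omega_A$ and must verify three things: naturality in $A$, $A_0$-smoothness, and compatibility with the differential.

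Naturality in $A$ follows directly from the third clause of the definition of a left-invariant notion of simplices in $N$: for any homomorphism $f \maps A \to B$, the induced map $N_f$ sends $[1, n_1, n_1 n_2, \dots, n_1 n_2 \cdots n_p]$ in $N_A$ to the analogous simplex in $N_B$, while the pullback of the left-invariant $B_0$-valued form corresponding to $\omega_B$ along $N_f$ coincides with the pushforward of $\omega_A$ along the restriction $f_0 \maps A_0 \to B_0$, thanks to the naturality built into the construction $\omega \mapsto \omega_A$. Integrating both expressions over the same $p$-simplex in $\Delta^p$ yields $f_0 \circ (\smallint \omega)_A = (\smallint \omega)_B \circ (N_f)^p$. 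For $A_0$-smoothness, we pull the left-invariant $p$-form $\omega_A$ back along the parameterizing map $\varphi_p \maps \Delta^p \times (N_A)^{p+1} \to N_A$, which is smooth jointly in all variables and $A_0$-smooth in the $N_A$ variables by hypothesis; integrating out the compact parameter $\Delta^p$ produces a smooth function on $(N_A)^{p+1}$ whose derivative in the $N_A$-variables is $A_0$-linear. Together with naturality, this shows $\smallint \omega$ is a morphism of supermanifolds $N^p \to \R$.

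Finally, the cochain property is inherited $A$-pointwise from the ordinary case. By Proposition \ref{prop:integral}, for each $A$ we have $\smallint (d \omega_A) = d(\smallint \omega_A)$ as $A_0$-valued cochains on $N_A$, where on the left $d$ denotes the Lie algebra differential on $C^\bullet(\n_A)$ and on the right it denotes the Lie group differential on $C^\bullet(N_A)$. Combining this with the equality $d(\omega_A) = (d\omega)_A$ proved in the preceding proposition, we obtain $(\smallint d\omega)_A = d(\smallint \omega)_A$ for every $A$, which is precisely the assertion that $\smallint d\omega = d \smallint \omega$ in $C^\bullet(N)$.

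The main obstacle I anticipate is the verification of $A_0$-smoothness, because the apex-base construction of $\varphi_p$ builds the simplex iteratively using the exponential map, the Baker--Campbell--Hausdorff series, and the smoothing factor $\ell$, and one must check that each of these ingredients is $A_0$-smooth in the $N_A$-arguments and natural in $A$ before the integral can be controlled. The saving observation is that on the exponential supergroup $N$ every such operation is polynomial in the Grassmann-algebra-valued coordinates with universal real coefficients; consequently all derivatives in the $N_A$-directions are automatically $A_0$-linear and all constructions commute with $f \maps A \to B$, so the desired smoothness and naturality reduce to bookkeeping rather than genuine analysis.
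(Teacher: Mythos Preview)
Your proposal is correct and follows essentially the same route as the paper: reduce to the ordinary Lie-group integration of Proposition~\ref{prop:integral} at each Grassmann algebra $A$, then verify naturality in $A$ via the simplex-compatibility clause and the relation $f_0\omega_A = N_f^*\omega_B$, verify $A_0$-smoothness by pulling back along $\varphi_p$ and integrating out $\Delta^p$, and deduce the cochain identity from the factorization $\omega \mapsto \omega_A \mapsto \smallint\omega_A$ together with $(d\omega)_A = d(\omega_A)$. Your closing remark about polynomiality is really an argument for the \emph{next} proposition (that the standard notion of simplices satisfies the required hypotheses), since here the $A_0$-smoothness of $(\varphi_p)_A$ is already part of the assumed data.
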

\begin{proof}
	First, we must check that $\smallint \omega_A \maps N_A^p \to A_0$ is
	natural in $A$ and $A_0$-smooth, and hence defines a map of
	supermanifolds:
	\[ \smallint \omega \maps N^p \to \R . \]
	Smoothness is clear, so we check naturality and the $A_0$-linearity of
	the derivative.

	To check naturality, let $f \maps A \to B$ be a homomorphism, and $N_f
	\maps N_A \to N_B$ be the induced map from $A$-points to $B$-points. We
	wish to show the following square commutes:
	\[ \xymatrix{
	N_A^p \ar[r]^{\smallint \omega_A} \ar[d]_{N_f^p} & A_0 \ar[d]^{f_0} \\
	N_B^p \ar[r]_{\smallint \omega_B}                & B_0 \\
	} \]
	For $A$-points $n_1, \dots, n_p \in N_A$, we have:
	\[ f_0 \int_{[1, n_1, n_1 n_2, \dots, n_1 n_2 \dots n_p]} \omega_A = \int_{[1,n_1, n_1 n_2, \dots, n_1 \dots n_p]}  f_0 \omega_A. \]
	Since $\omega_A \maps \Lambda^p \n_A \to A_0$ is natural itself, we have:
	\[ f_0 \omega_A(X_1, \dots, X_p) = \omega_B(\n_f(X_1), \dots, \n_f(X_p)), \]
	for all $X_1, \dots, X_p \in \n_A$. Now, under the identification $\n_A \iso T_1 N_A$, the linear map:
	\[ \n_f \maps \n_A \to \n_B \]
	is the derivative of the map $N_f \maps N_A \to N_B$, so we get the pullback of $\omega_A$ along $N_f$:
	\[ \omega_B(\n_f(X_1), \dots, \n_f(X_p)) = \omega_B((N_f)_*(X_1), \dots, (N_f)_*(X_p)) = N^*_f \omega_B(X_1, \dots, X_p). \]
	Finally:
	\begin{eqnarray*} 
		f_0 \int_{[1, n_1, n_1 n_2, \dots, n_1 n_2 \dots n_p]} \omega_A & = & \int_{[1, n_1, n_1 n_2, \dots, n_1 n_2 \dots n_p]} N^*_f \omega_B \\ 
		& = & \int_{N_f \circ [1, n_1, n_1 n_2, \dots , n_1 n_2 \dots n_p]} \omega_B \\
		& = & \int_{[1, N_f(n_1), N_f(n_1) N_f(n_2), \dots , N_f(n_1) N_f(n_2) \dots N_f(n_p)]} \omega_B \\
	\end{eqnarray*}
	where in the last step we have used the fact that $[1, n_1, n_1 n_2,
	\dots, n_1 \dots n_p]$ is a left-invariant simplex in $N$, as well as
	the fact that $N_f$ is a group homomorphism. But this says exactly that
	$\smallint \omega_A$ is natural in $A$. 

	Next, we check that $\smallint \omega_A$ has a derivative that is
	$A_0$-linear. Briefly, this holds because the derivative of
	$(\varphi_p)_A \maps \Delta^p \times N_A^{p+1} \to N_A$ is $A_0$-linear
	on each factor of $N_A$.  The $A_0$-linearity of the derivative of
	$\smallint \omega_A$ then follows from the elementary analytic fact
	that integration with respect to one variable and differentiation
	respect to another commute with each other, at least when the
	integration is performed over a compact set. For a complete
	calculation, see the proof of Proposition 10.2 in Huerta \cite{Huerta}.

	Thus, $\smallint \omega \maps N_A^p \to A_0$, being natural in $A$ and
	$A_0$-smooth, defines a map of supermanifolds $\smallint \omega \maps
	N^p \to \R$.  We therefore have a map:
	\[ \smallint \maps C_0^\bullet(\n) \to C^\bullet(N) . \]
	It remains check that it is a cochain map. Indeed, $\smallint \omega_A$
	is the composite of the cochain maps for each $A$:
	\[ \omega \mapsto \omega_A \mapsto \smallint \omega_A , \]
	and it follows that $\smallint$ is a cochain map.
\end{proof}

Finally, we shall prove that there is a left-invariant notion of simplices with
which we can equip $N$. For a fixed Grassmann algebra $A$, the Lie group $N_A$ is
exponential. We shall show that if we take:
\[ (\varphi_p)_A \maps \Delta^p \times N^{p+1}_A \to N_A \]
to be the standard notion of left-invariant simplices in Proposition
\ref{prop:standard}, then this defines a left-invariant notion of simplices in
$N$. The key is to note that each stage of the inductive definition of
$(\varphi_p)_A$ we get maps that are natural in $A$.

\begin{prop}
	Let $N$ be the exponential supergroup of the nilpotent Lie superalgebra
	$\n$. Fix a smoothing factor $\ell \maps [0,1] \to [0,1]$.  For each
	Grassmann algebra $A$ and $p \geq 0$, define:
	\[ (\varphi_p)_A \maps \Delta^p \times N^{p+1}_A \to N_A \]
	to be the standard left-invariant notion of simplices with smoothing
	factor $\ell$. Then this defines a left-invariant notion of simplices
	in $N$.
\end{prop}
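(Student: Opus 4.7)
The plan is to verify the three conditions in the definition of a left-invariant notion of simplices in the supergroup $N$: (i) smoothness of each $(\varphi_p)_A$ together with $A_0$-smoothness of the restriction to a fixed $x \in \Delta^p$; (ii) the left-invariant simplex axioms on each $N_A$; and (iii) naturality in $A$. Condition (ii) is immediate from Proposition~\ref{prop:standard}, since by construction $(\varphi_p)_A$ is exactly the standard left-invariant notion of simplices on the exponential Lie group $N_A$ with smoothing factor $\ell$. So the work is concentrated in (i) and (iii), and both will be proved by induction on $p$, following the same inductive scaffolding (apex-base construction, prism-to-simplex quotient, left-translation) used to define $(\varphi_p)_A$ in the first place.

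For the base case $p = 0$, the map $(\varphi_0)_A \maps \Delta^0 \times N_A \to N_A$ is the projection, which is trivially $A_0$-smooth and natural in $A$. For the inductive step, assume $(\varphi_{p-1})_A$ is $A_0$-smooth (for each $x$) and natural in $A$. The map $f_p \maps [0,1] \times \Delta^{p-1} \times N_A^p \to N_A$ is defined by $f_p(t,x,n_1,\dots,n_p) = \exp_A(\ell(t) X)$, where $\exp_A(X) = \varphi_{p-1}(x,n_1,\dots,n_p)$. I would verify $A_0$-smoothness here by noting that $\exp_A$ is an $A_0$-diffeomorphism between $\n_A$ and $N_A$ (Proposition~\ref{prop:nilpotentsupergroup}), that $\ell$ is a smooth function $[0,1] \to [0,1]$ independent of $A$, and that scalar multiplication on the $A_0$-module $\n_A$ is $A_0$-linear; composing these with the inductive hypothesis on $\varphi_{p-1}$ gives $A_0$-smoothness of $f_p$ restricted to fixed $(t,x)$. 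Naturality under $f \maps A \to B$ follows because $N_f \circ \exp_A = \exp_B \circ \n_f$, because $\n_f$ is $A_0$-linear and hence commutes with scalar multiplication by $\ell(t) \in \R \subset A_0$, and because $\varphi_{p-1}$ is natural by induction. The quotient $q_p \maps [0,1] \times \Delta^{p-1} \to \Delta^p$ is purely a map of manifolds independent of $A$, so descending to $\tilde{f}_p \maps \Delta^p \times N_A^p \to N_A$ preserves both smoothness properties and naturality. Finally, extending $\tilde{f}_p$ to the unbased simplices via $(\varphi_p)_A(x, g, gn_1, \dots, gn_p) = g \, \tilde{f}_p(x, n_1, \dots, n_p)$ preserves $A_0$-smoothness because left translation on $N_A$ is induced from the $A_0$-smooth multiplication $m \maps N \times N \to N$, and preserves naturality because $N_f$ is a group homomorphism.

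The main obstacle is keeping careful track of $A_0$-smoothness through the quotient step and through the apex-base interpolation: one must check that $\tilde{f}_p$ really is $A_0$-smooth after identifying the neighborhood of the $0$-face to a point, and this is precisely what the smoothing factor $\ell$ is designed to ensure. Because $\ell$ vanishes on a neighborhood of $0$, the map $f_p$ is constantly $1_{N_A}$ on a neighborhood of $\{0\} \times \Delta^{p-1} \times N_A^p$, so $\tilde{f}_p$ inherits $A_0$-smoothness from $f_p$ in the usual way; crucially, $\ell$ does not depend on $A$, so this construction is natural. All other aspects of the proof reduce, via the inductive hypothesis and the naturality of $\exp$, $m$, and left translation on $N$, to the corresponding facts established in Proposition~\ref{prop:standard} for each fixed $A$.
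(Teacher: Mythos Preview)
Your proposal is correct and follows essentially the same approach as the paper: both proceed by induction on $p$, handle the base case as the obvious projection, and for the inductive step verify naturality and $A_0$-smoothness through the apex-base construction by using the naturality of $\exp$ (i.e., $N_f \circ \exp_A = \exp_B \circ \n_f$), the $\R$-linearity of $\n_f$ applied to $\ell(t)X$, and finally left translation. The paper's version is slightly more explicit in writing out the key computation $N_f(\exp_A(\ell(t)X)) = \exp_B(\ell(t)\n_f(X))$, while you organize the argument more systematically by separating out the three defining conditions up front, but the substance is the same.
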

\begin{proof}
	Fix Grassmann algebras $A$ and $B$ and a homomorphism $f \maps A \to
	B$. We proceed by induction on $p$. For $p = 0$, the maps:
	\[ (\varphi_0)_A \maps \Delta^0 \times N_A \to N_A, \]
	\[ (\varphi_0)_B \maps \Delta^0 \times N_B \to N_B, \]
	are the obvious projections. The fact that:
	\[ \xymatrix{
	\Delta^0 \times N_A \ar[r]^>>>>>{(\varphi_0)_A} \ar[d]_{1 \times N_f}  & N_A \ar[d]^{N_f} \\
	\Delta^0 \times N_B \ar[r]_>>>>>{(\varphi_0)_B} & N_B 
	}
	\]
	commutes is then automatic.

	For arbitrary $p$, suppose that the following square commutes:
	\[ \xymatrix{
	\Delta^{p-1} \times N_A^p \ar[r]^>>>>>{(\varphi_{p-1})_A} \ar[d]_{1 \times N_f^p}  & N_A \ar[d]^{N_f} \\
	\Delta^{p-1} \times N_B^p \ar[r]_>>>>>{(\varphi_{p-1})_B} & N_B 
	}
	\]
	and that $(\varphi_{p-1})_A$ and  $(\varphi_{p-1})_B$ are $A_0$- and
	$B_0$-smooth. In other words, the above square says that for any
	$p$-tuple of $A$-points, we have:
	\[ N_f \circ [ n_1, \dots, n_p ] = [N_f(n_1), \dots, N_f(n_p)]. \]
	We construct $(\varphi_p)_A$ and $(\varphi_p)_B$ from
	$(\varphi_{p-1})_A$ and $(\varphi_{p-1})_B$, respectively, using the
	apex-base construction. That is, given the $(p-1)$-simplex $[n_1,
	\dots, n_p]$ given by $(\varphi_{p-1})_A$ for the $A$-points $n_1,
	\dots, n_p \in N_A$, we define the based $p$-simplex:
	\[ [1,n_1, \dots, n_p] \]
	in $N_A$ by using the exponential map $\exp_A$ to sweep out a path from the apex $1$ to
	each point of the base $[n_1, \dots, n_p]$. Similarly, we define the based $p$-simplex:
	\[ [1,N_f(n_1), \dots, N_f(n_p)] \]
	in $N_B$ by using the exponential map $\exp_B$ to sweep out a path from
	the apex $1$ to each point of the base $[N_f(n_1), \dots, N_f(n_p)]$.
	From the naturality of $\exp$, we will establish that:
	\[ N_f \circ [1, n_1, \dots, n_p] = [1, N_f(n_1), \dots, N_f[n_p)] . \]

	To verify this claim, let 
	\[ \exp_A(X) = [n_1, \dots, n_p](x), \mbox{ for some } x \in \Delta^{p-1} \]
	be a point of the base in $N_A$. By the inductive hypothesis, $N_f(\exp_A(X)) =
	\exp_B(\n_f(X))$ is the corresponding point of the base in $N_B$. We
	wish to see that points of the path $\exp_A(\ell(t)X)$ connecting $1$
	to $\exp_A(X)$ in $N_A$ correspond via $N_f$ to points on the path
	$\exp_B(\ell(t)\n_f(X))$ connecting $1$ to $\exp_B(\n_f(X))$ in $N_B$.
	But this is automatic, because:
	\[ N_f(\exp_A(\ell(t)X) = \exp_B(\n_f(\ell(t)X)) = \exp_B(\ell(t) \n_f(X)) , \]
	where in the last step we use the fact that $\n_f \maps \n_A \to \n_B$
	is linear. Thus, it is true that:
	\[ N_f \circ [1, n_1, \dots, n_p] = [1, N_f(n_1), \dots, N_f[n_p)] , \]
	for based $p$-simplices. 
	
	Using left translation, we can show that:
	\[ N_f \circ [n_0, n_1, \dots, n_p] = [N_f(n_0), N_f(n_1), \dots, N_f[n_p)] . \]
	for all $p$-simplices. In other words, the following diagram commutes:
	\[ \xymatrix{
	\Delta^p \times N_A^{p+1} \ar[r]^>>>>>{(\varphi_p)_A} \ar[d]_{1 \times N_f^{p+1}}  & N_A \ar[d]^{N_f} \\
	\Delta^p \times N_B^{p+1} \ar[r]_>>>>>{(\varphi_p)_B} & N_B 
	}
	\]
	Because each step in the apex-base construction preserves $A_0$- or
	$B_0$-smoothness, we note that $(\varphi_p)_A$ and $(\varphi_p)_B$ are
	$A_0$- and $B_0$-smooth, respectively. The result now follows for all
	$p$ by induction.
\end{proof}

\section{Superstring Lie 2-supergroups} \label{sec:finale}

We are now ready to unveil the Lie 2-supergroup that integrates our favorite
Lie 2-superalgebra, $\superstring(n+1,1)$.  Remember, this is the Lie
2-superalgebra which occurs only in the dimensions for which string theory
makes sense: $n+2 = 3$, 4, 6 and 10. It is \emph{not} nilpotent, since the
Poincar\'e superalgebra $\siso(n+1,1)$ in degree 0 of $\superstring(n+1,1)$ is
not nilpotent. Nonetheless, we are equipped to integrate it using only the
tools we have built to perform this task for nilpotent Lie $n$-superalgebras.

The road to this result has been a long one, and there is yet some ground to
cover before we are finished. So, let us take stock of our progress before we
move ahead:

\begin{itemize}
	\item In spacetime dimensions $n+2 = 3$, 4, 6 and 10, we used division
		algebras to construct a 3-cocycle $\alpha$ on the
		supertranslation algebra:
		\[ \T = V \oplus S \]
		which is nonzero only when it  eats a vector and two spinors:
		\[ \alpha(A, \psi, \phi) = \langle \psi, A \phi \rangle . \]

	\item Because $\alpha$ is invariant under the action of $\so(n+1,1)$,
		it can be extended to a 3-cocycle on the Poincar\'e
		superalgebra:
		\[ \siso(n+1,1) = \so(n+1,1) \ltimes \T. \]
		The extension is just defined to vanish outside of $\T$, and we
		call it $\alpha$ as well.

	\item Therefore, in spacetime dimensions $n+2$, we get a Lie
		2-superalgebra $\superstring(n+1,1)$ by extending
		$\siso(n+1,1)$ by the 3-cocycle $\alpha$.
\end{itemize}

In the last section, we built the technology necessary to integrate Lie
superalgebra cocycles to supergroup cocycles, \emph{provided} the Lie
superalgebra in question is nilpotent. This allows us to integrate nilpotent
Lie $n$-superalgebras to $n$-supergroups. But $\superstring(n+1,1)$ is not
nilpotent, so we cannot use this directly here.

However, the cocycle $\alpha$ is supported on a nilpotent subalgebra: the
supertranslation algebra, $\T$.  This saves the day: we can integrate $\alpha$
as a cocycle on $\T$.  This gives us a cocycle $\smallint \alpha$
supertranslation supergroup, $T$. We will then be able to extend this cocycle
to the Poincar\'e supergroup, thanks to its invariance under Lorentz
transformations.

The following proposition helps us to accomplish this, but takes its most
beautiful form when we work with `homogeneous supergroup cochains', which we
have not actually defined. Rest assured---they are exactly what you expect. If
$G$ is a supergroup that acts on the abelian supergroup $M$ by automorphism, a
\define{homogeneous $M$-valued $p$-cochain} on $G$ is a smooth map:
\[ F \maps G^{p+1} \to M \]
such that, for any Grassmann algebra $A$ and $A$-points $g, g_0, \dots, g_p \in
G_A$:
\[ F_A(gg_0, g g_1, \dots, g g_p) = g F_A(g_1, \dots, g_p) . \]
We can define the supergroup cohomology of $G$ using homogeneous or
inhomogeneous cochains, just as was the case with Lie group cohomology.

\begin{prop} \label{prop:extendingcochains2}
	Let $G$ and $H$ be Lie supergroups such that $G$ acts on $H$, and
	let $M$ be an abelian supergroup on which $G \ltimes H$ acts by
	automorphism.  Given a homogeneous $M$-valued $p$-cochain $F$ on $H$:
	\[ F \maps H^{p+1} \to M, \]
       	we can extend it to a map of supermanifolds:
	\[ \tilde{F} \maps (G \ltimes H)^{p+1} \to M \]
	by pulling back along the projection $(G \ltimes H)^{p+1} \to H^{p+1}$.
	In terms of $A$-points 
	\[ (g_0,h_0), \ldots, (g_p,h_p) \in G_A \ltimes H_A , \] 
	this means $\tilde{F}$ is defined by:
	\[ \tilde{F}_A((g_0,h_0), \ldots, (g_p,h_p)) = F_A(h_0, \ldots, h_p), \]
	Then $\tilde{F}$ is a homogeneous $p$-cochain on $G \ltimes H$ if and only if $F$ is
	$G$-equivariant, and in this case $d\tilde{F} = \widetilde{dF}$.
\end{prop}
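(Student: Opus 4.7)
The plan is to work throughout with homogeneous cochains, so that $d$ is the combinatorial coboundary $dF = F \circ \partial$, and to verify the two claims by a direct unwinding of definitions on $A$-points.

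First I would observe that, because pullback along a surjection of supermanifolds clearly preserves smoothness and naturality in the Grassmann algebra $A$, the formula
\[ \tilde{F}_A((g_0,h_0), \ldots, (g_p,h_p)) = F_A(h_0, \ldots, h_p) \]
automatically defines an $M$-valued $p$-cochain (in the sense of a morphism of supermanifolds) on $G \ltimes H$. The only thing in question is homogeneity, i.e.\ equivariance for the left translation action of $G \ltimes H$ on $(G \ltimes H)^{p+1}$. Using the factorization $(g,h) = (1,h)(g,1)$ in the semidirect product, it suffices to test invariance separately under elements of the form $(1,h)$ and $(g,1)$. The $(1,h)$ case is immediate from homogeneity of $F$ on $H$. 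For the $(g,1)$ case, we compute
\[ \tilde{F}_A((g,1)(g_0,h_0), \ldots, (g,1)(g_p,h_p)) = F_A(g(h_0), \ldots, g(h_p)) , \]
while $(g,1) \tilde{F}_A((g_0,h_0),\ldots,(g_p,h_p)) = g F_A(h_0,\ldots,h_p)$. These agree for all $A$ and all $A$-points precisely when $F_A(g(h_0),\ldots,g(h_p)) = g F_A(h_0,\ldots,h_p)$, which is exactly $G$-equivariance of $F$. This gives the first claim.

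For the second claim, having shown $\tilde{F}$ is homogeneous, I would use the characterization $dF = F \circ \partial$ on homogeneous cochains (and the analogous one for $d\tilde F$). Because $\tilde F$ is literally the pullback of $F$ along the supergroup projection $\pi^{p+1} \colon (G \ltimes H)^{p+1} \to H^{p+1}$, and $\pi$ commutes with all face maps, we get
\[ d\tilde{F} = \tilde{F} \circ \partial = (F \circ \pi^{p+1}) \circ \partial = F \circ \partial \circ \pi^{p+2} = (dF) \circ \pi^{p+2} = \widetilde{dF} , \]
verified on $A$-points by the same single-line calculation. In particular, $\tilde F$ is closed whenever $F$ is.

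The only genuine subtlety will be getting the semidirect product conventions straight: one must be careful about whether $(g,h)$ factors as $(1,h)(g,1)$ or $(g,1)(1,h)$, and consequently whether the action of $(g,h)$ on $M$ is $h \circ g$ or $g \circ h$, since the equivalence of homogeneity with $G$-equivariance hinges on matching these conventions with the semidirect product multiplication $(g_1,h_1)(g_2,h_2) = (g_1 g_2, h_1 \cdot g_1(h_2))$. Once these conventions are fixed, the proof is a short diagram chase on $A$-points, with no analytic or cohomological difficulty beyond what was already handled in Proposition \ref{prop:extendingcochains1} at the Lie superalgebra level.
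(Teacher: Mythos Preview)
Your proposal is correct and follows essentially the same approach as the paper's proof: both work on $A$-points, use the factorization $(g,h) = (1,h)(g,1)$ in the semidirect product, exploit the $H$-homogeneity of $F$ for the $(1,h)$ part, and reduce the remaining $(g,1)$ part to $G$-equivariance. Your argument for $d\tilde{F} = \widetilde{dF}$ via the projection commuting with face maps is a bit more explicit than the paper's, which simply notes it is easy; and your flagging of the semidirect product convention as the only genuine subtlety is apt.
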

\begin{proof}
	We work over $A$-points, $G_A \ltimes H_A$. Denoting the action of $g
	\in G_A$ on $h \in H_A$ by $h^g$, recall that multiplication in the
	semidirect product $G_A \ltimes H_A$ is given by:
	\[ (g_1, h_1) (g_2,h_2) = (g_1 g_2, h_1 h_2^{g_1}). \]
	
	Now suppose $\tilde{F}$ is homogeneous. By definition of homogeneity, we have:
	\[ \tilde{F}_A((g,h)(g_0,h_0), \ldots, (g,h)(g_p,h_p)) = (g,h) \tilde{F}_A((g_0,h_0), \ldots, (g_p,h_p)). \]
	Multiplying out each pair on the left and using the definition of
	$\tilde{F}$ on both sides, we get:
	\[ F_A(h h_0^{g}, \ldots, h h_p^{g}) = (g,h) F_A(h_0, \ldots, h_p). \]
	Writing $(g,h)$ as $(1,h)(g,1)$, and pulling $h$ out on the left-hand
	side, we now obtain:
	\[ (1,h) F_A(h_0^{g}, \ldots, h_p^{g}) = (1,h)(g,1) F_A(h_0, \ldots, h_p). \]
	Cancelling $(1,h)$ from both sides, this last equation just says that
	$F_A$ is $G_A$-equivariant. The converse follows from reversing this
	calculation. Since this holds for any Grassmann algebra $A$, we
	conclude that $\tilde{F}$ is homogeneous if and only if $F$ is
	$G$-equivariant.

	When $F$ is $G$-equivariant, it is easy to see that $dF$ is also, and
	that $d \tilde{F} = \widetilde{dF}$, so we are done.
\end{proof}

\noindent 
Now, at long last, we are ready to integrate $\alpha$. In the
following proposition, $T$ denotes the \define{supertranslation group}, the
exponential supergroup of the supertranslation algebra $\T$.

\begin{prop}
	In dimensions $n+2 = 3$, 4, 6 and 10, the Lie supergroup 3-cocycle
	$\smallint \alpha$ on the supertranslation group $T$ is invariant under
	the action of $\Spin(n+1,1)$.
\end{prop}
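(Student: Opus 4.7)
The plan is to verify invariance at the level of $A$-points for any Grassmann algebra $A$, and to separate the statement into two geometric pieces: (i) the standard simplices in $T$ are $\Spin(n+1,1)$-equivariant, and (ii) the left-invariant form $\alpha$ on $T$ is $\Spin(n+1,1)$-invariant as a differential form. Once both are in hand, invariance of the cocycle follows from a change-of-variables computation.

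For step (i), I would fix $g \in \Spin(n+1,1)$ and note that because $\Spin(n+1,1)$ acts on the supertranslation algebra $\T$ by Lie superalgebra automorphisms, it acts on the exponential supergroup $T$ by Lie supergroup automorphisms; equivalently, it commutes with $\exp_A$ and preserves multiplication on $T_A$. The standard $p$-simplex $[1, t_1, t_1 t_2, \ldots, t_1 \cdots t_p]$ is built by the apex-base construction using nothing but $\exp_A$, left multiplication, and the smoothing factor $\ell$. An easy induction on $p$, essentially the same one that proved Proposition~\ref{prop:standard}, then gives
\[ g \cdot [t_0, t_1, \ldots, t_p] = [g \cdot t_0, g \cdot t_1, \ldots, g \cdot t_p] \]
as maps $\Delta^p \to T_A$. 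In particular $[1, g \cdot t_1, (g \cdot t_1)(g \cdot t_2), \ldots] = g \cdot [1, t_1, t_1 t_2, \ldots]$, since $g \cdot 1 = 1$ and $g$ preserves products.

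For step (ii), recall that $\alpha$ is constructed from the Minkowski inner product and the spinor bracket, which are $\Spin(n+1,1)$-equivariant by construction; hence the linear map $\alpha \maps \Lambda^3 \T \to \R$ is invariant. Passing to $A$-points and then left-translating to obtain a left-invariant $A_0$-valued 3-form on $T_A$ preserves invariance, because the Lorentz action on $T_A$ commutes with left translation (the Lorentz action is by automorphism, not just by left translation). Therefore $g^* \alpha_A = \alpha_A$ on $T_A$.

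Combining these, we compute on $A$-points, using the inhomogeneous formula for the cocycle,
\[ (\smallint \alpha)_A(g \cdot t_1, \ldots, g \cdot t_3) = \int_{[1, g \cdot t_1, (g \cdot t_1)(g \cdot t_2), (g \cdot t_1)(g \cdot t_2)(g \cdot t_3)]} \alpha_A = \int_{g \cdot [1, t_1, t_1 t_2, t_1 t_2 t_3]} \alpha_A, \]
and by change of variables along the diffeomorphism given by the action of $g$,
\[ \int_{g \cdot [1, t_1, t_1 t_2, t_1 t_2 t_3]} \alpha_A = \int_{[1, t_1, t_1 t_2, t_1 t_2 t_3]} g^* \alpha_A = \int_{[1, t_1, t_1 t_2, t_1 t_2 t_3]} \alpha_A = (\smallint \alpha)_A(t_1, t_2, t_3). \]
Since this holds for every Grassmann algebra $A$ and every $A$-point of $\Spin(n+1,1) \times T^3$, it defines an equality of maps of supermanifolds, establishing the $\Spin(n+1,1)$-invariance of $\smallint \alpha$. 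The main thing to be careful about is step (i): one must verify that the apex-base construction, including the smoothing factor $\ell$, really does commute with $g$; this is straightforward because $g$ acts linearly on $\n$ and the only nonlinear ingredient is the exponential, with which $g$ commutes.
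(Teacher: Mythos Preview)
Your proposal is correct and follows essentially the same argument as the paper: the paper also reduces invariance to (i) $G$-equivariance of the standard simplices, proved by induction on $p$ via the apex-base construction and the equivariance of $\exp$, and (ii) $G$-invariance of the left-invariant form $\omega_A$, then concludes by the same change-of-variables computation $\int_{[h_i^g]} \omega_A = \int_{[h_i]} \Ad(g)^*\omega_A = \int_{[h_i]} \omega_A$. The only cosmetic difference is that the paper packages this as a general lemma for any invariant cochain on any exponential nilpotent supergroup with a supergroup acting by automorphism, and then specializes to $\alpha$, $T$, and $\Spin(n+1,1)$.
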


\noindent This is an immediate consequence of the following:

\begin{prop} 
	Let $H$ be the exponential supergroup of a nilpotent Lie superalgebra
	$\h$.  Assume $H$ is equipped with its standard left-invariant notion
	of simplices, and let $G$ be a Lie supergroup that acts on $H$ by
	automorphism. If $\omega \in C_0^p(\h)$ is an even Lie superalgebra
	$p$-cochain which is invariant under the induced action of $G$ on $\h$,
	then $\smallint \omega \in C^p(H)$ is a Lie supergroup $p$-cochain
	which is invariant under the action of $G$ on $H$.
\end{prop}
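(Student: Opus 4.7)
The plan is to work on $A$-points throughout, reducing the claim to the corresponding statement about an ordinary Lie group $H_A$ equipped with an action of the ordinary Lie group $G_A$ by automorphisms, and then to verify it there by combining a simple change-of-variables for integrals with the $G$-invariance of the left-invariant form $\omega_A$. Concretely, for each Grassmann algebra $A$ and $A$-points $g \in G_A$, $h_1, \dots, h_p \in H_A$, I want to show
\[ (\smallint \omega)_A(g \cdot h_1, \dots, g \cdot h_p) = (\smallint \omega)_A(h_1, \dots, h_p), \]
which, by the definition of $\smallint$ given in Proposition \ref{prop:superintegrating}, amounts to comparing the integrals of $\omega_A$ over the based simplices $[1, g\cdot h_1, (g \cdot h_1)(g \cdot h_2), \dots]$ and $[1, h_1, h_1 h_2, \dots]$.

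The first key step is to show that the standard left-invariant notion of simplices is equivariant under the action of $G_A$: for any $(p+1)$-tuple of $A$-points $n_0, \dots, n_p \in H_A$,
\[ g \cdot [n_0, n_1, \dots, n_p] = [g \cdot n_0, g \cdot n_1, \dots, g \cdot n_p]. \]
I would prove this by induction on $p$, following exactly the pattern in the proof of Proposition \ref{prop:standard}, but replacing ``naturality in $A$'' with ``equivariance under $G_A$''. The two facts that make the induction go are: $g \cdot 1 = 1$ (so the apex is preserved), and the action of $g$ commutes with the exponential map, $g \cdot \exp_A(X) = \exp_A(g_* X)$, because $g$ acts by an automorphism of $H_A$ whose derivative at the identity is the induced action on $\h_A$. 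Combined with left-invariance (which is compatible with $G_A$ acting by automorphisms), the apex-base construction then carries $p$-simplices to $p$-simplices in the expected way.

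Given this, the second step is the change-of-variables calculation. The action of $g$ is an orientation-preserving diffeomorphism of $H_A$, so
\[ \int_{g \cdot S} \omega_A = \int_S g^* \omega_A \]
for any $p$-simplex $S$. It therefore suffices to show $g^* \omega_A = \omega_A$ as left-invariant $A_0$-valued $p$-forms on $H_A$. Because $g$ is a group automorphism, $g \circ L_h = L_{g \cdot h} \circ g$, so $g^* \omega_A$ is again left-invariant and is determined by its value at $1$. That value is $\omega_A$ precomposed with $(dg)_1$, and $(dg)_1$ is the induced action of $g$ on $\h_A = T_1 H_A$. Since $\omega$ is $G$-invariant as a Lie superalgebra cochain, Proposition \ref{prop:supervectormanifold} (applied to the linear map $\omega$) gives $G_A$-invariance of $\omega_A$, so $g^* \omega_A = \omega_A$ at the identity and hence everywhere.

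The main obstacle is the careful bookkeeping in the first step: making the inductive apex-base argument rigorous when the action of $G_A$ is twisted into the picture, and in particular checking that acting by $g$ really does intertwine the sweeping paths $\exp_A(\ell(t)X)$ used to build simplices---this hinges on the linearity of $(dg)_1$ and on $\exp_A$ intertwining the $G_A$-action on $\h_A$ with the $G_A$-action on $H_A$. Once that is in hand, $A_0$-smoothness and naturality in $A$ are automatic, since they are already verified for $\smallint \omega$ in Proposition \ref{prop:superintegrating}, and the equation $(\smallint \omega)_A(g \cdot h_1, \dots, g \cdot h_p) = (\smallint \omega)_A(h_1, \dots, h_p)$ holds for every Grassmann algebra $A$, proving $\smallint \omega$ is $G$-invariant as a map of supermanifolds.
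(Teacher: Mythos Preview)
Your proposal is correct and follows essentially the same argument as the paper: both reduce to $A$-points, prove $G_A$-equivariance of the standard simplices by induction through the apex-base construction (using that $g$ fixes $1$ and intertwines $\exp_A$), and then conclude via the change-of-variables identity $\int_{g\cdot S}\omega_A=\int_S g^*\omega_A$ together with $g^*\omega_A=\omega_A$ from the $G$-invariance of $\omega$. The only cosmetic difference is that the paper phrases the invariance in terms of homogeneous $(p{+}1)$-tuples $[h_0,\dots,h_p]$ while you work with the inhomogeneous version $[1,h_1,h_1h_2,\dots]$; since $g$ acts by automorphisms these are equivalent.
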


\begin{proof}
	Fixing a Grassmann algebra $A$, we must prove that
	\[ \int_{[h_0^g, h_1^g, \dots, h_p^g]} \omega_A = \int_{[h_0, h_1, \dots, h_p]} \omega_A , \]
	for all $A$-points $g \in G_A$ and $h_0, h_1, \dots, h_p \in H_A$. We
	shall see this follows from the fact that the $p$-simplices in $H$ are
	themselves $G$-equivariant, in the sense that:
	\[ [h_0^g, h_1^g, \dots, h_p^g] = [h_0, h_1, \dots, h_p]^g . \]
	Assuming this for the moment, let us check that our result follows.
	Indeed, applying the above equation, we get:
	\begin{eqnarray*}
		\int_{[h_0^g, h_1^g, \dots, h_p^g]} \omega_A & = & \int_{[h_0, h_1, \dots, h_p]^g} \omega_A \\
		                                             & = & \int_{[h_0, h_1, \dots, h_p]} \Ad(g)^* \omega_A \\
							     & = & \int_{[h_0, h_1, \dots, h_p]} \omega_A ,
	\end{eqnarray*}
	where the final step uses $\Ad(g)^* \omega_A = \omega_A$, which is just
	the $G$-invariance of $\omega$.

	It therefore remains to prove the equation $[h_0^g, h_1^g, \dots, h_p^g] =
	[h_0, h_1, \dots, h_p]^g$ actually holds. Note that this is the same as
	saying that the map
	\[ (\varphi_{p})_A \maps \Delta^p \times H_A^{p+1} \to H_A \]
	is $G_A$-equivariant. We check it by induction on $p$.

	For $p = 0$, the map:
	\[ (\varphi_0)_A \maps \Delta^0 \times H_A \to H_A  \]
	is just the projection, and $G_A$-equivariance is obvious. So fix some $p
	\geq 0$ and suppose that $(\varphi_{p-1})_A$ is $G_A$-equivariant. We now
	construct $(\varphi_p)_A$ from $(\varphi_{p-1})_A$ using the apex-base
	construction, and show that equivariance is preserved.

	So, given the $(p-1)$-simplex $[h_1, \dots, h_p]$ given by
	$(\varphi_{p-1})_A$ for the $A$-points $h_1, \dots, h_p \in H_A$, we
	define the based $p$-simplex:
	\[ [1,h_1, \dots, h_p] \]
	in $H_A$ by using the exponential map to sweep out a path from the apex
	$1$ to each point of the base $[h_1, \dots, h_p]$. In a similar way, we
	define the based $p$-simplex:
	\[ [1,h_1^g, \dots, h_p^g] \]
	By hypothesis, $[h_1^g, \dots, h_p^g] = [h_1, \dots, h_p]^g$, and since
	the exponential map $\exp \maps \h_A \to H_A$ is itself
	$G_A$-equivariant, it follows for based $p$-simplices that:
	\[ [1, h_1^g, \dots, h_p^g] = [1, h_1, \dots, h_p]^g . \]
	The result now follows for all $p$-simplices by left translation. This
	completes the proof.
\end{proof}

Because $\alpha$ is $\Spin(n+1,1)$-invariant, it follows from Propostion in
dimensions 3, 4, 6 and 10, the cocycle $\smallint \alpha$ on the
supertranslations can be extended to a 3-cocycle on the full Poincar\'e
supergroup:
\[ \SISO(n+1, 1) = \Spin(n+1, 1) \ltimes T , \] 
By a slight abuse of notation, we continue to denote this extension by
$\smallint \alpha$. As an immediate consequence, we have:

\begin{thm} \label{thm:superstringgroup}
	In dimensions $n+2 = 3$, 4, 6 and 10, there exists a slim Lie
	2-supergroup formed by extending the Poincar\'e supergroup
	$\SISO(n+1,1)$ by the 3-cocycle $\smallint \alpha$, which we call we
	the \define{superstring Lie 2-supergroup},
	\define{\boldmath{$\Superstring(n+1,1)$}}.
\end{thm}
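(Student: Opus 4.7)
The plan is to assemble results already proven in the preceding sections into a short, mostly-mechanical argument, since the hard work has been front-loaded into the machinery. First I would note that the supertranslation superalgebra $\T = V \oplus S_+$ is 2-step nilpotent (its only nonzero bracket is $\cdot \maps \Sym^2 S_+ \to V$, and a second bracket involves only vectors, which commute). Hence by Proposition \ref{prop:nilpotentsupergroup} it integrates to the exponential supertranslation supergroup $T$, on which Proposition \ref{prop:standard} furnishes a standard left-invariant notion of simplices (lifted Grassmann-algebra-wise to $T$ by the proposition immediately before Section \ref{sec:finale}).

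Next, observe that $\alpha$ is an even $\R$-valued Lie superalgebra $3$-cocycle on $\T$: it is graded-antisymmetric, and its only nonvanishing component lives on $V \otimes \Sym^2 S_+$, which is even. So Proposition \ref{prop:superintegrating} produces a smooth supergroup $3$-cocycle
\[ \smallint \alpha \maps T^3 \to \R . \]
Because $\alpha$ is Lorentz-invariant (it is built from $\Spin(n+1,1)$-equivariant data, as recorded in Theorem \ref{thm:3-cocycle}), the proposition immediately preceding this theorem, applied with $H = T$ and $G = \Spin(n+1,1)$ acting on $T$ by automorphism, guarantees that $\smallint \alpha$ is also $\Spin(n+1,1)$-invariant.

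Now I invoke Proposition \ref{prop:extendingcochains2} with $G = \Spin(n+1,1)$, $H = T$ and $M = \R$ (with trivial action). The $\Spin(n+1,1)$-invariance of $\smallint \alpha$ is exactly the hypothesis needed to promote it to a homogeneous cocycle on the semidirect product $\SISO(n+1,1) = \Spin(n+1,1) \ltimes T$, and the proposition further asserts that $d \widetilde{\smallint \alpha} = \widetilde{d \smallint \alpha} = 0$. Passing back to inhomogeneous cochains, this yields a normalized Lie supergroup $3$-cocycle on $\SISO(n+1,1)$, which we continue to denote $\smallint \alpha$.

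Finally, I feed this cocycle into Proposition \ref{prop:2supergroup} to obtain the slim Lie 2-supergroup $\String_{\smallint \alpha}(\SISO(n+1,1))$, and define $\Superstring(n+1,1)$ to be this object. The only real subtlety to watch for is normalization: the proposition constructing the 2-supergroup from a $3$-cocycle requires a \emph{normalized} cocycle, so I would spend a sentence observing that $\smallint \alpha$ vanishes whenever one of its arguments is the identity, which follows because the corresponding simplex $[1, n_1, \dots, n_1 \cdots n_p]$ collapses to a lower-dimensional simplex (two vertices coincide) and the integral of a $3$-form over a degenerate simplex is zero. Nothing else in the argument is harder than bookkeeping; the genuine content was absorbed into the integration theorem and the extension lemma.
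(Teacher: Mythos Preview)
Your proposal is correct and follows essentially the same approach as the paper, which presents the theorem as an immediate consequence of the preceding propositions without spelling out the assembly. Your added observation that $\smallint \alpha$ is normalized (because a simplex with a repeated vertex is degenerate, so the integral of a top-degree form over it vanishes) is a worthwhile detail that the paper leaves implicit.
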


\section*{Acknowledgements}

We thank Jim Dolan, Ezra Getzler, Hisham Sati and Christoph Wockel for helpful
conversations. We especially thank Urs Schreiber for reviewing this manuscript
and freely sharing his insight into higher gauge theory. Finally, we thank John
Baez for his untiring help as he supervised the thesis on which this paper is
based. This work was partially supported by FQXi grant RFP2-08-04.

\appendix 

\section{Explicitly integrating 0-, 1-, 2- and 3-cochains} \label{app:examples}

In this appendix, in order to get a feel for the integration procedure given in
Proposition \ref{prop:integral}, we shall explicitly calculate some Lie group
cochains from Lie algebra cochains. The resulting formulas are polynomials on
the Lie group, at least in the nilpotent case. It is important to note,
however, that we did not need these explicit formulas anywhere in this paper.
It was enough to understand that they exist, and have the properties described
in Section \ref{sec:integratingcochains}. We nonetheless suspect that explicit
formulas will prove useful in future work, so we collect some here.

To facilitate this calculation, we shall also have to explicitly construct some
low-dimensional left-invariant simplices. For 0-cochains and 1-cochains, we
will find the task very easy---we only need our Lie group $G$ to be
exponential. On the other hand, for 2- and 3-cochains, the construction gets
much harder. This complexity shows just how powerful the abstract approach of
the previous section actually is---imagine having to prove Proposition
\ref{prop:integral} through an explicit integration such as those we present
here!

So, for 2- and 3-cochains, we simplify the problem by assuming our Lie algebra
$\g$ to be \define{2-step nilpotent}: all brackets of brackets are zero. This
allows us to use a simplified form of the Baker--Campbell--Hausdorff formula:
\[ \exp(X) \exp(Y) = \exp(X + Y + \half [X,Y]) \]
and the Zassenhaus formula:
\begin{equation} \label{eqn:zassenhaus} 
	\exp(X + Y) = \exp(X) \exp(Y) \exp(-\half [X,Y]) = \exp(X) \exp(Y - \half[X,Y]). 
\end{equation}
Partially, this nilpotentcy assumption just makes our calculations tenable, but
secretly it is because our main application for these ideas is to 2-step
nilpotent Lie \emph{superalgebras}.  

\subsection*{0-cochains} \label{sec:0-cochains}

Let $\omega$ be a Lie algebra 0-cochain: that is, a real number. Then
$\smallint \omega = \omega$ is a Lie group 0-cochain. We can view it as the
integral of $\omega$ over the 0-simplex $[1]$.

\subsection*{1-cochains} \label{sec:1-cochains}

Let $\omega$ be a Lie algebra 1-cochain: that is, a linear map 
\[ \omega \maps \g \to \R, \]
which we extend to a 1-form on $G$ by left translation. We define a Lie group
1-cochain $\smallint \omega$ by integrating $\omega$ over 1-simplices in $G$.
In particular
\[ \smallint \omega(g) = \int_{[1,g]} \omega. \]
Since $G$ is exponential, it has a standard left-invariant notion of 1-simplex,
given by exponentiation. So, if $g = \exp(X)$, then the 1-simplex $[1,g]$ is
given by
\[ [1,g](t) = \exp(tX), \quad 0 \leq t \leq 1.\]
We denote this map by $\varphi$ for brevity. So:
\[ \smallint \omega(g) = \int_0^1 \omega(\dot{\varphi}(t)) \, dt \]
Noting that the derivative of $\varphi$ is 
\[ \dot{\varphi}(t) = \exp(tX) X \]
we have
\[ \smallint \omega(g) = \int_0^1 \omega( \exp(tX) X) \, dt =  \int_0^1 \omega(X) \, dt = \omega(X), \]
where we have used the left invariance of $\omega$. In summary, 
\[ \smallint \omega(g) = \omega(X), \]
for $g = \exp(X)$.

As a check on this, note that because we have proved $\smallint$ is a cochain
map, $\smallint \omega$ should be a cocycle whenever $\omega$ is. So let us
verify this. Assume $d \omega = 0$. That is, for all $X$ and $Y \in \g$, we
have:
\[ d \omega(X,Y) = - \omega([X,Y]) = 0. \]
So the cocycle condition merely says that $\omega$ must vanish on brackets. 

Now compute the coboundary of $\smallint \omega$. As a function of a single
group element, $\smallint \omega$ is an inhomogeneous Lie group 2-cochain. So
we must use the coboundary formula from Section \ref{sec:Lie-n-groups}, which we
recall here for convenience: when $f \maps G^p \to \R$ is an inhomogeneous Lie
group $p$-cochain, its coboundary is:
\begin{eqnarray*}
	df(g_1, \dots, g_{p+1}) & = & \sum_{i=1}^p (-1)^i f(g_1, \dots, g_{i-1}, g_i g_{i+1}, g_{i+2}, \dots, g_{p+1}) \\
				&   & + (-1)^{p+1} f(g_1, \dots, g_p) .
\end{eqnarray*}
In the case of $\smallint \omega$, this becomes simply:
\[ d \smallint \omega(g,h) = \smallint \omega(h) - \smallint \omega(gh) + \smallint \omega(g). \]

Finally, check that this coboundary is zero when $\omega$ is a cocycle, and
hence that $\smallint \omega$ is a cocycle whenever $\omega$ is. If $g =
\exp(X)$ and $h = \exp(Y)$, we have 
\[ gh = \exp(X) \exp(Y) = \exp(X + Y + \half[X,Y] + \cdots) \] 
by the Baker--Campbell--Hausdorff formula, and thus:
\[ d \smallint \omega(g, h) = \omega(Y) - \omega(X + Y + \half[X,Y] + \cdots) + \omega(X) = 0 \]
where we have used $\omega$'s linearity along with the cocycle condition that
$\omega$ must vanish on brackets.

\subsection*{2-cochains} \label{sec:2-cochains}

As we have just seen, 0-cochains and 1-cochains are easily integrated on any
exponential Lie group, and the result is always a polynomial Lie group cochain.
Unfortunately, even for 2-cochains, the integration is much more complicated,
and no longer polynomial unless $\g$ is nilpotent. So, at this point, we will
simplify matters by assuming $\g$ to be 2-step nilpotent. To hint at this with
our notation, we will now call our Lie algebra $\n$ and the corresponding
simply-connected Lie group $N$. 

Let $\omega$ be a Lie algebra 2-cochain: that is, a left-invariant 2-form. We
define a Lie group 2-cochain $\smallint \omega$ by integrating $\omega$ over
2-simplices in $N$. In particular:
\[ \smallint \omega(g, h) = \int_{[1,g,gh]} \omega. \]
Now suppose $g = \exp(X)$ and $h = \exp(Y)$. Recall we that obtain the
2-simplex $[1, g, gh]$ using the apex-base construction: we connect each point
of the base $[g,gh] = g[1,h]$ to 1 by the exponential map. Since $[1,h](t) =
\exp(tY)$, the base is parameterized by 
\[ [g,gh](t) = g \exp(tY) = \exp(X + tY + \frac{t}{2} [X,Y]) \]
by the Baker--Campbell--Hausdorff formula. Now let us construct $[1,g,gh]$ by
first constructing a map from the square
\[ \varphi \maps [0,1] \times [0,1] \to N \]
given by 
\[ \varphi(s,t) = \exp(s(X + tY + \frac{t}{2} [X,Y])). \]
At this stage in our general construction, since this map is 1 on the $\{0\}
\times [0,1]$ edge of the square, we would typically quotient the square out by
this edge to obtain a map from the standard 2-simplex. But in practice, we do
not need to do this. Since the integral $\int_{[1,g,gh]} \omega$ is invariant
under reparameterization, we might as well parameterize our 2-simplex
$[1,g,gh]$ with $\varphi$ and integrate over the square to obtain:
\[ \smallint \omega(g,h) = \int_0^1 \int_0^1 \omega(\frac{\partial \varphi}{\partial s}, \frac{\partial \varphi}{\partial t}) \, ds \, dt. \]
Our task has essentially been reduced to computing the partial derivatives of
$\varphi$. Thanks to the left invariance of $\omega$, we may as well
left translate these partials back to 1 once we have them, since:
\[ \omega(\frac{\partial \varphi}{\partial s}, \frac{\partial \varphi}{\partial t}) = \omega(\varphi^{-1} \frac{\partial \varphi}{\partial s}, \varphi^{-1} \frac{\partial \varphi}{\partial t}). \]

Let us begin with $\frac{\partial \varphi}{\partial s}$. Since the exponent of
$\varphi(s,t) = \exp(s(X + tY + \frac{t}{2} [X,Y]))$ is linear in $s$, this is
simply:
\[ \frac{\partial \varphi}{\varphi s}(s,t) = \varphi(s,t)(X + tY \frac{t}{2} [X, Y]). \]
This is a tangent vector at $\varphi(s,t)$. We can left translate it back to 1 to obtain:
\[ \varphi^{-1} \frac{\partial \varphi}{\partial s} = X + tY \frac{t}{2} [X, Y]). \]

The partial with respect to $t$ is slightly harder, because the exponent is not
linear in $t$. To compute this, we need the Zassenhaus formula, Formula
\ref{eqn:zassenhaus}, to separate the terms linear in $t$ from those that are
not. Applying this, we obtain
\[ \varphi(s,t) = \exp(sX) \exp(stY + \frac{st}{2} [X,Y] -\frac{s^2 t}{2}[X,Y]). \]
Differentiating this with respect to $t$ and left translating the result to 1, we get:
\[ \varphi^{-1} \frac{\partial \varphi}{\partial t} = sY + \frac{s - s^2}{2} [X,Y]. \]
Substituting these partial derivatives into the integral, our problem becomes:
\[ \smallint \omega(g, h) = \int_0^1 \int_0^1 \omega(X + tY + \frac{t}{2} [X,Y], \, sY + \frac{s - s^2}{2} [X,Y]) \, ds \, dt. \]
It is now easy enough, using $\omega$'s bilinearity and antisymmetry, to bring
all the polynomial coefficients out and integrate them, obtaining an expression
which is the sum of three terms:
\[ \smallint \omega(g, h) = \half \omega(X,Y) + \frac{1}{12} \omega(X, [X,Y]) - \frac{1}{12} \omega(Y, [X,Y]). \]

Nevertheless, we would like to do this calculation explicitly. In essence, we
use $\omega$'s bilinearity and antisymmetry to our advantage, to write these
coefficients as integrals of various determinants. To wit, the coefficent of
$\omega(X,Y)$ is the integral of the determinant
\[ \left| \begin{array}{cc} 
	1 & t \\ 
	0 & s
\end{array} \right|
= s,
\]
which we obtain from reading off the coefficients of $X$ and $Y$ in the integrand:
\[ \omega(X + tY + \frac{t}{2} [X,Y], \, sY + \frac{s - s^2}{2} [X,Y]). \]
So the coefficient of $\omega(X,Y)$ is $\int_0^1 \int_0^1 s \, ds \, dt =
\half$. We can use this idea to obtain the other two coefficients as well---the
coefficient of $\omega(X,[X,Y])$ is the integral of the determinant
\[ \left| \begin{array}{cc} 
	1 & \frac{t}{2} \\ 
	0 & \frac{s - s^2}{2} 
\end{array} \right|
= \frac{s - s^2}{2},
\]
which is $\frac{1}{12}$, and the coefficient of $\omega(Y, [X,Y])$ is the
integral of the determinant
\[ \left| \begin{array}{cc} 
	t & \frac{t}{2} \\ 
	s & \frac{s - s^2}{2} 
\end{array} \right|
= -\frac{s^2 t}{2}, 
\]
which is $-\frac{1}{12}$.

As a final check on this calculation, let us again show that when $\omega$ is a
cocycle, so is $\smallint \omega$. We know this must be true by Proposition
\ref{prop:integral}, of course, but when checking it explicitly the cocycle
condition seems almost miraculous. Since this final computation is a bit of a
workout, we tuck it into the proof of the following proposition. It is only a
check, and understanding the calculation is not necessary in light of
Proposition \ref{prop:integral}.

\begin{prop}
	Let $N$ be a simply-connected Lie group whose Lie algebra $\n$ is
	2-step nilpotent. If $\omega$ is a Lie algebra 2-cocycle on $\n$, then
	the Lie group 2-cochain on $N$ defined by 
	\[ \smallint \omega(g, h) = \half \omega(X, Y) + \frac{1}{12} \omega(X - Y, [X,Y]), \]
	where $g = \exp(X)$ and $h = \exp(Y)$, is also a cocycle.
\end{prop}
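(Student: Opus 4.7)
The plan is an explicit verification of the inhomogeneous 2-cocycle identity
\[ dF(g_1,g_2,g_3) = F(g_2,g_3) - F(g_1g_2,g_3) + F(g_1,g_2g_3) - F(g_1,g_2) = 0, \]
where $F = \smallint\omega$ and $g_i = \exp(X_i)$ for $X_1,X_2,X_3 \in \n$. Since $\n$ is 2-step nilpotent, the Baker--Campbell--Hausdorff formula truncates to $g_i g_j = \exp(X_i + X_j + \tfrac{1}{2}[X_i,X_j])$, and every iterated bracket like $[[X_i,X_j],X_k]$ vanishes. First I would substitute these products into the closed-form expression for $F$ given in the statement, being careful to record the extra $\tfrac{1}{2}[X,Y]$ summand appearing in the first slot of $\omega$.

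Next I would expand $dF(g_1,g_2,g_3)$ and sort the resulting terms by total bracket-depth into three groups: (i) \emph{quadratic} terms $\omega(X_i,X_j)$ with no brackets, coming only from the $\tfrac{1}{2}\omega(X,Y)$ piece of each $F$; (ii) \emph{cubic} terms of the form $\omega(X_i,[X_j,X_k])$, coming from the $\tfrac{1}{12}\omega(X-Y,[X,Y])$ piece together with the BCH correction $\tfrac{1}{2}[X_j,X_k]$ appearing inside the leading $\tfrac{1}{2}\omega(\cdot,\cdot)$; and (iii) \emph{quartic} terms $\omega([X_i,X_j],[X_k,X_\ell])$, coming from the BCH correction entering the already-quadratic-in-brackets piece $\tfrac{1}{12}\omega(X-Y,[X,Y])$. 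The quadratic group cancels by bilinearity of $\omega$ alone, which should be a short check.

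The cubic group is where the Lie algebra cocycle hypothesis enters. After collecting coefficients, the cubic contribution should reduce to a multiple of
\[ \omega(X_1,[X_2,X_3]) - \omega(X_2,[X_1,X_3]) + \omega(X_3,[X_1,X_2]), \]
which is precisely (up to sign and antisymmetry) the Chevalley--Eilenberg 3-cocycle condition $d\omega(X_1,X_2,X_3) = 0$, and therefore vanishes. I expect the arithmetic to produce the coefficient $\tfrac{1}{6}$ on each of the three summands, so the identification with $d\omega$ is clean.

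The main obstacle is the quartic group, which at first glance has no reason to cancel, since in a 2-step nilpotent algebra $\omega$ restricted to the (central) derived subalgebra is unconstrained. The trick is to apply $d\omega = 0$ to triples where one of the arguments is itself a bracket: for any $U,V,W \in \n$, the identity $d\omega([U,V],W,W') = 0$ collapses, thanks to 2-step nilpotence, to the single surviving term $-\omega([W,W'],[U,V]) = 0$. Applying this observation to the specific pairs $([X_1,X_2],[X_1,X_3])$, $([X_1,X_2],[X_2,X_3])$, $([X_2,X_3],[X_1,X_2])$ and $([X_2,X_3],[X_1,X_3])$ that actually appear, one sees that each quartic term is individually forced to be zero. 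Assembling the three groups gives $dF(g_1,g_2,g_3) = 0$, completing the check.
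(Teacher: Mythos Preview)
Your proposal is correct and follows essentially the same approach as the paper: both compute $dF(g_1,g_2,g_3)$ directly via BCH, kill the quartic terms using the observation that $d\omega([U,V],W,W')=0$ together with 2-step nilpotence forces $\omega([U,V],[W,W'])=0$, and then use the Lie algebra cocycle condition on the remaining cubic terms. Your bookkeeping is slightly cleaner---you recognize the cubic residue as exactly $\tfrac{1}{6}\,d\omega(X_1,X_2,X_3)$, whereas the paper applies the cocycle identity twice in an ad hoc way---but the substance is the same.
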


\begin{proof}

As already noted, this fact is immediate from Proposition \ref{prop:integral},
but we want to ignore this and check it explicitly. To do this, we
repeatedly use the Baker--Campbell--Hausdorff formula, the assumption that $\n$
is 2-step nilpotent, and the cocycle condition on $\omega$. This latter
condition reads:
\[ d \omega(X,Y,Z) = -\omega([X,Y], Z) + \omega([X,Z],Y) - \omega([Y,Z],X) = 0. \]
Note how this resembles the Jacobi identity. We prefer to write it as follows:
\[ \omega(X, [Y,Z]) = \omega([X,Y],Z) + \omega(Y, [X,Z]). \]

To begin, the coboundary of the inhomogeneous Lie group 3-cochain $\smallint
\omega$ is given by:
\[ d \smallint \omega(g, h, k) = \smallint \omega(h, k) - \smallint \omega(gh, k) + \smallint \omega(g, hk) - \smallint \omega(g, h) . \]
Let us assume that
\[ g = \exp(X), \quad h = \exp(Y), \quad k = \exp(Z), \]
so that 
\[ gh = \exp(X + Y + \half [X,Y]), \quad hk = \exp(Y + Z + \half[Y,Z]). \]
Now we repeatedly insert the expression for our Lie group 2-cochain, so the
coboundary of $\smallint \omega$ becomes:
\begin{eqnarray*}
	d \smallint \omega(g, h, k) & = & \half \omega(Y, Z) + \frac{1}{12} \omega(Y - Z, [Y,Z]) \\
				    &   & - \half \omega(X + Y + \half[X,Y], Z) - \frac{1}{12} \omega(X + Y + \half [X,Y] - Z, [X + Y, Z]) \\
				    &   & + \half \omega(X, Y + Z + \half [Y,Z] ) + \frac{1}{12} \omega(X - Y - Z - \half [Y,Z] , [X,Y + Z]) \\
				    &   & - \half \omega(X, Y) - \frac{1}{12} \omega(X - Y, [X,Y]), 
\end{eqnarray*}

Note that the cocycle condition combined with nilpotency implies that any term
in which $\omega$ eats two brackets vanishes. In general,
\[ \omega([X,Y], [Z,W]) = \omega([ [X,Y], Z], W) + \omega(Z, [ [X, Y], W]) = 0, \]
thanks to the fact that brackets of brackets vanish. So, in the expression for
$d \smallint \omega$, we can simplify the fourth term:
\begin{eqnarray*}
	\omega(X + Y + \half [X,Y] - Z, [X + Y, Z]) & = & \omega(X + Y - Z, [X + Y, Z]) + \half \omega([X,Y], [X + Y, Z]) \\
	                                            & = & \omega(X + Y - Z, [X + Y, Z]).
\end{eqnarray*}
Similarly for the sixth term:
\[ \omega(X - Y - Z - \half [Y,Z] , [X,Y + Z]) = \omega(X - Y - Z, [X,Y + Z]). \]
This leaves us with:
\begin{eqnarray*}
	d \smallint \omega(g, h, k) & = & \half \omega(Y, Z) + \frac{1}{12} \omega(Y - Z, [Y,Z]) \\
				    &   & - \half \omega(X + Y + \half[X,Y], Z) - \frac{1}{12} \omega(X + Y - Z, [X + Y, Z]) \\
				    &   & + \half \omega(X, Y + Z + \half [Y,Z] ) + \frac{1}{12} \omega(X - Y - Z, [X,Y + Z]) \\
				    &   & - \half \omega(X, Y) - \frac{1}{12} \omega(X - Y, [X,Y]), 
\end{eqnarray*}
Expanding this using bilinearity, we obtain, after many cancellations:
\begin{eqnarray*}
	d \smallint \omega(g, h, k) & = & - \frac{1}{4} \omega([X,Y],Z) - \frac{1}{12} \omega(X, [Y,Z]) - \frac{1}{12} \omega(Y,[X,Z])  \\
                                    &   & + \frac{1}{4} \omega(X,[Y,Z]) - \frac{1}{12} \omega(Y, [X,Z]) - \frac{1}{12} \omega(Z,[X,Y])  .
\end{eqnarray*}
We combine the two terms with coefficient $1/4$ using the cocycle condition:
\[ -\omega([X,Y],Z) + \omega(X, [Y,Z]) = \omega([Y,X],Z) + \omega(X,[Y,Z]) = \omega(Y,[X,Z]). \]
Similarly, for the first and fourth terms with coefficent $1/12$, we apply the
cocycle condition to get:
\[ \omega(X,[Y,Z]) + \omega(Z,[X,Y]) = \omega(Y, [X, Z]). \]
So, substituting these in, we finally obtain:
\[ d \smallint \omega(g, h, k) = \frac{1}{4} \omega(Y, [X,Z]) - \frac{1}{12} \omega(Y,[X,Z]) - \frac{1}{12} \omega(Y, [X,Z]) - \frac{1}{12} \omega(Y,[X,Z]) = 0, \]
as desired.

\end{proof}

As a corollary, note that we could equally well have said:

\begin{cor}
	Let $N$ be a simply-connected Lie group whose Lie algebra $\n$ is
	2-step nilpotent. If $\omega$ is a Lie algebra 2-cocycle on $\n$, then
	the Lie group 2-cochain on $N$ defined by 
	\[ \smallint \omega(g, h) = \int_0^1 \int_0^1 \omega(X + tY + \frac{t}{2} [X,Y], \, sY + \frac{s - s^2}{2} [X,Y]) \, ds \, dt, \]
	where $g = \exp(X)$ and $h = \exp(Y)$, is also a cocycle.
\end{cor}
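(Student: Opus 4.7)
The Corollary is essentially a restatement of the Proposition that immediately precedes it in the excerpt, which is already established. The plan is therefore to verify that the double integral evaluates to the polynomial expression $\frac{1}{2}\omega(X,Y) + \frac{1}{12}\omega(X - Y, [X,Y])$ appearing in the Proposition; once that is done, the Proposition concludes the argument.

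First I would expand the integrand $\omega(X + tY + \frac{t}{2}[X,Y], \, sY + \frac{s - s^2}{2}[X,Y])$ using bilinearity of $\omega$ into six summands. Antisymmetry of $\omega$ kills the two summands with repeated arguments, namely $\omega(tY, sY)$ and $\omega(\frac{t}{2}[X,Y], \frac{s - s^2}{2}[X,Y])$, leaving four nonzero contributions whose $(s,t)$-coefficients are $s$, $\frac{s - s^2}{2}$, $\frac{ts}{2}$, and $\frac{t(s - s^2)}{2}$, attached respectively to $\omega(X, Y)$, $\omega(X, [X,Y])$, $\omega([X,Y], Y)$, and $\omega(Y, [X,Y])$.

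Next I would compute the four elementary double integrals $\int_0^1 \!\!\int_0^1 \cdots\, ds\, dt$, obtaining the values $\frac{1}{2}$, $\frac{1}{12}$, $\frac{1}{8}$, and $\frac{1}{24}$. Applying antisymmetry to rewrite $\omega([X,Y], Y) = -\omega(Y, [X,Y])$ and combining the two contributions to $\omega(Y, [X,Y])$ gives a net coefficient of $-\frac{1}{8} + \frac{1}{24} = -\frac{1}{12}$. Assembling the pieces yields $\frac{1}{2}\omega(X,Y) + \frac{1}{12}\omega(X, [X,Y]) - \frac{1}{12}\omega(Y, [X,Y])$, which is exactly the polynomial $\frac{1}{2}\omega(X,Y) + \frac{1}{12}\omega(X - Y, [X,Y])$ from the Proposition.

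The only step requiring care is the sign bookkeeping when combining the two summands that contribute to $\omega(Y, [X,Y])$, since antisymmetry flips one of them before they merge; apart from that, the computation is entirely routine, and the cocycle conclusion follows by direct appeal to the preceding Proposition.
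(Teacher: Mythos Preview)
Your proposal is correct and follows essentially the same approach as the paper: evaluate the double integral by expanding via bilinearity and antisymmetry to recover the polynomial expression $\tfrac{1}{2}\omega(X,Y) + \tfrac{1}{12}\omega(X-Y,[X,Y])$, then invoke the preceding Proposition. The paper organizes the coefficient computation using $2\times 2$ determinants rather than listing terms directly, but this is only a presentational difference.
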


\begin{proof}
	By our calculation in this section,
	\[ \smallint \omega(g, h) = \half \omega(X, Y) + \frac{1}{12} \omega(X - Y, [X,Y]), \]
	so the result is immediate.
\end{proof}

\subsection*{3-cochains} \label{sec:3-cochains}

Let $\omega$ be a 3-cochain on the Lie algebra: that is, a left-invariant
3-form. Judging by our experience in the last section, the complexity of
integrating $\omega$ to a Lie group 3-cochain may be quite high. Indeed, we
shall ultimately avoid writing down $\smallint \omega$, except as an integral.
Nonetheless, we can make this integral quite explicit.

We define the Lie group 3-cochain $\smallint \omega$ to be the integral of
$\omega$ over a 3-simplex in $N$. In particular:
\[ \smallint \omega(g, h, k) = \int_{[1, g, gh, ghk]} \omega. \]
Now assume that $g = \exp(X)$, $h = \exp(Y)$ and $k = \exp(Z)$. Recall we that obtain the
3-simplex $[1, g, gh, ghk]$ using the apex-base construction: we connect each point
of the base $[g,gh,ghk] = g[1,h, hk]$ to 1 by the exponential map. In the last
section, we saw that $[1,h,hk](t,u) = \exp(t(Y + uZ + \frac{u}{2} [Y,Z]))$, so
the base is parameterized by 
\begin{eqnarray*}
	[g,gh,ghk](t,u) & = & g \exp(t(Y + uZ + \frac{u}{2} [Y,Z]) \\
	                & = & \exp(X + tY + tuZ + \frac{tu}{2} [Y,Z] + \half[X, tY + tuZ]),
\end{eqnarray*}
by the Baker--Campbell--Hausdorff formula. Now let us construct $[1,g,gh,ghk]$
by first constructing a map from the cube
\[ \varphi \maps [0,1] \times [0,1] \times [0,1] \to N \]
given by 
\begin{eqnarray*}
	\varphi(s,t,u) & = & \exp(s(X + tY + tuZ + \frac{tu}{2} [Y,Z] + \half[X, tY + tuZ])) \\
	               & = & \exp(sX + stY + stuZ + \frac{st}{2}[X, Y] + \frac{stu}{2} [Y,Z] + \frac{stu}{2} [X,Z]).
\end{eqnarray*}
At this stage in our general construction, since this map is 1 on the $\{0\}
\times [0,1] \times [0,1]$ face of the cube and on the lines $\{s\} \times
\{0\} \times [0,1]$ of constant $s$ on the $[0,1] \times {0} \times [0,1]$ face
of the cube, we could quotient the cube out by these sets to obtain a map from
the standard 3-simplex. But in practice, we do not need to do this. Since the
integral $\int_{[1,g,gh,ghk]} \omega$ is invariant under reparameterization, we
might as well parameterize our 3-simplex $[1,g,gh,ghk]$ with $\varphi$ and
integrate over the cube to obtain:
\[ \smallint \omega (g,h,k) = \int_0^1 \int_0^1 \int_0^1 \omega(\frac{\partial \varphi}{\partial s}, \frac{\partial \varphi}{\partial t}, \frac{\partial \varphi}{\partial u}) \, ds \, dt \, du . \]
Once again, our task has essentially reduced to computing the partial
derivatives of $\varphi$, and once again, thanks to the left invariance of
$\varphi$, we may as well left translate these partials back to 1 once we have
them, since:
\[ \omega(\frac{\partial \varphi}{\partial s}, \frac{\partial \varphi}{\partial t}, \frac{\partial \varphi}{\partial u}) = \omega(\varphi^{-1} \frac{\partial \varphi}{\partial s}, \varphi^{-1} \frac{\partial \varphi}{\partial t}, \varphi^{-1} \frac{\partial \varphi}{\partial u}). \]

Let us begin with $\frac{\partial \varphi}{\partial s}$. Since the exponent of
$\varphi(s,t,u)$ is linear in $s$, this is simply:
\[ \frac{\partial \varphi}{\varphi s}(s,t,u) = \varphi(s,t,u)(X + tY + tuZ + \frac{t}{2} [X, Y] + \frac{tu}{2} [Y,Z] + \frac{tu}{2} [X,Z]). \]
This is a tangent vector at $\varphi(s,t,u)$. We can left translate it back to 1 to obtain:
\[ \varphi^{-1} \frac{\partial \varphi}{\varphi s} = X + tY + tuZ + \frac{t}{2} [X, Y] + \frac{tu}{2} [Y,Z] + \frac{tu}{2} [X,Z]. \]

The partial with respect to $t$ is slightly harder, because the exponent is not
linear in $t$. To compute this, we again need the Zassenhaus formula, Formula
\ref{eqn:zassenhaus}, to separate the terms linear in $t$ from those that are
not. Applying this, we obtain
\[ \varphi(s,t,u) = \exp(sX) \exp(stY + stuZ + \frac{st}{2} [X,Y] + \frac{stu}{2}[Y,Z] + \frac{stu}{2}[X,Z] - \half [sX, stY + stuZ]). \]
Differentiating this with respect to $t$ and left translating the result to 1, we get:
\[ \varphi^{-1} \frac{\partial \varphi}{\partial t} = sY + suZ + \frac{s}{2} [X,Y] + \frac{su}{2}[Y,Z] + \frac{su}{2}[X,Z] - \half [sX, sY + suZ], \]
which we can simplify by combining like terms:
\[ \varphi^{-1} \frac{\partial \varphi}{\partial t} = sY + suZ + \frac{s-s^2}{2} [X,Y] + \frac{su}{2}[Y,Z] + \frac{su - s^2 u}{2}[X,Z] . \]

Finally, the partial with respect to $u$ requires that we separate out the
terms linear in $u$, again using the Zassenhaus formula:
\[ \varphi(s,t,u) = \exp(sX +stY + \frac{st}{2}[X,Y]) \exp(stuZ + \frac{stu}{2}[Y,Z] + \frac{stu}{2}[X,Z] - \half [sX + stY, stuZ]). \]
Differentiating this with respect to $u$ and left translating the result to 1, we get:
\[ \varphi^{-1} \frac{\partial \varphi}{\partial u} = stZ + \frac{st}{2}[Y,Z] + \frac{st}{2}[X,Z] - \half [sX + stY, stZ], \]
which we can again simplify by combining like terms:
\[ \varphi^{-1} \frac{\partial \varphi}{\partial u} = stZ + \frac{st - s^2 t^2}{2}[Y,Z] + \frac{st - s^2 t}{2}[X,Z] . \]

Substituting these partial derivatives into the integral, our problem becomes:
\[
\begin{array}{rcrl}
	\smallint \omega(g, h, k) & = & \displaystyle \int_0^1 \int_0^1 \int_0^1 \omega( & \displaystyle X + tY + tuZ + \frac{t}{2} [X, Y] + \frac{tu}{2} [Y,Z] + \frac{tu}{2} [X,Z], \\
	                                                                             & & & \displaystyle sY + suZ + \frac{s-s^2}{2} [X,Y] + \frac{su}{2}[Y,Z] + \frac{su - s^2 u}{2}[X,Z], \\
	                                                                             & & & \displaystyle stZ + \frac{st - s^2 t^2}{2}[Y,Z] + \frac{st - s^2 t}{2}[X,Z] \quad ) \, ds \, dt \, du . \\
\end{array}
\]
This integral is bad enough. Further evaluating this integral is quite a chore
(the answer involves 17 nonzero terms!), so we stop here. We would only like to
give a hint as to how the evaluation could be done. As in the last section,
thanks to $\omega$'s trilinearity and antisymmetry, the coefficients of the
terms in $\smallint \omega(g, h, k)$ are integrals of various determinants. For
instance, the coefficient of $\omega(X,Y,Z)$ is the integral of the $3 \times
3$ determinant
\[ \left| \begin{array}{ccc} 
	1 & t & tu \\
	0 & s & su \\
	0 & 0 & st \\
\end{array} \right|
= s^2 t,
\]
which we obtain from reading off the coefficients of $X$, $Y$ and $Z$ in the
integrand. So the coefficient of $\omega(X, Y, Z)$ in $\smallint \omega(g, h,
k)$ is $\int_0^1 \int_0^1 \int_0^1 s^2 t \, ds \, dt \, du = \frac{1}{6}$. The
other terms may be computed similarly.

Just as we shall not attempt to evaluate the integral for $\smallint \omega(g,
h, k)$, we also do not attempt to demonstrate that it gives a Lie group cocycle
when $\omega$ is a Lie algebra cocycle. After all,
Proposition~\ref{prop:integral} does this for us, so we immediately obtain:

\begin{prop}
	Let $N$ be a simply-connected Lie group whose Lie algebra $\n$ is
	2-step nilpotent. If $\omega$ is a Lie algebra 3-cocycle on $\n$, then
	the Lie group 3-cochain on $N$ given by
	\[
	\begin{array}{rcrl}
		\smallint \omega(g, h, k) & = & \displaystyle \int_0^1 \int_0^1 \int_0^1 \omega( & \displaystyle X + tY + tuZ + \frac{t}{2} [X, Y] + \frac{tu}{2} [Y,Z] + \frac{tu}{2} [X,Z], \\
		                                                                             & & & \displaystyle sY + suZ + \frac{s-s^2}{2} [X,Y] + \frac{su}{2}[Y,Z] + \frac{su - s^2 u}{2}[X,Z], \\
		                                                                             & & & \displaystyle stZ + \frac{st - s^2 t^2}{2}[Y,Z] + \frac{st - s^2 t}{2}[X,Z] \quad ) \, ds \, dt \, du , \\
	\end{array}
	\]
	where $g = \exp(X)$, $h = \exp(Y)$ and $k = \exp(Z)$, is also a
	cocycle.
\end{prop}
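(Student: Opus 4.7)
The plan is to prove $G$-invariance of $\smallint\omega$ at the level of $A$-points for every Grassmann algebra $A$, since by Proposition~\ref{prop:superintegrating} the supergroup cochain $\smallint\omega$ is determined by its components $(\smallint\omega)_A$, and invariance under the action of $G$ on $H$ is, by definition, invariance of each component under the action of $G_A$ on $H_A$. Fixing $A$, picking $A$-points $g\in G_A$ and $h_1,\ldots,h_p\in H_A$, and writing the $G_A$-action exponentially as $h\mapsto h^g$, we want
\[
(\smallint\omega)_A(h_1^g,\ldots,h_p^g)
=\int_{[1,\,h_1^g,\,h_1^g h_2^g,\,\ldots,\,h_1^g\cdots h_p^g]}\omega_A
\stackrel{?}{=}\int_{[1,\,h_1,\,h_1 h_2,\,\ldots,\,h_1\cdots h_p]}\omega_A.
\]
Since $G_A$ acts on $H_A$ by group automorphism, $h_1^g h_2^g\cdots h_k^g=(h_1 h_2\cdots h_k)^g$, so the vertices on the left are exactly the $g$-translates of the vertices on the right. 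The proof therefore reduces to two claims: (i) the entire $p$-simplex in $H_A$ is the $g$-translate of the original $p$-simplex, not merely its vertices; and (ii) $\omega_A$ is invariant under the pullback along the $g$-action, so the resulting integrals agree.

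For (i), the key fact is that the standard left-invariant notion of simplices in $H$ is built by iterated use of the exponential map $\exp_A\colon\h_A\to H_A$ via the apex-base construction, and $g$ acts on $H_A$ by a diffeomorphism that commutes with $\exp_A$: the induced $A_0$-linear action $\h_f\colon \h_A\to\h_A$ is the derivative of the action on $H_A$, so $\exp_A(g\cdot X)=(\exp_A X)^g$. I would then prove equivariance of the simplices,
\[
[h_0^g,h_1^g,\ldots,h_p^g]=[h_0,h_1,\ldots,h_p]^g,
\]
by induction on $p$. The base case $p=0$ is trivial, as the 0-simplex is just its vertex. For the inductive step, the apex-base construction defines the based $p$-simplex $[1,h_1^g,\ldots,h_p^g]$ by sweeping out paths $\exp_A(\ell(t)X_A^g)$ from the apex $1$ to each point $\exp_A(X_A^g)$ of the base $[h_1^g,\ldots,h_p^g]$, where $X_A^g=\h_f(X_A)$; but each such path is the $g$-translate of the corresponding path $\exp_A(\ell(t)X_A)$ in $[1,h_1,\ldots,h_p]$, because $g$ fixes $1$ and commutes with $\exp_A$ and with scalar multiplication in $\h_A$. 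Left translation of the based simplices by $h_0$ (and observing that $(h_0 q)^g=h_0^g q^g$) extends this to general $p$-simplices.

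For (ii), the $G$-invariance of $\omega$ at the Lie superalgebra level passes to the $A$-points: for $g\in G_A$, the adjoint-like action $\Ad_A(g)\colon \h_A\to\h_A$ is the derivative at $1$ of the diffeomorphism $h\mapsto h^g$ of $H_A$, and pulling $\omega_A$ (regarded as a left-invariant $A_0$-valued $p$-form on $H_A$) back along this diffeomorphism gives $\Ad_A(g)^*\omega_A=\omega_A$ by the assumed $G$-invariance of $\omega$ combined with left-invariance. Change of variables in the integral then yields
\[
\int_{[1,h_1,\ldots,h_1\cdots h_p]^g}\omega_A
=\int_{[1,h_1,\ldots,h_1\cdots h_p]}\Ad_A(g)^*\omega_A
=\int_{[1,h_1,\ldots,h_1\cdots h_p]}\omega_A,
\]
which combined with step (i) gives the desired equality $(\smallint\omega)_A(h_1^g,\ldots,h_p^g)=(\smallint\omega)_A(h_1,\ldots,h_p)$. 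Since this holds for every $A$, the supergroup cochain $\smallint\omega$ is $G$-invariant.

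The main obstacle will be carefully tracking the equivariance through the apex-base construction: one must check that the smoothing factor $\ell$ and the quotient from the prism to $\Delta^p$ play nicely with the $G_A$-action, and that the $A_0$-linearity assumptions on the action (which come from $G$ being a Lie supergroup acting by automorphism on the supermanifold $H$) are strong enough to ensure $\exp_A$ is $G_A$-equivariant in the strong pointwise sense used above. Given the inductive structure that worked in the analogous (non-super) result, however, this is essentially bookkeeping rather than a conceptual difficulty, and no new input from supergeometry beyond Proposition~\ref{prop:nilpotentsupergroup} is required.
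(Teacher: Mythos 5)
Your proposal does not prove the statement in question: it proves a different proposition from the paper. The statement here is about an ordinary (non-super) simply-connected Lie group $N$ with 2-step nilpotent Lie algebra, and asserts that a specific, explicitly written triple integral over the unit cube defines a Lie group 3-\emph{cocycle} whenever $\omega$ is a Lie algebra 3-cocycle. Your argument instead establishes $G$-invariance of $\smallint\omega$ on an exponential supergroup $H$ acted on by a supergroup $G$, working with Grassmann algebras and $A$-points throughout --- that is the content of the proposition in Section \ref{sec:finale} used to extend $\smallint\alpha$ from the supertranslation group to the Poincar\'e supergroup. Nowhere do you address the cocycle condition $d\smallint\omega=0$, nor the explicit integrand, nor does the statement involve any group action, supergeometry, or invariance hypothesis on $\omega$.

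What the statement actually requires is two things. First, one must identify the displayed triple integral with $\int_{[1,g,gh,ghk]}\omega$, where $[1,g,gh,ghk]$ is the standard left-invariant 3-simplex: one parameterizes this simplex by a map $\varphi\maps[0,1]^3\to N$ built from the apex-base construction, computes the left-translated partial derivatives $\varphi^{-1}\partial\varphi/\partial s$, $\varphi^{-1}\partial\varphi/\partial t$, $\varphi^{-1}\partial\varphi/\partial u$ using the Baker--Campbell--Hausdorff and Zassenhaus formulas (this is where 2-step nilpotency enters, making the exponents polynomial and producing exactly the three arguments appearing in the integrand), and uses reparameterization invariance to replace the simplex by the cube. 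Second, one invokes Proposition \ref{prop:integral}: since $\smallint$ is a cochain map (by Stokes' theorem applied to the left-invariant form $\omega$ over left-invariant simplices), it carries Lie algebra cocycles to Lie group cocycles. This second step is the entirety of the paper's proof, the first step having been carried out in the surrounding text. Your proposal supplies neither step.
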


\end{document}